\newtheorem{thm}[subsection]{Theorem}
\newtheorem{lemma}[subsection]{Lemma}
\newtheorem{pro}[subsection]{Proposition}
\newtheorem{cor}[subsection]{Corollary}
\theoremstyle{definition}
\newtheorem{rk}[subsection]{Remark}
\numberwithin{equation}{section}
\def\s{\sigma}
\def\s{\sigma}
\def\s{\sigma}
\def\R{\mathbb{R}}
\begin{document}
\title[Extreme Gibbs measures for the Potts model]{
Fuzzy transformations and extremality of Gibbs measures
for the Potts model on a Cayley tree
}

\author{C. Kuelske, U. A. Rozikov}

\address{C.\ Kuelske\\ Fakult\"at f\"ur Mathematik,
Ruhr-University of Bochum, Postfach 102148,\,
44721, Bochum,
Germany}
\email {Christof.Kuelske@ruhr-uni-bochum.de}

\address{U.\ A.\ Rozikov\\ Institute of mathematics,
29, Do'rmon Yo'li str., 100125, Tashkent, Uzbekistan.}
\email {rozikovu@yandex.ru}

\begin{abstract} We continue our study of the full set of translation-invariant splitting Gibbs measures
(TISGMs, translation-invariant tree-indexed Markov chains) for the $q$-state Potts model on a Cayley tree.
In our previous work \cite{KRK} we gave a full description of the TISGMs, and showed
 in particular that at sufficiently low temperatures their number is  $2^{q}-1$.
 In this paper we find some regions for the temperature parameter ensuring that a given TISGM is (non-)extreme
 in the set of all Gibbs measures.
 In particular we show the existence of a temperature interval for which there are at least
$2^{q-1} + q$ extremal TISGMs.
 For the Cayley tree of order two we give explicit formulae and some numerical values.
\end{abstract}
\maketitle

{\bf Mathematics Subject Classifications (2010).} 82B26 (primary);
60K35 (secondary)

{\bf{Key words.}} Potts model, Critical temperature, Cayley tree,
Gibbs measure, extreme measure, reconstruction problem.

\section{Introduction}

For the $q$-state Potts model on a Cayley tree of order $k\geq 2$ it has been known for a long time that
at sufficiently low temperatures there are at least $q+1$ translation-invariant Gibbs measures
which are also tree-indexed Markov chains \cite{Ga8}, \cite{Ga9}.
Such translation-invariant
tree-indexed measures are equivalently called  translation-invariant splitting Gibbs measures (TISGMs).
One of the $q+1$ well-known measures mentioned above is obtained as
the infinite-volume limit of finite dimensional Gibbs measures with free
boundary conditions and each of the remaining $q$ measures is obtained as
the limit with homogeneous (constant) spin-configurations
as boundary conditions (see the next section for more details).

It is known (see, e.g., \cite{Ge}) that for all
$\beta >0$, the Gibbs measures  form a non-empty
convex compact set in the space of probability measures.
Extreme measures, i.e. extreme points of this set are
associated with pure phases.
Furthermore, any Gibbs measure is an integral of
extreme ones (the extreme decomposition).

The $q$ measures
with homogeneous boundary conditions are always extremal in the set of all Gibbs-measures \cite{Ga8}, \cite{Ga9},
the free boundary condition measure is  an extremal Gibbs measure only
in an intermediate temperature interval below the transition temperature,
and loses its extremality for even lower temperatures \cite[Theorem 5.6.]{Ro}.

The non-trivial problem to determine the precise transition
value is also equivalent to a reconstruction problem in information theoretic
language.
Indeed, the reconstruction problem is well-defined for
any tree-indexed Markov chain defined in terms of a given
transition matrix. The states we will consider are specific examples of such Markov chains.
Single out a root vertex and
denote by $\Pi^a_n$ the distribution of spins at distance
$n$ from the root which is obtained by conditioning the spin at the root to be $a$.
One says that the measure is reconstructable
if there is a pair of symbols $a,b$ such that the total variational distance
between $\Pi^a_n$ and $\Pi^b_n$ stays bounded against zero as the distance
$n$ tends to infinity. In other words, some information about the symbol at the root survives
a broadcasting even over arbitrary far distances,
and this property is equivalent to non-triviality of the tail-sigma
algebra, see \cite{Mos}. It is well-known that  non-triviality of the tail-sigma
algebra is equivalent to non-extremality.

While the Kesten-Stigum bound gives a sufficient (but not necessary) bound in terms of the second largest eigenvalue of
 the transition matrix of the chain (and hence the temperature parameter) which ensures non-extremality \cite{Ke},\cite{Sly11}, the opposite problem,
namely to ensure extremality in terms of bounds on parameters, is more difficult, and the known methods
usually do not lead to sharp results for tree indexed Markov chains.
This is true even in simple cases like the general asymmetric binary channel
(or Ising model in a magnetic field).  For the Potts model in zero field it was proved in \cite{Sly11}
that the Kesten-Stigum bound for the free boundary condition measure
is not sharp for $q\geq 5$. For some numerical investigations
see \cite{MM}.

For other results related to the Potts model and TISGMs on Cayley trees see \cite{Br1},
\cite{Ha96}, \cite{HaKu04}, \cite{KRK}, \cite{Ro} and the references therein.

Now, in \cite{KRK}  all TISGMs (tree-indexed Markov chains) for the Potts model are found on the Cayley tree
of order $k\geq 2$, and it is shown that at sufficiently low
temperatures their number is  $2^{q}-1$. In the case $k=2$ the explicit formulae for the
critical temperatures and all TISGMs are given.

The analysis was based on the classification
 of translation-invariant boundary laws  which are in one-to-one correspondence with the TISGMs.
The boundary laws are length-$q$ vectors with positive entries which satisfy a
non-linear fixed-point equation (tree recursion), and a given boundary law defines
the transition matrix of the corresponding Markov chain (see the next section for detailed definitions).
 It is an interesting but different issue
to determine natural sets of boundary conditions of
{\em spin configurations} for which the new states appear as finite volume limits. When chosen
deterministically, such boundary conditions could be periodic but non-translation invariant,
or they could be chosen from a natural measure to be constructed.

While the fact that these measures can never be nontrivial convex combinations of each other
(i.e. extremal in the set of all TISGMs)
is almost automatic (see  \cite[Theorem 2]{KRK}) it is not clear whether and when they are extremal
in the set of all Gibbs measures, including the non-translation invariant Gibbs measures.

In particular, from \cite{KRK} it was not clear yet, whether the complete set of TISGMs
contained any new extremal Gibbs measures beyond the known $q+1$ measures, or whether
the new TIGSMs would all be non-extremal Gibbs measures.
It is the purpose of this paper to give answers
and find regions of parameters where a TISGM is (non-)extreme. To ask immediately for precise transition values,
would be too much to ask for, in view of the known difficulty of the aforementioned reconstruction problem
already for the free boundary condition measure.

The paper is organized as follows. Section 2 contains preliminaries (necessary definitions
and facts).
All of our analysis relies heavily on the following useful fact: While the full permutation symmetry of
the free boundary condition measure is lost in general,
all the $q\times q$ transition matrices which arise in the description of the TISGMs possess a $2\times 2$ block-structure.
All possible sizes of blocks can appear, and labels within the blocks are equivalent.  This
corresponds to a decomposition of the $q$ spin-labels into two classes $1'$ and $2'$, one of $m$ elements,
the other one of $q-m$ elements, with $m\leq q/2$\footnote{See \cite[Corollary 2]{KRK} for the reason why we can assume that $m\leq q/2$. Indeed, the measures corresponding to $m>q/2$ will coincide, by the symmetry of the model,
with other measures given for $m\leq q/2$.}. Such a structure
invites the study of the associated fuzzy map (see \cite{Ha03}, \cite{HaKu04})
which identifies the spin-values within the classes, and which maps the Markov chain to a coarse-grained
or fuzzy  Markov chain with spin labels $1',2'$. It is interesting to note that this Markov chain can be interpreted as
a splitting Gibbs measure for an effective Ising Hamiltonian with an external field, and that this Hamiltonian is independent of the choice of the Gibbs measure
(within the class indexed by $m$).
Such a result, namely the independence of the Hamiltonian for a transformed (renormalized) Gibbs measure
of the phase, if this renormalized Hamiltonian exists at all,  is known to be true for lattice models \cite{EnFeSo93, Ku04}
with proofs which are based on the Gibbs variational
principle which holds for lattice systems but does not exist on trees.\footnote{
For lattice measures
there is the well-defined notion of relative entropy density
$h(\mu|\nu)$ between two infinite-volume measures.  $h(\mu|\nu)$ is obtained as a finite-volume
limit. It is well-defined if  $\nu$ is a translation-invariant Gibbs measure,
and $\mu$ is any translation-invariant measure.
The part  of the variational principle in the infinite volume which is relevant here
states the following:
If $\nu$ is a Gibbs measure, $\mu$ is a
translation-invariant measure, and $h(\mu|\nu)=0$ then
$\mu$ is also Gibbs measure for the same specification as $\nu$ (see \cite{EnFeSo93}).
On trees boundary-term contributions to the energy for finite subtrees
are of no smaller order than the volume terms. This causes problems for the definition
of the relative entropy density, and the variational principle does not exist.}

Furthermore, \cite{EE} provides an explicit counterexample for such a theorem to hold in full generality, by showing  that the Gibbs property of a time-evolved Ising tree measure does depend  on the phase of the Ising model, in certain parameter regimes.
It would be very interesting nevertheless to understand better to what extent the phenomenon of independence of the renormalized Hamiltonian of the phase for measures on trees we find here might generalize beyond the case we find in our example.

Section 3 is devoted to this fuzzy transformation of the Potts model to an Ising model with an external field.
There we give a decomposition of the Potts measures in terms of their
image measures under fuzzy transformation, and the conditional measure.
We also study relations between eigenvalues
of the relevant transition matrix given by the Potts model
and by the corresponding fuzzy Potts model. We further derive conditions for extremality and non-extremality
for the corresponding coarse-grained Markov chain.

We note that non-extremality of the coarse-grained
chain already implies non-extremality of the original chain.
  Indeed, if the coarse-grained chain
is not extremal then there is a tail-measurable event for the coarse-grained
chain which has probability different from zero or one.
But then the preimage of this event under the coarse-graining
map is a tail-measurable event that
has a probability different from zero or one under the original Potts measure,
which therefore must also be not extremal.


 Let us mention that
sequences of coarse-graining maps to increasingly coarse spaces
for which a temporal interpretation is natural have been considered  in \cite{JK}.

Section 4 is related to non-extremality conditions of a TISGM (for the original model),
where we check  the Kesten-Stigum condition (based on the second largest eigenvalue).
We give explicit formulas for critical parameters for the Kesten-Stigum condition to hold. This provides us with
sufficient conditions on the temperature to see non-extremality.
We cannot expect these conditions to be sharp in general, since the Kesten-Stigum bound is not sharp in most cases,
but often the Kesten-Stigum bound is numerically not very far off.

Section 5 deals with extremality conditions for a TISGM.
There are various approaches
in the literature to finding sufficient conditions for extremality which can be reduced to a finite-dimensional
optimization problem based only on the transition matrix,
see the percolation method
of  \cite{MSW},\cite{Mos2}, the symmetric entropy method of \cite{FK}, or for the binary
asymmetric channel (i.e., corresponding to an Ising model with an external field) the readily available bound of  Martin \cite{Mar}. Different techniques
are used  in \cite{BC} to show sharpness of the Kesten-Stigum bound for an Ising channel with very weak asymmetry.

In this paper we employ the approach of
\cite{MSW}. As in a particular temperature region there are many TISGMs which have different transition matrices which all lie
in a sufficiently small neighborhood of the transition matrix of the free measure,
and we know  that extremality holds for the latter, a continuity (of parameters with respect to temperature) argument
provides us with a lower bound on the number of extremal TISGMs which have to occur.

\section{Preliminaries}

{\it Cayley tree.}
The Cayley tree $\Gamma^k$
of order $ k\geq 1 $ is an infinite tree, i.e., a connected graph without
cycles, such that exactly $k+1$ edges originate from each vertex.
Let $\Gamma^k=(V, L)$ where $V$ is the set of vertices and  $L$ the set of edges.
Two vertices $x$ and $y$ are called {\it nearest neighbors} if there exists an
edge $l \in L$ connecting them.
We will use the notation $l=\langle x,y\rangle$.
A collection of distinct nearest neighbor pairs $\langle x,x_1\rangle, \langle x_1,x_2\rangle,...,\langle x_{d-1},y\rangle$ is called a {\it path} from $x$ to $y$. The distance $d(x,y)$ on the Cayley tree is the number of edges of the shortest path from $x$ to $y$.

For a fixed $x^0\in V$, called the root, we set
\begin{equation*}
W_n=\{x\in V\,| \, d(x,x^0)=n\}, \qquad V_n=\bigcup_{m=0}^n W_m
\end{equation*}
and denote
$$
S(x)=\{y\in W_{n+1} :  d(x,y)=1 \}, \ \ x\in W_n, $$ the set  of {\it direct successors} of $x$.

{\it Potts model.} We consider the Potts model on a Cayley tree,
where to each vertex of the tree
a spin variable is assigned which takes values in the
local state space
$\Phi:=\{1,2,\dots,q\}$.  For $A\subseteq V$ a spin {\it configuration}
$\sigma_A$ on $A$ is defined as a function $$x\in
A\to\sigma_A(x)\in\Phi.$$

The set of all configurations coincides with $\Omega_A=\Phi^{A}$. We denote $\Omega=\Omega_V$ and
$\sigma=\sigma_V.$

The (formal) Hamiltonian of the Potts model is
\begin{equation}\label{ph}
H(\sigma)=-J\sum_{\langle x,y\rangle\in L}
\delta_{\sigma(x)\sigma(y)},
\end{equation}
where $J\in R$ is a coupling constant,
$\langle x,y\rangle$ stands for nearest neighbor vertices and $\delta_{ij}$ is the Kroneker's
symbol:
$$\delta_{ij}=\left\{\begin{array}{ll}
0, \ \ \mbox{if} \ \ i\ne j\\[2mm]
1, \ \ \mbox{if} \ \ i= j.
\end{array}\right.
$$
In the present paper we only consider the case of ferromagnetic interaction $J>0$.

For a finite domain $D\subset V$ with the boundary condition $\varphi_{D^c}$ given on its
complement $D^c=V\setminus D,$ the conditional Hamiltonian is
\begin{equation}H(\sigma_D\big
| \varphi_{D^c})=-J\sum_{{\langle x,y\rangle,\atop x,y\in D}}
\delta_{\sigma(x)\sigma(y)}-J\sum_{{\langle x,y\rangle,\atop x\in D,\, y\in D^c}}
\delta_{\sigma(x)\varphi(y)}.
\end{equation}

{\it Gibbs measure.}
 A probability measure $\mu$ on $(\Omega,{\mathcal B})$ (where ${\mathcal B}$ is the
$\sigma$-algebra generated by cylinder subsets of $\Omega$) is called a {\it Gibbs measure} (with
Hamiltonian $H$) if it satisfies the Dobrushin-Lanford-Ruelle (DLR) equation:
for all finite $D\subset V$ and
$\sigma_D\in\Omega_D$:
\begin{equation}
\mu\left(\left\{\omega\in\Omega:\;
\omega\big|_{D}=\sigma_D\right\}\right)= \int_{\Omega}\mu ({\rm d}\varphi)\nu^{D}_\varphi
(\sigma_D),
\end{equation}
 where $\nu^{D}_{\varphi}$ is the conditional probability (finite-volume Gibbs measure):
\begin{equation}
 \nu^{D}_{\varphi}(\sigma_D)=\frac{1}{Z_{D,\varphi}}\exp\;\left(-\beta H
\left(\sigma_D\big |\,\varphi_{D^c}\right)\right).
\end{equation}
 Here $\beta={1\over T}$, where
 $T>0$ is the temperature and $Z_{D, \varphi}$ stands for the partition function in $D$, with the boundary
condition $\varphi$:
$$Z_{D, \varphi}=
\sum_{{\widetilde\sigma}_D\in\Omega_{D}} \exp\;\left(-\beta H \left({\widetilde\sigma}_D\,\big
|\,\varphi_{D^c}\right)\right).$$

{\it Splitting Gibbs measure.} For $n\geq 1$ we denote by $\sigma_n$ a
configuration $\sigma_n:V_n\to \Phi$, i.e.,  $\sigma_n\equiv \sigma_{V_n}=\{\sigma(x), x\in V_n\}$.

Define a distribution in the volume $V_n$ as
\begin{equation}\label{p*}
\mu_n(\sigma_n)=Z_n^{-1}\exp\left\{-\beta H_n(\sigma_n)+\sum_{x\in W_n}{\tilde h}_{\sigma(x),x}\right\},
\end{equation}
where $Z_n^{-1}$ is the normalizing factor, $\{{\tilde h}_x=({\tilde h}_{1,x},\dots, {\tilde h}_{q,x})\in R^q, x\in V\}$ is a collection of vectors and
$H_n(\sigma_n)$ is the restriction of the Hamiltonian to $V_n$, i.e.
$$H_n(\sigma_n)=-J\sum_{\langle x,y\rangle\in V_n}
\delta_{\sigma(x)\sigma(y)}.$$

We say that the probability distributions (\ref{p*}) are compatible if for all
$n\geq 1$ and $\sigma_{n-1}\in \Phi^{V_{n-1}}$:
\begin{equation}\label{p**}
\sum_{\omega_n\in \Phi^{W_n}}\mu_n(\sigma_{n-1}\vee \omega_n)=\mu_{n-1}(\sigma_{n-1}).
\end{equation}
Here $\sigma_{n-1}\vee \omega_n$ is the concatenation of the configurations $\sigma_{n-1}$ and $\omega_n$.
In this case, by Kolmogorov's well-known extension theorem\footnote{The equality (\ref{p**}) means that
the restriction of $\mu_n$ to the set $\Phi^{V_{n-1}}$ coincides with $\mu_{n-1}$.}, there exists a unique measure $\mu$ on $\Phi^V$ such that,
for all $n$ and $\sigma_n\in \Phi^{V_n}$,
\begin{equation}\label{dlr}
\mu\left(\left\{\omega\in\Omega:\; \omega\big|_{V_n}=\sigma_n\right\}\right)=\mu_n(\sigma_n).
\end{equation}

Such a measure is called a {\it splitting Gibbs measure}\footnote{
To see that a SGM satisfies the DLR equation, we consider any finite volume $D$
and note that for any finite $n$ which is sufficiently large we have
\begin{equation}
\mu_n\left(\left\{\omega\in\Omega:\;
\omega\big|_{D}=\sigma_D\right\}\right)= \int_{\Omega}\mu_n ({\rm d}\varphi)\nu^{D}_\varphi
(\sigma_D),
\end{equation}
which follows from the compatibility property of the finite-volume Gibbs measures.
} (SGM) corresponding to the Hamiltonian (\ref{ph}) and the vector-valued function ${\tilde h}_x, x\in V$.

The following statement describes conditions on ${\tilde h}_x$ guaranteeing compatibility of $\mu_n(\sigma_n)$.

\begin{thm}\label{ep} \label{Theorem1} (see \cite{Ga8}, \cite[p.106]{Ro}) The probability distributions
$\mu_n(\sigma_n)$, $n=1,2,\ldots$, in
(\ref{p*}) are compatible iff for any $x\in V\setminus\{x^0\}$
the following equation holds:
\begin{equation}\label{p***}
 h_x=\sum_{y\in S(x)}F(h_y,\theta),
\end{equation}
where $F: h=(h_1, \dots,h_{q-1})\in R^{q-1}\to F(h,\theta)=(F_1,\dots,F_{q-1})\in R^{q-1}$ is defined as
$$F_i=\ln\left({(\theta-1)e^{h_i}+\sum_{j=1}^{q-1}e^{h_j}+1\over \theta+ \sum_{j=1}^{q-1}e^{h_j}}\right),$$
$\theta=\exp(J\beta)$, $S(x)$ is the set of direct successors of $x$ and $h_x=\left(h_{1,x},\dots,h_{q-1,x}\right)$ with
\begin{equation}\label{hh}
h_{i,x}={\tilde h}_{i,x}-{\tilde h}_{q,x}, \ \ i=1,\dots,q-1.
\end{equation}
\end{thm}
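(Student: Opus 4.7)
The plan is to unfold the compatibility condition \eqref{p**} and read off \eqref{p***} directly. Since on the Cayley tree every $y\in W_n$ has a unique parent in $W_{n-1}$, I can split the Hamiltonian as $H_n(\sigma_{n-1}\vee\omega_n)=H_{n-1}(\sigma_{n-1})-J\sum_{x\in W_{n-1}}\sum_{y\in S(x)}\delta_{\sigma(x),\omega_n(y)}$, and the sum over $\omega_n\in\Phi^{W_n}$ in \eqref{p*} factorises over the vertices of $W_n$. Carrying this out yields
$$\sum_{\omega_n}e^{-\beta H_n(\sigma_{n-1}\vee\omega_n)+\sum_{y\in W_n}\tilde h_{\omega_n(y),y}}=e^{-\beta H_{n-1}(\sigma_{n-1})}\prod_{x\in W_{n-1}}\prod_{y\in S(x)}A_{\sigma(x),y},$$
with $A_{i,y}:=\sum_{j=1}^q e^{\beta J\delta_{ij}+\tilde h_{j,y}}=(\theta-1)e^{\tilde h_{i,y}}+\sum_{j=1}^q e^{\tilde h_{j,y}}$, obtained by separating the $j=i$ term of the Kronecker delta.

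Comparing the above with $Z_{n-1}\mu_{n-1}(\sigma_{n-1})=e^{-\beta H_{n-1}(\sigma_{n-1})+\sum_{x\in W_{n-1}}\tilde h_{\sigma(x),x}}$, condition \eqref{p**} becomes
$$\prod_{x\in W_{n-1}}\prod_{y\in S(x)}A_{\sigma(x),y}=\frac{Z_n}{Z_{n-1}}\,e^{\sum_{x\in W_{n-1}}\tilde h_{\sigma(x),x}}$$
for every $\sigma_{n-1}\in\Phi^{V_{n-1}}$. For a fixed $x\in W_{n-1}$ I take the ratio of the equations obtained by setting $\sigma(x)=i\in\{1,\dots,q-1\}$ and $\sigma(x)=q$, with all other coordinates frozen; this eliminates the $\sigma$-independent factor $Z_n/Z_{n-1}$ and the factors coming from $x'\neq x$, leaving $e^{\tilde h_{i,x}-\tilde h_{q,x}}=\prod_{y\in S(x)}A_{i,y}/A_{q,y}$. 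Dividing numerator and denominator of $A_{i,y}/A_{q,y}$ by $e^{\tilde h_{q,y}}$ and applying \eqref{hh} gives the closed form
$$\frac{A_{i,y}}{A_{q,y}}=\frac{(\theta-1)e^{h_{i,y}}+\sum_{j=1}^{q-1}e^{h_{j,y}}+1}{\theta+\sum_{j=1}^{q-1}e^{h_{j,y}}},$$
and taking logarithms reproduces \eqref{p***} with the stated $F_i$. The converse implication is obtained by running this computation in reverse, inserting the $F_i$ back into the recursion and verifying \eqref{p**} termwise.

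The only genuinely subtle point is the treatment of the gauge freedom: \eqref{p*} is invariant under the replacement $\tilde h_{\cdot,x}\mapsto\tilde h_{\cdot,x}+c(x)$ for arbitrary constants $c(x)\in\mathbb{R}$, so compatibility cannot be expressed as an equation on the $\tilde h_{i,x}$ themselves. It is naturally a recursion on the $(q-1)$-dimensional reduced vector $h_x$, which is precisely what \eqref{hh} singles out by fixing the gauge $h_{q,x}\equiv 0$. Once this reduction is in place the remainder of the proof is essentially bookkeeping around the single one-site computation of $A_{i,y}$, and I expect no further difficulty.
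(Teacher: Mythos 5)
Your argument is correct and is essentially the standard proof of this compatibility criterion as given in the cited references (Ganikhodjaev; Rozikov's book), which the paper itself does not reproduce: factorize the sum over the outer layer $W_n$, compare with $\mu_{n-1}$, and take ratios of spin values $i$ and $q$ at a single site to eliminate the normalization and the gauge constants. Your observation that the condition can only constrain the differences $h_{i,x}=\tilde h_{i,x}-\tilde h_{q,x}$ is exactly the point needed for the converse direction, so nothing is missing.
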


From Theorem \ref{ep} it follows that for any $h=\{h_x,\ \ x\in V\}$
satisfying (\ref{p***}) there exists a unique SGM $\mu$ for the Potts model.

To compare with the literature we remark that the quantities $\exp(\tilde h_{i,x})$ define a boundary
law in the sense of Definition 12.10 in Georgii's book \cite{Ge}. Here one should take the boundary
law $z_{x,y}(i)=\exp(\tilde h_{i,x})$, where $y$ is the unique neighbor of $x$ that lies closer to $x^0$.
Now all the other $z_{x,y}$'s can be
calculated inductively from these using the definition of a boundary law,
and this always yields a unique and well-defined boundary law.

The equation (\ref{p***}) can be written as the equation (12.15) in \cite{Ge} for $z_{x,y}(i)$ normalized at $q$,
i.e, assuming $z_{x,y}(q)\equiv 1$. Note that for a given $h_x=(h_{1,x},\dots,h_{q-1,x})$, assuming
that $\tilde{h}_{q,x}$ is also some given number
one can obtain $\tilde h_{i,x}$, $i=1,\dots,q-1$ from (\ref{hh}). Here the normalization of the
boundary law mentioned above corresponds to the $\tilde{h}_{q,x}\equiv 0$.
Thus by the normalization we reduce the $q$ dimensional vector to the $q-1$ dimensional vector.
One can normalize the $q$ dimensional vector by the normalization at any other coordinate
instead of $q$th coordinate. In this paper for definiteness we only consider the normalization at $q$.

Compare also Theorem 12.2 in \cite{Ge} which describes the connection between boundary laws and finite-volume marginals of
splitting Gibbs measures for general spin models.
Formula (\ref{p*}) is a special case of this, since it was
claimed only for volumes $V_n$ which were assumed to
be balls on the graph. However,
from the defining property of boundary laws as solutions to the fixed point equation
it follows immediately that \eqref{p*} extends to general subtrees $V_n$ of any shape,
where $W_n$ is the inner boundary.
For marginals on the two-site volumes which consist of two adjacent sites $x,y$,
assuming a translation-invariant
boundary law $z$ we have in the case of the Potts model
$$\mu(\s_x=i,\s_y=j)= \frac{1}{Z} z_i \exp(J\beta\delta_{i j}) z_j.$$
From this  the relation between the boundary law and the transition matrix
for the associated tree-indexed Markov chain\footnote{We recall that a tree-indexed Markov chain is defined as follows. Suppose we are given a tree $\mathcal T$ with vertex set $V$, a probability measure $\nu$, and a transition matrix $\mathbb{P}=(P_{ij})_{i,j\in \Phi}$ on the finite set $\Phi=\{1,\dots,q\}$. We can obtain a tree indexed Markov chain $X : V \to \Phi$ by choosing $X(x^0)$ according to $\nu$ and choosing $X(v)$, for each vertex $v\ne x^0$, using the transition probabilities given the value of its parent, independently of everything else. See Definition 12.2 in \cite{Ge} for a detailed definition. In our case the matrix $\mathbb{P}$ is given by formula (\ref{epe2})} (splitting Gibbs measure) is immediately obtained
from the formula of the conditional probability. Indeed, the
corresponding transition matrix giving the probability to go from a state $i$
at site $x$ to a state
to $j$ at the neighbor $y$
is then
\begin{equation}\label{epe2}
P_{ij}={z_j\exp(J\beta\delta_{ij})\over \sum_{r=1}^qz_r\exp(J\beta\delta_{ir})}.
\end{equation}

In the language of boundary fields:
Once we have $\tilde{h}_x$ which is independent of $x$, i.e. $\tilde{h}=(\tilde{h}_1,\dots,\tilde{h}_q)$, then the tree-indexed Markov chain with states
 $\{1,2,\dots,q\}$ is given by the transition matrix $\mathbb P=(P_{ij})$ with
 $$P_{ij}={\exp\left(J\beta\delta_{ij}+\tilde{h}_j\right)\over \sum_{r=1}^q\exp\left(J\beta\delta_{ir}+\tilde{h}_r\right)}.$$

{\it Translation-invariant SGMs.} A translation-invariant splitting Gibbs measure (TISGM) corresponds to a solution $h_x$ of (\ref{p***}) with  $h_x=h=(h_1,\dots,h_{q-1})\in R^{q-1}$ for all $x\in V$. Then from equation (\ref{p***}) we get $h=kF(h,\theta)$, and denoting $z_i=\exp(h_i), i=1,\dots,q-1$, the last equation can be written as
\begin{equation}\label{pt1}
z_i=\left({(\theta-1)z_i+\sum_{j=1}^{q-1}z_j+1\over \theta+ \sum_{j=1}^{q-1}z_j}\right)^k,\ \ i=1,\dots,q-1.
\end{equation}

In \cite{KRK} all solutions of the equation (\ref{pt1}) are given. By these solutions the full set of TISGMs is described. In particular, the following results are obtained which will be the starting point of
the present analysis and which we repeat for convenience of the reader.

\begin{thm}\label{Theorem2}\cite{KRK} For any solution $z=(z_1,\dots,z_{q-1})$ of the system of equations (\ref{pt1}) there exist $M\subset \{1,\dots,q-1\}$ and $z^*>0$ such that
$$z_i=\left\{\begin{array}{ll}
1, \ \ \mbox{if} \ \ i\notin M\\[3mm]
z^*, \ \ \mbox{if} \ \ i\in M.
\end{array}
\right.
$$
\end{thm}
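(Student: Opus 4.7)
My plan is to reduce the system (\ref{pt1}) to a polynomial equation that is the \emph{same} for every component $z_i$, and then show that this polynomial has at most two positive roots.

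First, I would set $S=\sum_{j=1}^{q-1} z_j$, so the denominator on the right-hand side of (\ref{pt1}) is independent of $i$. Taking $k$-th roots, put $t_i := z_i^{1/k}>0$; then (\ref{pt1}) becomes $t_i(\theta+S)=(\theta-1)t_i^k+(S+1)$, i.e.\ each $t_i$ is a positive root of the single polynomial
\begin{equation*}
P(t) := (\theta-1)\,t^k - (\theta+S)\,t + (S+1).
\end{equation*}

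Next I would observe that $P(1) = (\theta-1) - (\theta+S) + (S+1) = 0$, so $t=1$ is always a root. Polynomial division yields
\begin{equation*}
P(t) = (t-1)\,R(t),\qquad R(t) := (\theta-1)\bigl(t^{k-1}+t^{k-2}+\cdots+t\bigr) - (S+1),
\end{equation*}
which is an easy identity to verify by multiplying out. Here $\theta=e^{J\beta}>1$ in the ferromagnetic regime, so $\theta-1>0$ and all coefficients of the polynomial part of $R$ are strictly positive.

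The key step is then to show $R$ has exactly one positive root. Since $R'(t)=(\theta-1)\sum_{j=1}^{k-1} j\,t^{j-1}>0$ for $t>0$, the function $R$ is strictly increasing on $(0,\infty)$; combined with $R(0)=-(S+1)<0$ and $R(t)\to +\infty$ as $t\to\infty$, this gives a unique positive root $t^*$. Consequently, for each $i$ we have $t_i\in\{1,t^*\}$, hence $z_i\in\{1,z^*\}$ with $z^*:=(t^*)^k>0$; setting $M=\{i:z_i=z^*\}$ completes the proof.

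I do not expect any serious obstacles: the whole argument rests on the separation of $S$ (a scalar depending on the solution) from the index $i$, after which positivity of coefficients makes the monotonicity of $R$ automatic. The only minor subtlety is the degenerate case $t^*=1$, in which all $z_i$ equal $1$ and the conclusion still holds vacuously (with either $M=\emptyset$ or $z^*=1$).
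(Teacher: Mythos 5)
Your argument is correct: reducing each component to a root of the single polynomial $P(t)=(\theta-1)t^k-(\theta+S)t+(S+1)$, factoring out the guaranteed root $t=1$, and using monotonicity of the cofactor $R$ (valid since $\theta>1$ and $k\ge 2$, the only case the paper considers) indeed forces $t_i\in\{1,t^*\}$ and hence the claimed dichotomy. The paper itself gives no proof of this statement --- it is imported from \cite{KRK} --- but your derivation is essentially the standard one used there (pass to $k$-th roots, exploit that the denominator depends only on the sum $S$, and count positive roots of the resulting universal polynomial), so nothing further is needed.
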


Thus any TISGM of the Potts model corresponds to a solution of the following equation
\begin{equation}\label{rm}
z=f_m(z)\equiv \left({(\theta+m-1)z+q-m\over mz+q-m-1+\theta}\right)^k,
\end{equation}
for some $m=1,\dots,q-1$.
\begin{rk} We note that for each fixed $m$, the equation (\ref{rm}) has up to
three solutions: $z_0=1, z_i=z_i(\theta,q,m), i=1,2$, with $z_1<z_2$
(see \cite[Step 1 of the proof of Theorem 1]{KRK}).
\end{rk}
Denote
\begin{equation}\label{tm}
\theta_m=1+2\sqrt{m(q-m)}, \ \ m=1,\dots,q-1.
\end{equation}
It is easy to see that
\begin{equation}\label{st}
\theta_m=\theta_{q-m} \ \ \mbox{and} \ \ \theta_1<\theta_2<\dots<\theta_{\lfloor{q\over 2}\rfloor-1}<\theta_{\lfloor{q\over 2}\rfloor}\leq q+1.
\end{equation}

\begin{pro}\label{pw}\cite{KRK} Let $k=2$, $J>0$.
\begin{itemize}
\item[1.]
If $\theta<\theta_1$ then there exists a unique TISGM;
\item[2.]
If $\theta_{m}<\theta<\theta_{m+1}$ for some $m=1,\dots,\lfloor{q\over 2}\rfloor-1$ then there
are  $1+2\sum_{s=1}^m{q\choose s}$ TISGMs which correspond (by Theorem \ref{ep}) to the
solutions $z_r\equiv z_r(\theta, q, s)=x^2_r(s,\theta)$, $r=1,2$, $s=1,\dots,m$ of (\ref{rm}), for $m=s$, i.e. $f_s(z)=z$, where
\begin{equation}\label{s}\begin{array}{ll}
x_{1}(s,\theta)={\theta-1-\sqrt{(\theta-1)^2-4s(q-s)}\over 2s},\\[2mm] x_{2}(s,\theta)={\theta-1+\sqrt{(\theta-1)^2-4s(q-s)}\over 2s}.
\end{array}
\end{equation}
\item[3.] If $\theta_{\lfloor{q\over 2}\rfloor}<\theta\ne q+1$ then there are $2^q-1$ TISGMs;
\item[4] If $\theta=q+1$ the number of TISGMs is as follows
$$\left\{\begin{array}{ll}
2^{q-1}, \ \ \mbox{if} \ \ q \ \ \mbox{is odd}\\[2mm]
2^{q-1}-{q-1\choose q/2}, \ \ \mbox{if} \ \ q \ \ \mbox{is even;}
\end{array}\right.$$
\item[5.] If $\theta=\theta_m$, $m=1,\dots,\lfloor{q\over 2}\rfloor$, \,($\theta_{\lfloor{q\over 2}\rfloor}\ne q+1$) then number of TISGMs is
$$1+{q\choose m}+2\sum_{s=1}^{m-1}{q\choose s}.$$
\end{itemize}
\end{pro}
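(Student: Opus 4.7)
The plan is to reduce the enumeration to a scalar fixed-point equation via Theorem~\ref{Theorem2}, solve it explicitly for $k=2$, and then translate the count of scalar solutions into a count of distinct TISGMs after accounting for a colour-swap symmetry. By Theorem~\ref{Theorem2}, every TISGM corresponds to some $M\subset\{1,\dots,q-1\}$ together with a positive solution of $z=f_{|M|}(z)$ as in (\ref{rm}). Setting $k=2$, substituting $x=\sqrt{z}$ and taking positive square roots turns (\ref{rm}) into
\[
mx^{3}-(\theta+m-1)x^{2}+(\theta+q-m-1)x-(q-m)=0,
\]
and $x=1$ is visibly a root. Dividing out $(x-1)$ leaves the quadratic $mx^{2}+(1-\theta)x+(q-m)=0$ whose discriminant is $D_{m}(\theta)=(\theta-1)^{2}-4m(q-m)$. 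Thus (\ref{rm}) has a nontrivial positive solution iff $D_{m}(\theta)\ge 0$, i.e.\ iff $\theta\ge\theta_{m}$; at $\theta=\theta_{m}$ there is exactly one such solution (a double root), and for $\theta>\theta_{m}$ there are exactly two, given by $z_{i}=x_{i}^{2}(m,\theta)$ with $x_{i}$ as in (\ref{s}). A direct computation gives $D_{m}(q+1)=(q-2m)^{2}$, hence $x_{1}(m,q+1)=(q-m)/m$ and $x_{2}(m,q+1)=1$: exactly one of the two nontrivial roots collapses onto the trivial root when $m<q/2$, and both collapse when $m=q/2$.

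The next step is to pass from scalar solutions to distinct TISGMs. Every TISGM is represented by a pair $(S,a)$ with $S\subset\{1,\dots,q\}$, $\emptyset\neq S\neq\{1,\dots,q\}$, and $a>0,\,a\ne 1$, meaning the boundary law takes value $a$ on $S$ and value $1$ on $S^{c}$. Two such pairs describe the same measure iff $(S,a)\sim(S^{c},1/a)$, obtained by global rescaling. The identity
\[
1/x_{1}(s,\theta)=x_{2}(q-s,\theta),
\]
which follows from $x_{1}(s,\theta)x_{2}(s,\theta)=(q-s)/s$ together with the explicit form of $x_{1,2}(q-s,\theta)$, shows that this involution intertwines the solutions at exponent $s$ with those at exponent $q-s$. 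Counting orbits thus reduces to restricting to subsets of size $|S|\le q/2$: each $|S|<q/2$ layer contributes $\binom{q}{|S|}$ orbits times the number of nontrivial roots, while the $|S|=q/2$ layer (even $q$) contributes $\binom{q}{q/2}$ orbits when two nontrivial roots exist (the involution pairs the two roots within the same layer).

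With these tools the five parts follow mechanically. For (1), all $D_{s}<0$ and only the free TISGM survives. For (2), by (\ref{st}) $D_{s}>0$ iff $s\in\{1,\dots,m\}\cup\{q-m,\dots,q-1\}$; restricting to $|S|\le m<q/2$ the involution never identifies two pairs within this range, yielding $1+2\sum_{s=1}^{m}\binom{q}{s}$. For (3), all exponents contribute, and the total $2^{q}-1$ follows from $\sum_{s=0}^{q}\binom{q}{s}=2^{q}$ together with the treatment of the $s=q/2$ layer for even $q$. For (4), the collapse at $\theta=q+1$ leaves one nontrivial root per $s<q/2$ and none at $s=q/2$; using $\binom{q}{q/2}=2\binom{q-1}{q/2}$ the resulting sum reduces to the stated formula. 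For (5), the double root at $s=m$ contributes $\binom{q}{m}$ (the condition $\theta_{[q/2]}\ne q+1$ ensures $m<q/2$, so the involution sends pairs at $s=m$ to pairs at $s=q-m$ outside the counting range), while $s\in\{1,\dots,m-1\}$ still contribute $2\binom{q}{s}$ each. The delicate bookkeeping is in part (4) for even $q$, where the root collapse and the $|S|=q/2$ self-pairing interact simultaneously; the remaining cases are routine once the orbit-counting rule has been set up.
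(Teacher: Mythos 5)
This proposition is stated in the paper without proof --- it is imported verbatim from \cite{KRK} --- so there is no in-paper argument to compare against; what you have done is reconstruct the counting argument from the cited source, and your reconstruction is correct. The reduction is sound: for $k=2$ the substitution $x=\sqrt{z}$ turns (\ref{rm}) into the cubic $mx^3-(\theta+m-1)x^2+(\theta+q-m-1)x-(q-m)=0$, the root $x=1$ factors out, and the residual quadratic $mx^2-(\theta-1)x+(q-m)=0$ has discriminant $(\theta-1)^2-4m(q-m)$, both of whose roots are automatically positive (product $(q-m)/m>0$, sum $(\theta-1)/m>0$ since $J>0$ gives $\theta>1$) --- a point you should state, since a negative root of the cubic would not yield a solution of (\ref{rm}). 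Your orbit-counting under $(S,a)\sim(S^c,1/a)$ is equivalent to the more pedestrian count $1+\sum_{s=1}^{q-1}\binom{q-1}{s}N_s$ that falls out of Theorem \ref{Theorem2} with the normalization $z_q=1$ (the two agree via $\binom{q-1}{s}+\binom{q-1}{s-1}=\binom{q}{s}$), and the identity $1/x_1(s,\theta)=x_2(q-s,\theta)$ that makes the involution act correctly on roots checks out. Two small points to make explicit: (i) a root of the residual quadratic equals $1$ if and only if $\theta=q+1$, which is what guarantees the three solutions are genuinely distinct in cases (2), (3), (5) and collapses exactly one of them in case (4); (ii) your labelling at $\theta=q+1$ ($x_1=(q-m)/m$, $x_2=1$) is opposite to the paper's convention, where $z_1$ is built from the minus-sign root and degenerates to $1$ at $\theta_c$ (cf.\ Lemma \ref{lg}(ii)); this is cosmetic but worth aligning since later sections rely on that convention.
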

\begin{rk} We note that constants $\theta_1$, $\dots$, $\theta_{\lfloor{q\over 2}\rfloor}$
with similar properties as in Proposition \ref{pw} can be defined for general $k$,
 see \cite[Theorem 1]{KRK}.
However, for $k\geq 3$ Theorem 1 of \cite{KRK} gives only the existence of the corresponding
solutions to the equation (\ref{rm}).  To investigate the (non-)extremality for a given TISGM
in this paper we need
explicit formulas of these solutions, which we have {\rm only} for $k=2$ by Proposition \ref{pw}.
\end{rk}
\begin{rk} For each fixed $m\leq \lfloor{q\over 2}\rfloor$, the symmetry of the
Potts model allows us to reduce the study to at most three TISGMs: $\mu_0$ which corresponds to the
solution $h_x=(0,\dots,0)$ and two TISGMs $\mu_1(\theta,m)$, $\mu_2(\theta,m)$ which correspond to the vectors
$$h_x=h_1=(\underbrace{\ln z_1,\ln z_1, \dots,\ln z_1}_m,0,\dots,0),$$
$$h_x=h_2=(\underbrace{\ln z_2,\ln z_2, \dots,\ln z_2}_m,0,\dots,0),$$
where $z_1=z_1(\theta,m)$ and $z_2=z_2(\theta,m)$ are solutions of the equation (\ref{rm}).
Hence, for each given $m$
all TISGMs belong to three classes: the first class contains only $\mu_0$, the second class all measures
which correspond to vectors obtained by permutations of coordinates of $h_1$, the last class contains all TISGMs
which correspond to a vector obtained by  permutations of $h_2$. It should also be noted that in case $m=q/2$ the solutions
$h_1$ and $h_2$ define the same TISGMs after a relabeling of indices and a renormalization of $h_1$.
\end{rk}
\section{Fuzzy transformation to Ising model with an external field}
In order to study extremality of a TISGM corresponding to a solution $z>0$, $z\ne 1$ of (\ref{rm}), (i.e. to a vector $l=(l_1,\dots,l_q)\in R^q$ with $m$ coordinates equal to $z$ and $q-m$ coordinates equal to 1) we divide the coordinates of this vector into two classes: We write
$l'$ and $l''$, where $l'$ has $m$ coordinates each equal to $z$ and $l''$ has $q-m$ coordinates each equal to 1.

This is a particular case of the well-known fuzzy Potts model.
The fuzzy Potts model on a given graph is obtained from a standard
$q$-valued Potts model on the same graph
by regrouping the possible spin-values into $r$ classes with $r<q$.
The use of the word 'fuzzy' suggests that only partial information of the underlying model is kept.
To our knowledge this model has appeared
for the first time as a lattice model in \cite{MV}.

For our purposes we will not consider these general fuzzy transformations, but only
transformations which are adapted
to the symmetry of the Potts tree measures we have constructed.  By this we mean
that only those colors which behave similarly under the Potts tree measure will be put into the same class, as we will explain now.

Without loss of generality, by relabeling of coordinates,  we can take $l$, $l'$ and $l''$ as follows:
$$l=(\underbrace{z,z, \dots,z}_m,\underbrace{1,1,\dots,1}_{q-m})=
((\underbrace{z,z, \dots,z}_m),(\underbrace{1,1,\dots,1}_{q-m}))=(l',l'').$$

Define a (fuzzy) map $T:\{1,2,\dots,q\}\to \{1',2'\}$ as
\begin{equation}\label{T}
T(i)=\left\{\begin{array}{ll}
1', \ \ \mbox{if} \ \ i\leq m\\[2mm]
2', \ \ \mbox{if} \ \ i\geq m+1.
\end{array}
\right.
\end{equation}

This map identifies spin-values which have the same value of the boundary law and
are treated in an equal fashion by the transition matrix. We extend this map to act on infinite-volume spin configurations
and measures in the infinite volume, i.e.
$$T(\sigma)(x)=T(\sigma(x)), \ \ T(\mu)(A)=\mu(T^{-1}(A)).$$

We note that a TISGM corresponding to a vector $l\in R^q$ is a tree-indexed Markov chain
with states $\{1,2,\dots,q\}$ and transition probabilities matrix $\mathbb P=(P_{ij})$ with $P_{ij}$ given by (\ref{epe2}).

From (\ref{epe2}) we get
\begin{equation}\label{P}
P_{ij}=\left\{\begin{array}{llllll}
\theta z/Z_1, \ \ \mbox{if} \ \ i=j, \, i\in \{1,\dots,m\}\\[2mm]
z/Z_1, \ \ \mbox{if} \ \ i\ne j, \, i,j\in \{1,\dots,m\}\\[2mm]
1/Z_1, \ \ \mbox{if} \ \ i\in \{1,\dots,m\}, j\in\{m+1,\dots,q\}\\[2mm]
z/Z_2, \ \ \mbox{if} \ \ i\in\{m+1,\dots,q\}, j\in \{1,\dots,m\}\\[2mm]
\theta/Z_2, \ \ \mbox{if} \ \ i=j, \, i\in \{m+1,\dots,q\}\\[2mm]
1/Z_2, \ \ \mbox{if} \ \ i\ne j, \, i,j\in \{m+1,\dots,q\},
\end{array}
\right.
\end{equation}
where
$$Z_1=(\theta+m-1)z+q-m, \ \ \ Z_2=mz+\theta+q-m-1.$$

The fuzzy map $T$ reduces the matrix $ \mathbb P$ to the $2\times 2$ matrix $T(\mathbb P)=\hat{\mathbb P}=(p_{ij})_{i,j=1',2'}$ with
   \begin{equation}\label{TP}
p_{ij}=\left\{\begin{array}{llll}
{(\theta+m-1)z\over Z_1}, \ \ \mbox{if} \ \ i=j=1'\\[2mm]
{q-m\over Z_1}, \ \ \ \ \ \ \mbox{if} \ \ i=1', j=2'\\[2mm]
 {mz\over Z_2}, \ \ \ \ \ \ \mbox{if} \ \ i=2', j=1'\\[2mm]
{\theta+q-m-1\over Z_2}, \ \ \mbox{if} \ \ i=j=2'.
\end{array}
\right.
\end{equation}
More precisely this means: Consider the translation-invariant tree-indexed Markov chain $\mu$
with transition matrix given by (\ref{P}).
Then its image measure $T(\mu)$ under the
site-wise application of the fuzzy map
$T$ is a tree-indexed Markov chain
with local state space $1',2'$ with
the transition matrix given by (\ref{TP}).

 Note that an application to a Gibbs measure which
is a Markov chain, of a transformation which is
not adapted to the structure of the transition matrix
(for example a fuzzy map which identifies spin values which have different values of the boundary law)
would in general not give rise to a Markov chain (and possibly not even to a Gibbs measure with
an absolutely summable interaction potential).

We also observe the following fact:  If a Markov chain $\mu$ is extremal in the set of
Gibbs-measures for the Potts model then this implies its triviality on the
tail-sigma algebra of the $q$-spin events (see \cite[Theorem 7.7]{Ge}). Indeed,
the extremality of Gibbs measures on any graph is equivalently characterized by the triviality
of the measure on the tail-sigma algebra.
But this implies that  the mapped chain $T(\mu)$ is trivial on its
tail-sigma algebra (since the latter can be identified
with the sub-sigma algebra of events in the tail-sigma algebra of the $q$-spin events which
do not distinguish spin-values which have the same values of the fuzzy variable).

It turns out that we can lift the action of the coarse graining to obtain an effective Hamiltonian
for the coarse-grained Markov chain which has the interpretation of an Ising model in a
magnetic field which is vanishing if and only if $2m =q$.
Note that this Hamiltonian will be $z$-independent.

Since $z$ and $m$ define a boundary law $l$,
when we say boundary law $z$ we mean the boundary law defined by $z$ for fixed $m$.

We have the following proposition.

\begin{pro} Let $k\geq 2$ be an integer. Consider a translation invariant Markov chain
$\mu_{\theta, m,z}$ for the Potts model with coupling constant parameterized by $\theta$
(see (\ref{P})), for $m\leq q/2$ and a corresponding (to $\mu_{\theta, m,z}$) choice
of the (up to three) values of the boundary law $z$ (solution of (\ref{rm})).

Then there exist a coupling constant $J'=J'(\theta,m)$ and a magnetic field value $h'=h'(\theta,m)$, where both values are
independent of the boundary law $z$,
such that the coarse-grained measure $T(\mu_{\theta, m,z})$
is a Gibbs measure  for the Hamiltonian
\begin{equation}\label{H'}
H'(\phi)=- J'\sum_{\langle x, y\rangle }\phi(x)\phi(y)- h'\sum_x \phi(x)
\end{equation}
of the corresponding Ising model with spin variables
$\phi(x)=\pm 1$  on the Cayley tree of order $k$. The measure $T(\mu_{\theta, m,z})$
 is a tree-indexed Markov chain (TISGM) which has a matrix $M$
as its transition matrix. Here we have identified the fuzzy class $1'$ with the Ising spin value $+1$
and the fuzzy class $2'$ with the Ising spin value $-1$.

For the coupling constant we have
\begin{equation}\begin{split}\label{transe13}
e^{4 J'}=\frac{(\theta + m-1)(\theta + q -m -1)}{(q-m)m}
\end{split}
\end{equation}	
For the magnetic field we have
\begin{equation}\begin{split}\label{transe14}
e^{\frac{4 h'}{k+1}}=\frac{(\theta + m-1)m}{(q-m)(\theta + q -m -1)}(\frac{z}{s})^2
\end{split}
\end{equation}	
where $z$ denotes the value of the boundary law for the Potts model and
$(s,1)$ denotes the value of the corresponding boundary law for the Ising model.

The boundary laws for the Potts model and the corresponding Ising model satisfy the relation
\begin{equation}\begin{split}\label{transe15}
s=(\frac{m}{q-m})^{\frac{k}{k+1}}z
\end{split}
\end{equation}	
which is independent of the choice of the solution at fixed $m$ and makes the magnetic field $h'$ independent
of the choice of $z$ at fixed $m$.

Moreover, the Hamiltonian (\ref{H'}) does not have any TISGM that is not of the form $T(\mu_{\theta, m,z})$.
\end{pro}

It is interesting to compare this result with Proposition 4.1 in
\cite{HaKu04} which states that any fuzzy image of the free b.c. condition Potts measure (which is obtained
for $z=1$) is quasilocal. Here we extend this result to the larger class of Markov chains with fixed $m$
and  remark that the Hamiltonian of the fuzzy model with classes $m$ and $q-m$
stays the same when we take a different boundary law $z$.
Hence it suffices to look at the free measure to construct this Hamiltonian.

\begin{proof}
The transfer matrix of the Ising model whose Hamiltonian is to be constructed
has the form

\begin{equation}\begin{split}\label{transe}
\hat Q =\Bigl( e^{J' \phi(x)\phi(y)+\frac{h'}{k+1}\phi(x)+ \frac{h'}{k+1}\phi(y)}\Bigr)_{\phi(x)=\pm 1, \phi(y)=\pm 1}
=\begin{pmatrix} e^{J' + \frac{2 h'}{k+1}} &  e^{-J' } \cr
 e^{-J' }  & e^{J' - \frac{2 h'}{k+1}} \cr
\end{pmatrix}
\end{split}
\end{equation}
(compare e.g. (12.20) of Georgii's book). The corresponding equation
for a boundary law $(s_x,1)$ (which we allow to be $x$-dependent at this stage) for the Ising model is written as
\begin{equation}\begin{split}\label{transe17}
s_x=\prod_{y\in S(x)}
\frac{e^{J' + \frac{2 h'}{k+1}} s_y+  e^{-J' } }{e^{-J' }s_y  + e^{J' - \frac{2 h'}{k+1}}}.
\end{split}
\end{equation}
Suppose we have a homogeneous solution $s$ of this equation for the boundary law of the Ising model. Then the corresponding transition matrix is
\begin{equation}\begin{split}\label{transe}
\hat P_s =\begin{pmatrix} \frac{e^{J' + \frac{2 h'}{k+1}} s}{e^{J' + \frac{2 h'}{k+1}} s+  e^{-J' }  } &  \frac{e^{-J' }}{e^{J' + \frac{2 h'}{k+1}} s+  e^{-J' }  }  \cr\\[2mm]
\frac{ e^{-J' }s}{  e^{-J' }s + e^{J' - \frac{2 h'}{k+1}}} & \frac{e^{J' - \frac{2 h'}{k+1}}}{   e^{-J' }s + e^{J' - \frac{2 h'}{k+1}}}\cr
\end{pmatrix}
\end{split}.
\end{equation}
Recall the form of  $M=\hat P$ by which we denote the stochastic $2\times 2$-transition matrix
we obtained from the application of the fuzzy map $T$ to an $m$- and $z$-dependent
splitting Gibbs measure $\mu$ for the Potts model as described above.

Equating this transition matrix with the transition matrix obtained from the coarse-graining of the boundary
laws for the Potts model for given $m$ and $z$ we find
\begin{equation}\begin{split}\label{transe-19}
e^{2 J' + \frac{2 h'}{k+1}}s =\frac{ p_{11}}{p_{12}}= \frac{\theta + m -1}{q-m }
\end{split}
\end{equation}	
and
\begin{equation}\begin{split}\label{transe-20}
e^{- 2 J' + \frac{2 h'}{k+1}}s =\frac{ p_{21}}{p_{22}}= \frac{m}{\theta + q -m -1 }z
\end{split}.
\end{equation}	
Taking quotient (respectively product) of these equations the formulas for coupling and magnetic field
\eqref{transe13} \eqref{transe14} follow.
To see the relation \eqref{transe15}  between homogeneous boundary laws of the Potts model
and the Ising model we start with the homogeneous version of  the Ising boundary law equation \eqref{transe17} to obtain
\begin{equation}\begin{split}\label{transe466}
s&=s^{-k }\Bigl(\frac{e^{2 J' + \frac{2 h'}{k+1}} s+  1 }{1 + ( s e^{- 2 J' + \frac{2 h'}{k+1}})^{-1}}\Bigr)^k\cr
&=\Bigl(\frac{m}{q-m}\frac{z}{s} \Bigr)^k  \left({(\theta+m-1)z+q-m\over mz+q-m-1+\theta}\right)^k \cr
&=\Bigl(\frac{m}{q-m}\frac{z}{s} \Bigr)^k  z,
\end{split}
\end{equation}
where we have substituted in the second line \eqref{transe-19} and \eqref{transe-20} to bring the Potts parameters into play
and  recognized the r.h.s. of the equation for the boundary
law of the Potts model $z=f_m(z)$ to get the third line. This proves the relation between the boundary laws \eqref{transe15} independently
of the choice of the solution for $m$ fixed (and independently of temperature of the Potts model).

Let us go back to \eqref{transe17} which is the functional (spatially dependent) equation for the boundary law for the Ising model at fixed $J',h'$.
The relation between boundary laws also works for spatially dependent boundary laws and we get that
$s_x $ is a solution for the Ising model \eqref{transe17}  if and only if
the quantity
$z_x:=s_x \Bigl( \frac{q-m}{m} \Bigr)^\frac{k}{k+1}$ is a solution
 of the functional equation
\begin{equation}\label{Tz1}
z_x=\prod_{y\in S(x)} {(\theta+m-1)z_y+q-m\over mz_y+q-m-1+\theta}.
\end{equation}
The proof of this statement uses the same substitutions as in the homogeneous case.

To give all possible TISGMs of the Hamiltonian (\ref{H'}) one has to solve the equation (\ref{Tz1}) in class of constant functions, $z_x=z$. Then this equation coincides with the equation (\ref{rm}), thus only TISGMs of the form $T(\mu_{\theta,m,z})$ exist.
\end{proof}

\bigskip

We come to a structural result which gives another way of looking
at the Potts measures, in particular in the case of boundary laws which are
different from $1$ in terms of simpler measures.

\begin{thm}\label{Constructive}
Consider the following decomposition of
the Potts measures for class size $m$ and one of the
 values of the boundary law $z$
as an integral over the image measure under the fuzzy map $T$, and the conditional measure
$$\mu_{\theta,m,z}(d\s)=\int (T\mu_{\theta,m,z})(d\s')\mu_{\theta,m,z}(d\s| T=\s').
$$ 

Let us denote, for given fuzzy configuration $\s'$ in
the infinite volume, the connected components of the level sets $\{v\in V| \s'(v)=1'\}$ and $\{v\in V| \s'(v)=2'\}$ by $C_i$ and $G_j$ respectively.

Then the conditional measure factorizes,
\begin{equation}\label{322}
\mu_{\theta,m,z}(d\s| T=\s')=\prod_i \mu_{\theta,m,z}(d\s_{C_i}| T=\s') \prod_j \mu_{\theta,m,z}(d\s_{G_j}| T=\s'),
\end{equation}
where the conditional measures on the fuzzy-spin clusters $C_i$ corresponding to
the fuzzy-class $1'$
take the simple form
\begin{equation}\label{323}\mu_{\theta,m,z}(d\s_{C_i}| T=\s')=\mu^\text{free}_{\{1,\dots,m\},\theta,C_i}(d\s_{C_i}).
\end{equation} 
Here the measure on the r.h.s. is a free boundary condition measure for a Potts model
with $m$ spins with possible values $\{1,\dots,m\}$ on the subtree $C_i$.

Similarly, the conditional measure on the fuzzy-spin cluster $G_j$ corresponding to
fuzzy-class $2'$ is a free boundary condition Potts measure with states $\{m+1,\dots,q\}$, at
parameter $\theta$,  and we have
\begin{equation}\label{324}\mu_{\theta,m,z}(d\s_{G_j}| T=\s')=\mu^\text{free}_{\{m+1,\dots,q\},\theta,G_j}(d\s_{G_j}).
\end{equation} 

\end{thm}

\begin{rk} Note that the level sets $C_i$ resp. $G_j$ are not required
to be finite.

The theorem helps us to get a more intuitive understanding of
the Potts measures $\mu_{\theta,m,z}$,
for class size $m$ and the various corresponding values of the boundary law $z\neq 1$,
in terms of well-known measures obtained from standard boundary conditions.

We already understood that $T\mu_{\theta,m,z}$, for $z\neq 1$, is equal to either
the maximal measure corresponding to the Ising-Hamiltonian $H'$, or the minimal measure
corresponding to $H'$.
By known properties of the Ising model in an external magnetic field (which will in general
be non-zero)
these two fuzzy measures can also be obtained as limits of finite-volume Ising measures
with homogeneous boundary conditions either equal to $1'$ for all sites, or equal to $2'$ for all sites.
Hence we can obtain the construction of
the measure $\mu_{\theta,m,z}$ in terms of the following two steps:

Step A) Given $m$ we first compute the $m$-dependent coarse-grained
Hamiltonian $\eqref{H'}$.
We choose boundary condition $1'$  (or $2'$) to construct first the fuzzy measure, which is identical
to the well-known minimal (or maximal) Ising measure.

Step B) Given the construction of the coarse-grained measure from
step A) we put free states for the Potts model
with coupling parameter $\theta$ on each of the connected
components of the set of sites of $\s'\equiv 1'$ and of $\s'\equiv 2'$.
\end{rk}

\begin{proof}
We verify the desired representation of the conditional measure
on all local events of the form $\{\s_V=\zeta_V\}$ for a fixed
finite subtree $V$ containing the origin.
It suffices to assume that $T(\zeta_V)=\s'_V$, otherwise both sides of \eqref{322} are zero.
To reduce the computations
 to finite-volume expressions, we note that in general,
by martingale convergence,
\begin{equation}
\begin{split}
\mu_{\theta,m,z}(\s_V=\zeta_V| T=\s')=\lim_{W}\mu_{\theta,m,z}(\s_V=\zeta_V| T(\s_W)=\s'_W)
\end{split}
\end{equation}
along any sequence of subtrees $W$ converging to the full vertex set of the tree, for
$T\mu_{\theta,m,z}$-a.e- fuzzy configuration in the infinite volume $\s'$.

For $V$ fixed and $W$ large enough but finite,
we use the elementary formula for conditional probabilities to write the expression
$\mu_{\theta,m,z}(\s_V=\zeta_V| T(\s_W)=\s'_W)$
in terms of finite sums over products of
elements of the transition matrix of the Markov chain
$\mu_{\theta,m,z}$. Using cancellations between numerator and denominator,
 we see in this way that it becomes independent of $W$.
To see that the factorization \eqref{322} takes place over the fuzzy-clusters (where we have
to take into account only of those clusters which intersect the finite volume $V$)
we note that the probabilities $P_{ij}$ defined by (\ref{P}) for $i$ and $j$ which are not in the same class depend only on $T(i)$ and $T(j)$, but not on $i$ and $j$ themselves.

Finally, to see why the free measure $\mu^\text{free}_{\{1,\dots,m\},\theta,C_i}$ appears in \eqref{323}
let us fix one component $C_i$ (which may be infinite or not) and look at the probability
of the event $\{\s_{V\cap C_i}= \zeta_{V\cap C_i}\}$  in the measure $\mu_{\theta,m,z}(d \s| T=\s')$. The crucial point of the argument is the following.
Each of the entries of the $m\times m$ subblock
in the corresponding $z$-dependent full $q\times q$ transition matrix carries a uniform factor $z$
which can be pulled out. In this way one obtains that the conditional measure on $C_i$
equals a free measure ($z=1$-measure) for an $m$-state Potts model with
states $\{1,\dots,m\}$ on the
(in general inhomogeneous) tree $C_i$. A similar argument works
for the components $G_j$.
 This proves the theorem.
\end{proof}

\bigskip

 Let us continue by presenting relations between the transition matrices
	of the Potts model and the mapped model which will be very useful below.
As the explicit formulas of eigenvalues are very important
in our checking of (non-)extremality, let us calculate
these eigenvalues now.

	We note that the matrix $\hat{\mathbb P}$ has the two eigenvalues $1$ and
\begin{equation}\label{lambda2}
\lambda_2(\hat{\mathbb P})={(\theta+m-1)z\over Z_1}+ {\theta+q-m-1\over Z_2}-1.
\end{equation}
Let us comment on some general properties of transition matrices which have the same type of two-block symmetry as our transition matrix has:
Our matrix $\mathbb P$ is a stochastic $q\times q$-matrix,
in block-form which has as parameters the block size $m$ and can be written as follows in terms of
four real parameters $p_1$, $p_2$, $q_1$, $q_2$\footnote{The simple expressions of the parameters in terms of existing quantities $z$, $\theta$, $q$, and $m$ are given in the proof of Corollary \ref{ca} below.}.

Here
$p_1$ is the transition rate for going from a state in $1'$ to a different state in $1'$,

$q_1$ is the transition rate for going from a state in $2'$ to a different state in $2'$,

 $p_2$ is the transition rate for going from a state in $1'$ to a state in $2'$,

$q_2$ is the transition rate for going from a state in $2'$ to a state in $1'$.

The form is
\begin{equation}\begin{split}\label{transe}
\mathbb P =(P_{i,j})_{i,j=1,\dots,q}=
\begin{pmatrix} a E_m & 0\cr
0 & b E_{q-m}\cr
\end{pmatrix} +
\begin{pmatrix} p_1 I_{m,m} & p_2 I_{m,q-m}\cr
q_2 I_{q-m,m} & q_1 I_{q-m,q-m}\cr
\end{pmatrix}
\end{split}
\end{equation}
where $I_{k,l}$ is the $k\times l$ matrix which has all matrix elements equal to $1$
and $E_k$ is the $k$-dimensional unity matrix.
Here necessarily $a=1-m p_1-(q-m)p_2$ and $b=1-m q_2-(q-m)q_1$, so that the matrix is stochastic.

We consider the action of the fuzzy map on probability vectors and introduce
the linear map from $\R^q$ to $\R^2$ given by
$$L v=\left(\begin{array}{ll}\sum_{i=1}^m v_i \\[2mm]
\sum_{i=m+1}^q v_i
\end{array}
\right).$$
In matrix form we have $L=\begin{pmatrix}I_{1,m}&0 \cr
0 & I_{1,q-m}\cr
\end{pmatrix}.$

The following lemma is straightforward.
\begin{lemma}\label{ll1}
 We have

$$L \mathbb P
=\Bigl(\begin{pmatrix} a  & 0\cr
0 & b \cr
\end{pmatrix} +
\begin{pmatrix} m p_1  & m p_2\cr
(q-m) q_2 & (q-m) q_1 \cr
 \end{pmatrix}\Bigr) L=:{\mathbb P}' L.$$

We also have
$$L \mathbb P^t
=\Bigl(\begin{pmatrix} a  & 0\cr
0 & b \cr
\end{pmatrix} +
\begin{pmatrix} m p_1  & m q_2\cr
(q-m) p_2 & (q-m) q_1 \cr
 \end{pmatrix}\Bigr) L=\hat{\mathbb P}^t L,$$
 where $\hat{\mathbb P}$ (see (\ref{TP})) is the transition
 matrix for the coarse-grained chain with two states $1',2'$. By ${\mathbb P}^t$ we denote the transpose of the matrix ${\mathbb P}$.
 \end{lemma}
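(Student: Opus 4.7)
The plan is to verify both identities by a direct block-matrix computation, exploiting the decomposition (\ref{transe}) of $\mathbb P$ together with the explicit form $L=\begin{pmatrix} I_{1,m} & 0 \\ 0 & I_{1,q-m} \end{pmatrix}$. No genuine obstacle appears: the lemma merely encodes the fact that, whenever several columns of a matrix coincide, summing them (which is what $L$ does blockwise) just multiplies by the block size. This is why the lemma is labelled obvious.

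Concretely I would first record the elementary identities
\begin{equation*}
L\begin{pmatrix} E_m & 0 \\ 0 & 0\end{pmatrix}=\begin{pmatrix} 1 & 0 \\ 0 & 0\end{pmatrix}L,\qquad L\begin{pmatrix} I_{m,m} & 0 \\ 0 & 0\end{pmatrix}=\begin{pmatrix} m & 0 \\ 0 & 0\end{pmatrix}L,
\end{equation*}
\begin{equation*}
L\begin{pmatrix} 0 & I_{m,q-m} \\ 0 & 0\end{pmatrix}=\begin{pmatrix} 0 & m \\ 0 & 0\end{pmatrix}L,\qquad L\begin{pmatrix} 0 & 0 \\ I_{q-m,m} & 0 \end{pmatrix}=\begin{pmatrix} 0 & 0 \\ q-m & 0 \end{pmatrix}L,
\end{equation*}
together with the two further relations obtained from the remaining block positions; each is an immediate consequence of $I_{1,k}I_{k,l}=k\,I_{1,l}$ and $I_{1,k}E_k=I_{1,k}$. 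Substituting (\ref{transe}) into $L\mathbb P$ and collecting terms then gives
\begin{equation*}
L\mathbb P=\left[\begin{pmatrix} a & 0 \\ 0 & b \end{pmatrix}+\begin{pmatrix} mp_1 & mp_2 \\ (q-m)q_2 & (q-m)q_1\end{pmatrix}\right]L=\mathbb P' L,
\end{equation*}
which is the first identity.

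For $L\mathbb P^t$ the only change is that transposition exchanges the two off-diagonal blocks of (\ref{transe}): $p_2 I_{m,q-m}$ becomes $p_2 I_{q-m,m}$ in the $(2,1)$ block position and $q_2 I_{q-m,m}$ becomes $q_2 I_{m,q-m}$ in the $(1,2)$ block position. Re-running the same elementary identities produces the $2\times 2$ matrix with top-right entry $mq_2$ and bottom-left entry $(q-m)p_2$; after inserting $a=(\theta-1)z/Z_1$, $p_1=z/Z_1$, $p_2=1/Z_1$, $b=(\theta-1)/Z_2$, $q_1=1/Z_2$, $q_2=z/Z_2$ read from (\ref{P}) and comparing with (\ref{TP}), this matrix is precisely $\hat{\mathbb P}^t$. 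The only pitfall in the whole argument is keeping the off-diagonal block sizes straight when transposing, and that bookkeeping is exactly what distinguishes $\mathbb P'$ from $\hat{\mathbb P}^t$.
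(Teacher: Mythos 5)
Your computation is correct and is exactly the direct block-matrix verification the authors have in mind when they declare the lemma obvious (the paper gives no written proof). The elementary identities $I_{1,k}I_{k,l}=k\,I_{1,l}$ and $I_{1,k}E_k=I_{1,k}$, the transposition bookkeeping for the off-diagonal blocks, and the final identification with $\hat{\mathbb P}^t$ via the values of $a,b,p_1,p_2,q_1,q_2$ read from (\ref{P}) all check out.
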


The following proposition which describes all eigenvalues of ${\mathbb P}^t$
will be very important  for us.

 \begin{pro}\label{pp1} The matrix ${\mathbb P}^t$ is diagonalisable and has the eigenvalues
 $1, \lambda_2(\hat{\mathbb P}),a, b$. The dimension of the eigenspace to $a$ is $m-1$.
 The dimension of the eigenspace to $b$ is $q-m-1$.
  \end{pro}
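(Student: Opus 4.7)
The plan is to exhibit a full set of $q$ linearly independent eigenvectors for $\mathbb{P}^t$ by exploiting the two-block symmetry in \eqref{transe}. I split $\R^q$ into three natural pieces: the $(m-1)$-dimensional subspace $U_1 = \{(w,0) : w\in\R^m,\ \sum_{i=1}^m w_i = 0\}$, the $(q-m-1)$-dimensional subspace $U_2 = \{(0,w) : w\in\R^{q-m},\ \sum_{i=1}^{q-m} w_i = 0\}$, and the $2$-dimensional subspace $U_{12} = \{L^t u : u\in\R^2\}$ of vectors that are constant on each of the two blocks (where $L^t:\R^2\to\R^q$ is the transpose of the aggregation map $L$ from the lemma, sending $(u_1,u_2)$ to $(u_1\mathbf{1}_m,\,u_2\mathbf{1}_{q-m})$). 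These three subspaces intersect trivially and their dimensions sum to $q$, so together they span $\R^q$.

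Next I check that each is $\mathbb{P}^t$-invariant and compute the restriction. For $v=(w,0)\in U_1$, the formula \eqref{transe} gives
$$
(\mathbb{P}^t v)_j = \sum_{i\in 1'}\mathbb{P}_{ij}w_i = a w_j + p_1\sum_{i\in 1'}w_i = a w_j
$$
for $j\in 1'$, and $(\mathbb{P}^t v)_j = p_2 \sum_{i\in 1'} w_i = 0$ for $j\in 2'$, using the constraint $\sum_i w_i=0$. Hence $\mathbb{P}^t$ acts on $U_1$ as the scalar $a$. The symmetric computation shows $\mathbb{P}^t$ acts on $U_2$ as the scalar $b$.

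For the block $U_{12}$, I invoke the intertwining relation $L\mathbb{P}^t = \hat{\mathbb{P}}^t L$ proved in the preceding lemma. Taking transposes yields $\mathbb{P}\,L^t = L^t\hat{\mathbb{P}}$; equivalently, for any $u\in\R^2$,
$$
\mathbb{P}^t(L^t u) = L^t\bigl(\hat{\mathbb{P}}^t u\bigr),
$$
which one can also verify directly from \eqref{transe} by checking that both blocks of $\mathbb{P}^t L^t u$ are constant with values $(a+mp_1)u_1+(q-m)q_2 u_2$ and $mp_2 u_1+(b+(q-m)q_1)u_2$, matching the entries of $\hat{\mathbb{P}}^t u$. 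Since $\hat{\mathbb{P}}$ is a $2\times 2$ stochastic matrix it is diagonalisable with eigenvalues $1$ and $\lambda_2(\hat{\mathbb{P}})$ given by \eqref{lambda2}; lifting the two eigenvectors of $\hat{\mathbb{P}}^t$ through $L^t$ produces two eigenvectors of $\mathbb{P}^t$ in $U_{12}$ with the same eigenvalues.

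Collecting these observations gives $q$ linearly independent eigenvectors with eigenvalues $a$ (multiplicity at least $m-1$), $b$ (multiplicity at least $q-m-1$), $1$ and $\lambda_2(\hat{\mathbb{P}})$, proving diagonalisability and establishing the claim on the eigenvalues and the stated dimensions of the eigenspaces for $a$ and $b$. The only real subtlety is that if some of the four numbers $1,\lambda_2(\hat{\mathbb{P}}),a,b$ happen to coincide the eigenspaces merge, but the three-way direct-sum decomposition above is unaffected, so diagonalisability and the \emph{lower} bounds $m-1$ and $q-m-1$ on the geometric multiplicities of $a$ and $b$ persist; equality in generic position follows because the eigenvectors in $U_{12}$ are not supported on vectors with zero sum on each block. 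This sign/dimension bookkeeping is the only mildly delicate point in an otherwise routine verification.
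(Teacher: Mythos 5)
Your decomposition $\R^q=U_1\oplus U_2\oplus U_{12}$ is a genuinely different, bottom-up route. The paper argues top-down: it takes an arbitrary eigenvector $v$ of ${\mathbb P}^t$, applies $L$, and splits into the cases $Lv\neq 0$ and $Lv=0$; this shows that every eigenvalue lies in $\{1,\lambda_2(\hat{\mathbb P}),a,b\}$ and exhibits the $a$- and $b$-eigenvectors, but it never actually produces eigenvectors for $1$ and $\lambda_2(\hat{\mathbb P})$, so the diagonalisability claim (and the fact that these two values really occur) is left implicit. By constructing a full eigenbasis you close that gap, and your computations on $U_1$ and $U_2$ are correct.

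There is, however, a concrete error in your treatment of $U_{12}$. Transposing $L{\mathbb P}^t=\hat{\mathbb P}^tL$ gives ${\mathbb P}L^t=L^t\hat{\mathbb P}$, i.e.\ invariance of $U_{12}$ under ${\mathbb P}$, not under ${\mathbb P}^t$. The identity you need is the transpose of the \emph{first} relation in the Lemma, $L{\mathbb P}={\mathbb P}'L$, which yields ${\mathbb P}^tL^tu=L^t({\mathbb P}')^tu$. Indeed the block values you compute, $(a+mp_1)u_1+(q-m)q_2u_2$ and $mp_2u_1+(b+(q-m)q_1)u_2$, are the entries of $({\mathbb P}')^tu$, \emph{not} of $\hat{\mathbb P}^tu$, whose off-diagonal entries are $mq_2$ and $(q-m)p_2$. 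The conclusion nevertheless survives: ${\mathbb P}'$ and $\hat{\mathbb P}$ are conjugate by $\mathrm{diag}(m,q-m)$, hence share the eigenvalues $1$ and $\lambda_2(\hat{\mathbb P})$, and since the off-diagonal entries of $\hat{\mathbb P}$ are strictly positive these two eigenvalues are distinct, so the restriction of ${\mathbb P}^t$ to $U_{12}$ is diagonalisable. With that correction, and with your (correct) caveat that the stated eigenspace dimensions are exact only when $a,b\notin\{1,\lambda_2(\hat{\mathbb P})\}$ and $a\neq b$ --- a degeneracy the paper also glosses over, cf.\ the Corollary where $a$ drops out for $m=1$ --- your argument is sound.
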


\begin{proof}
Suppose $v$ is a right-eigenvector of ${\mathbb P}^t$ for the eigenvalue $\lambda$,
so ${\mathbb P}^t v =\lambda v$. Then using Lemma \ref{ll1} we get
$\hat{\mathbb P}^t L v =L {\mathbb P}^t v =\lambda Lv$. Two cases are possible:

Case 1:  $Lv\neq 0$ and hence $\lambda$ is an eigenvalue for $\hat{\mathbb P}$, too.
The two eigenvalues  for the two-by-two matrix can be easily
evaluated, one eigenvalue is equal to $1$, call the other one $\lambda_2(\hat{\mathbb P})$.

Case 2:  $Lv=0$.
Then we must have
\begin{equation}\begin{split}\label{transe}
{\mathbb P}^t v&\equiv {\mathbb P}^t \begin{pmatrix}u \cr
w \cr
\end{pmatrix}=
\begin{pmatrix} a E_m & 0\cr
0 & b E_{q-m}\cr
\end{pmatrix}
\begin{pmatrix}u \cr
w \cr
\end{pmatrix}\cr
&= \lambda \begin{pmatrix}u \cr
w \cr
\end{pmatrix}
\end{split}
\end{equation}
In order to have an eigenvalue $v$ at least one of the components $u$ or $w$ has
to be nonzero.

Hence the possible eigenvalues are $a, b$ (which are allowed to be equal or not).

The eigenvectors corresponding to $a$ are
of the form $ \begin{pmatrix}u \cr 0 \cr
\end{pmatrix}$ with $\sum_{i=1}^m u_i=0$. We have $m-1$ linearly independent of them.

The eigenvectors corresponding to $b$ are
of the form $ \begin{pmatrix}0 \cr w \cr
\end{pmatrix}$ with $\sum_{i=m+1}^q w_i=0$. Clearly we have $q-m-1$ of them.
\end{proof}

To check the (non-)extremality of a TISGM we will use the explicit formulas for the
eigenvalues of the matrix $\mathbb P$, which are
given by the following corollary of Proposition \ref{pp1}.

\begin{cor}\label{ca} The matrix $\mathbb P$ defined by (\ref{P}) with $m\leq [q/2]$ has the following eigenvalues
\begin{equation}\label{ev}
\begin{array}{ll}
\{1, b, \lambda_2(\hat{\mathbb P})\}, \ \ \mbox{if} \ \ m=1\\[3mm]
\{1, a, b, \lambda_2(\hat{\mathbb P})\}, \ \ \mbox{if} \ \ m\geq 2,
\end{array}
\end{equation}
where
\begin{equation}\label{abl}
a={(\theta-1)z\over Z_1}, \ \ b={(\theta-1)\sqrt[k]{z}\over Z_1}, \ \
\lambda_2(\hat{\mathbb P})={[\theta-1+(1-\sqrt[k]{z})m]z\over Z_1}.\end{equation}
\end{cor}
\begin{proof} Since $z$ is a solution to (\ref{rm}) we have $Z_1=\sqrt[k]{z} Z_2$. In case $m=1$ we have $p_1=0$ and the above mentioned condition $Lu=0$ gives $u=u_1=0$, i.e. $a$ can not be an eigenvalue. For $m\geq 2$ using $a+p_1={\theta z\over Z_1}$, $p_1={z\over Z_1}$, $b+q_1={\theta\over Z_2}$,  $q_1={1\over Z_2}$ and (\ref{lambda2}) we get (\ref{ev}).
\end{proof}
For $a,b, \lambda_2$ given by (\ref{ev}) we denote
$${\widehat\lambda}={\widehat\lambda}(k, q, m, \theta)=\max\{a, b, |\lambda_2(\hat{\mathbb P})|\}.$$
For $k=2$, using formula (\ref{s}) of $z\in \{x_1^2, x_2^2\}$, for $m=1$ we obtain $\widehat\lambda=b$ and for $m\geq 2$ we have
\begin{equation}\label{kl}
|\lambda_2(\hat{\mathbb P})|<{\widehat\lambda}=\left\{\begin{array}{ll}
b, \ \ \mbox{if} \ \ z<1\\[2mm]
a, \ \ \ \ \ \mbox{if} \ \ z>1.
\end{array}
\right.
\end{equation}

\bigskip

Let us continue now with the discussion of properties of the coarse-grained chain.

Recall that for each fixed $m$, the equation (\ref{rm}) has up to three solutions: $z_0=1, z_i=z_i(\theta,q,m), i=1,2$ (see \cite[Step 1 of the proof of Theorem 1]{KRK}).
Denote by $\mu_i=\mu_i(\theta,m)$ the TISGM of the Potts model which corresponds to the solution $z_i$, and denote by $T(\mu_i)$ its image measure, which is a Gibbs measure for the Ising model (\ref{H'}).

Define
$$g(z)=\sqrt[k]{f_m(z)}={(\theta+m-1)z+q-m\over mz+q-m-1+\theta}.$$

 By simple analysis we get the following

\begin{lemma}\label{lg} \begin{itemize}
\item[i.] For any $k\geq 2$ and $\theta>1$ the function $g(z)$, $z> 0$ has the following properties:
\begin{itemize}

\item[a)] $\{z: g^k(z)=f_m(z)=z\}=\{1, z_1, z_2\}$;

\item[b)] $a<g(z)<A, \ \ \mbox{with} \ \ a={q-m\over q+\theta-m-1}, \ \ A={\theta+m-1\over m}$;

\item[c)] ${d\over dz}g(z)={(\theta-1)(\theta+q-1)\over (mz+\theta+q-m-1)^2}>0$;

\item[d)] ${d^2\over dz^2}g(z)<0, \ \ z>0.$
\end{itemize}

\item[ii.] If $k=2$ and $m\leq q/2$ then, for the solutions $z_1$ and $z_2$ mentioned in Proposition \ref{pw}, the following statements hold
$$\begin{array}{llll}
1<z_1=z_2, \ \ \mbox{if} \ \ \theta=\theta_m\\[2mm]
1<z_1<z_2, \ \ \mbox{if} \ \ \theta_m<\theta<\theta_c, \ \ \mbox{with}\ \   \theta_c=q+1\\[2mm]
1=z_1<z_2, \ \ \mbox{if} \ \ \theta=q+1\\[2mm]
z_1<1<z_2, \ \ \mbox{if} \ \ q+1<\theta.
\end{array}$$
\end{itemize}
\end{lemma}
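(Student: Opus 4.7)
My plan is to treat each of the six assertions in turn. Parts (i.b)--(i.d) are direct one-variable calculations. Parts (i.a) and (ii) both reduce to analyzing a single quadratic in the auxiliary variable $u=\sqrt{z}$, and all the case distinctions in (ii) will fall out from the sign of that quadratic at $u=1$ together with Vieta's formulae.

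For (i.c), I would apply the quotient rule to $g(z)=((\theta+m-1)z+q-m)/(mz+\theta+q-m-1)$. The numerator of $g'(z)$ is independent of $z$ and simplifies via the identity $(\theta+m-1)(\theta+q-m-1)-m(q-m)=(\theta-1)(\theta+q-1)$, which I would verify by expanding both sides. This gives the stated formula, and for $\theta>1$ both factors are positive, hence $g'(z)>0$. For (i.d), differentiate the result of (c) once more; since only the denominator $(mz+\theta+q-m-1)^2$ depends on $z$, the chain rule produces an overall factor of $-2m(mz+\theta+q-m-1)$, yielding $g''(z)=-2m(\theta-1)(\theta+q-1)/(mz+\theta+q-m-1)^3<0$. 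For (i.b), monotonicity from (c) together with continuity of $g$ on $(0,\infty)$ gives the range $(g(0^{+}),\lim_{z\to\infty}g(z))=(a,A)$.

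For (i.a), I would substitute $u=\sqrt{z}$ (so $z>0\Leftrightarrow u>0$) in $g(z)^{k}=z$, which for $k=2$ becomes $u=g(u^{2})$. Clearing the denominator yields the cubic $mu^{3}-(\theta+m-1)u^{2}+(\theta+q-m-1)u-(q-m)=0$. A direct substitution shows that $u=1$ is always a root, so I would divide out $(u-1)$ to obtain $mu^{2}-(\theta-1)u+(q-m)=0$. Its positive roots, when real, are exactly $x_{1,2}$ of formula (\ref{s}), and both are positive since the product $(q-m)/m$ and sum $(\theta-1)/m$ are positive. Squaring gives $z_{i}=x_{i}^{2}$ and exhausts all positive fixed points.

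For (ii), let $p(u):=mu^{2}-(\theta-1)u+(q-m)$ as above. Its discriminant $(\theta-1)^{2}-4m(q-m)$ vanishes precisely at $\theta=\theta_{m}$ by (\ref{tm}). Vieta's formulae give $x_{1}x_{2}=(q-m)/m\geq 1$ (strict for $m<q/2$) and $x_{1}+x_{2}=(\theta-1)/m>0$. Evaluating at $u=1$: $p(1)=q-\theta+1$, which is positive for $\theta<q+1$, zero at $\theta=q+1$, negative for $\theta>q+1$. The four cases follow: at $\theta=\theta_{m}$ the double root equals $(\theta_{m}-1)/(2m)=\sqrt{(q-m)/m}>1$; for $\theta_{m}<\theta<q+1$ the upward-opening $p$ is positive at $1$ while its two real roots have product $>1$, forcing both to exceed $1$; at $\theta=q+1$ the root $1$ is present and the other equals $(q-m)/m>1$; for $\theta>q+1$, $p(1)<0$ puts $1$ strictly between the two positive roots. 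Squaring $x_{i}\mapsto z_{i}=x_{i}^{2}$ preserves these inequalities since $x_{i}>0$. There is no serious obstacle; the only mildly delicate point is the degenerate case $m=q/2$ (where $\theta_{m}=q+1$ and several cases collapse to $z_{1}=z_{2}=1$), which the statement's inequalities handle only for $m<q/2$ strictly.
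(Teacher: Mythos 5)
Your proof is correct and is essentially the argument the paper leaves implicit ("by simple analysis"): the factorization $mu^3-(\theta+m-1)u^2+(\theta+q-m-1)u-(q-m)=(u-1)\bigl(mu^2-(\theta-1)u+(q-m)\bigr)$ with $u=\sqrt{z}$ is exactly what underlies formula (\ref{s}), and your Vieta/sign-of-$p(1)$ analysis reproduces all four cases of part (ii). Your remark about the degenerate case $m=q/2$ (where $\theta_m=q+1$ and the first line of (ii) would read $z_1=z_2=1$) is a fair observation about the lemma's statement, not a gap in your proof.
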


In Figure \ref{fn1} the functions $z_i=x^2_i(m,\theta)$, $i=1,2$ are shown for $k=2$, $q=8$, $m=1,2,3$.

\begin{figure}
\includegraphics[width=10cm]{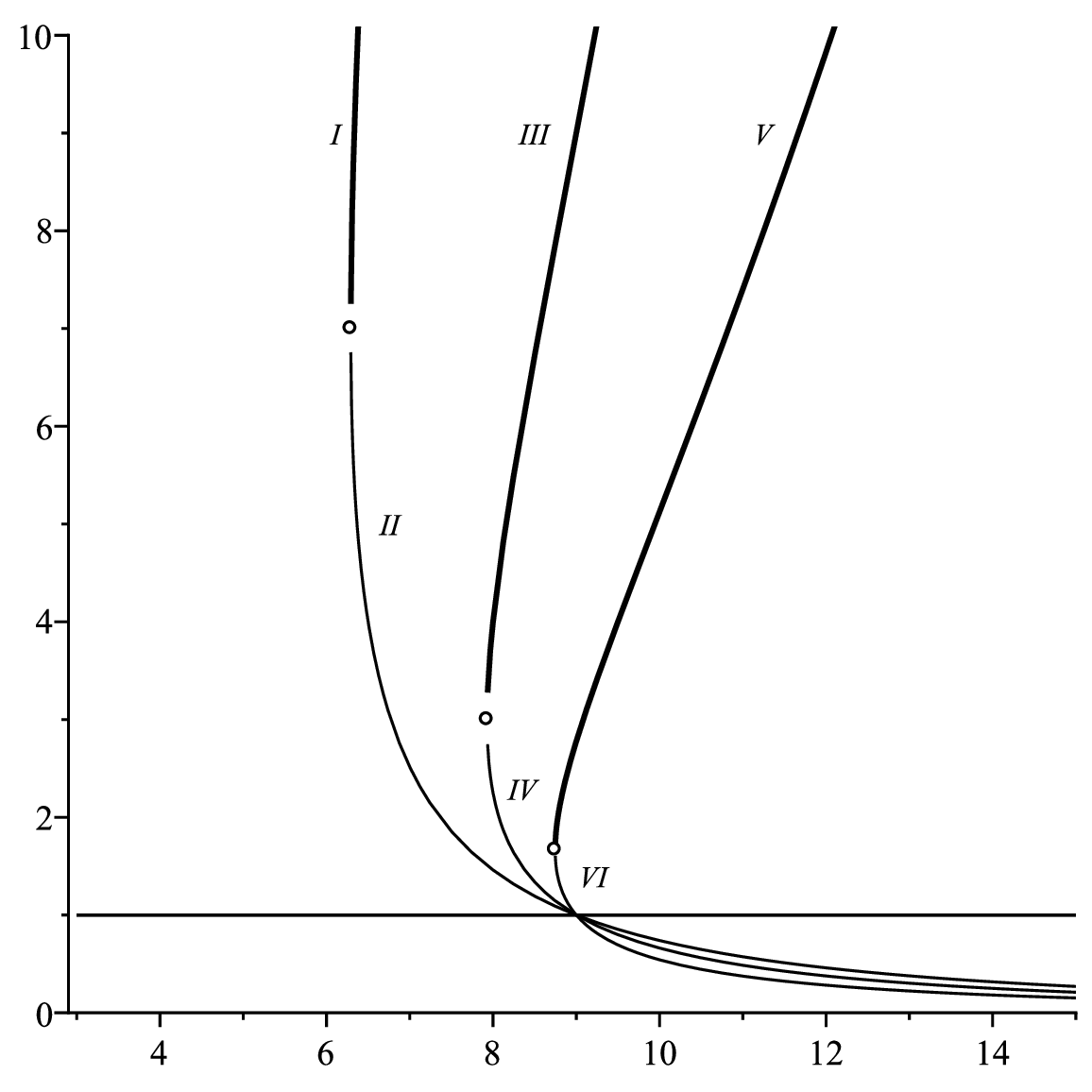}
\caption{ The graphs of the functions $z_i=z_i(m,\theta)$, $i=1,2$, for $k=2$, $q=8$ and $m=1,2,3$. The horizontal axis presents $\theta$ and the vertical axis presents $z_i$. Circle dots having coordinates $(\theta_m, {q-m\over m})$, $m=1,2,3$ separate graph of $z_1$ from graph of $z_2$. The graph of $z_2(1,\theta)$ is the curve $I$. The graph of $z_1(1,\theta)$ is the curve $II$. $z_2(2,\theta)$ is $III$. $z_1(2,\theta)$ is $IV$, $z_2(3,\theta)$ is $V$. $z_1(3,\theta)$ is $VI$. The intersection point of $z_1$ functions has coordinate $(\theta_c,1)=(9,1)$. For $m=4$ the graphs of the functions $z_i(4,\theta)$ are similar to the above given graphs, but the circle point separating them coincides with the point $(9,1)$ of the intersections, since $\theta_{\lfloor{8\over 2}\rfloor=4}=\theta_c=9$.}\label{fn1}
\end{figure}

  We have the following

\begin{pro}\label{bh}
\begin{itemize}
\item[1.] Each SGM of the model (\ref{H'}) corresponds\footnote{This correspondence is as in the definition of a SGM,  see  below of the formula (\ref{dlr}).} to a solution $z_x\in R$, $x\in V$ of the following functional equation
\begin{equation}\label{Tz}
z_x=\prod_{y\in S(x)} {(\theta+m-1)z_y+q-m\over mz_y+q-m-1+\theta}.
\end{equation}

\item[2.] If $z_x$ is a solution of (\ref{Tz}) then
\begin{equation}\label{To}
\min\{1, z_1, z_2\}\leq z_x\leq \max\{1, z_1, z_2\},
\end{equation}
where $z_1$, $z_2$ are solutions to (\ref{rm}).
\end{itemize}
\end{pro}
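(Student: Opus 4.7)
The plan is to handle the two parts of Proposition \ref{bh} separately; both reduce to short applications of the SGM/boundary-law correspondence and a one-dimensional sign analysis of $f_m(z)-z$ supplied by Lemma \ref{lg}.

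For part (1), I would apply the standard bijection between splitting Gibbs measures and positive solutions of the boundary-law equation (the Ising analogue of Theorem \ref{ep}, cf. \cite[Ch.~12]{Ge}) to the Ising Hamiltonian (\ref{H'}); this gives the Ising recursion (\ref{transe17}) for boundary laws $s_x$. The substitution $z_x = s_x\bigl(\frac{q-m}{m}\bigr)^{k/(k+1)}$ introduced in the preceding proposition, and explicitly noted there to remain valid for spatially inhomogeneous boundary laws, then carries (\ref{transe17}) into (\ref{Tz}) via the same chain of substitutions as in (\ref{transe466}), applied pointwise in $x$.

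For part (2), the central ingredient is Lemma \ref{lg}(i)(b)(c): $g$ is strictly increasing and satisfies $a < g(z) < A$ on $(0,\infty)$. Hence every solution of (\ref{Tz}) obeys $a^k < z_x < A^k$, so that $M^* := \sup_x z_x$ and $m^* := \inf_x z_x$ are finite and positive. Monotonicity of $g$ combined with (\ref{Tz}) gives
\[
z_x = \prod_{y \in S(x)} g(z_y) \leq g(M^*)^k = f_m(M^*),
\]
which holds for every $x \in V$, and taking the supremum yields $M^* \leq f_m(M^*)$; the symmetric argument produces $m^* \geq f_m(m^*)$.

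I would conclude with a one-dimensional fixed-point analysis. By Lemma \ref{lg}(i)(a) the positive roots of $f_m(z)-z$ are exactly $\{1,z_1,z_2\}$, while $f_m(0) = a^k > 0$ and $f_m(z)-z \to -\infty$ as $z\to\infty$ (since $f_m$ is bounded above by $A^k$). Continuity then forces $f_m(z) - z < 0$ on $(\max\{1,z_1,z_2\}, \infty)$ and $f_m(z)-z > 0$ on $(0, \min\{1,z_1,z_2\})$, so $M^* \leq f_m(M^*)$ implies $M^* \leq \max\{1,z_1,z_2\}$ and $m^* \geq f_m(m^*)$ implies $m^* \geq \min\{1,z_1,z_2\}$, which is (\ref{To}). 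The main subtlety is ensuring that the sign analysis of $f_m(z)-z$ is global; this is precisely where the boundedness of $g$ (and hence of $f_m$) enters, ruling out spurious solutions of the recursion far from the known fixed points $\{1,z_1,z_2\}$, and it also handles the degenerate cases (e.g. $\theta\leq\theta_m$ when only $z=1$ is a fixed point) in which the bound collapses to $z_x\equiv 1$.
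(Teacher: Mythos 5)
Your proof is correct, and for part (1) it is literally the paper's argument: the paper simply points back to the spatially dependent version of the substitution $z_x=s_x\bigl(\tfrac{q-m}{m}\bigr)^{k/(k+1)}$ established below \eqref{transe466}. For part (2) you use exactly the same ingredients as the paper --- the a priori bounds $a^k<z_x<A^k$ and the monotonicity of $g$ from Lemma \ref{lg} --- but you close the argument differently: the paper iterates the bound to get $f_m^n(a)<z_x<f_m^n(A)$ and identifies the monotone limits with the extreme fixed points of $f_m$, whereas you take $M^*=\sup_x z_x$, $m^*=\inf_x z_x$ once, derive $M^*\leq f_m(M^*)$ and $m^*\geq f_m(m^*)$, and conclude from the sign of $f_m(z)-z$ outside $[\min\{1,z_1,z_2\},\max\{1,z_1,z_2\}]$. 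The two finishes are equivalent in substance (both ultimately rest on Lemma \ref{lg}(i)(a) locating all fixed points of $f_m$); yours avoids the convergence step at the cost of the explicit global sign analysis of $f_m(z)-z$, which you justify correctly via $f_m(0^+)=a^k>0$ and the boundedness $f_m<A^k$.
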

\begin{proof}
1. This was proved already below \eqref{transe466} using relations between boundary laws.
One can also give a direct proof, as in \cite[Theorem 2.1.]{Ro}.

2. For $z>0$ by Lemma \ref{lg} we have $a<g(z)<A$.
Using this inequality from (\ref{Tz}) we get
$$a^k< z_x< A^k.$$
 Now we consider the function $g(z)$ on $[a^k, A^k]$ and on this segment, using Lemma \ref{lg}, we get the bounds $g(a^k)<g(z)<g(A^k)$. Then
$$(g(a^k))^k<z_x<(g(A^k))^k.$$
 Iterating this argument we obtain
$$f_m^n(a^k)< z_x< f_m^n(A^k),$$
where $f_m(z)$ is defined in (\ref{rm}), and $f_m^n$ is its $n$th iteration.
It is easy to see that $\max\{1,z_1,z_2\}\leq f_m^n(A^k)$ and
the sequence  $a_n=f_m^n(A^k)$ is decreasing. Indeed, by Lemma \ref{lg} the function
$g$ is increasing, hence $f_m(z)=(g(z))^k$ is an  increasing function too and $f_m$ is bounded
by $A^k$, i.e., $f_m(z)<A^k$. Writing the last inequality for $z=A^k$ we get $a_1<A^k$.
Since $f_m$ is an increasing function  we obtain from the last inequality $a_2=f_m(a_1)< f_m(A^k)=a_1$, i.e.
$a_2<a_1$. From this iterating $f_m$ again we get $a_3<a_2$ and so on.
 Thus the sequence $a_n$ has a limit $\alpha$, with $\alpha\geq \max\{1,z_1,z_2\}$.
This limit point must be a fixed point for $f_m$. But since the function $f_m$ has no fixed point in $(\max\{1,z_1,z_2\},+\infty)$ we get that $\alpha=\max\{1,z_1,z_2\}$.
\end{proof}

Let $\mathcal G(H')$ the set of all SGMs of the model (\ref{H'}).

Denote
\begin{equation}\label{te1}
\theta_m(k)=1+{mx_*^k+q-m\over x_*^{k-1}+x_*^{k-2}+\dots+x_*},
\end{equation}
where $x_*$ is the unique solution (see Step 3 in proof of Theorem 1 \cite{KRK}) of the following equation
\begin{equation}\label{se}
m\sum_{i=1}^{k-1}ix^{2k-i-1}-(q-m)\sum_{i=1}^{k-1}ix^{i-1}=0.
\end{equation}

\begin{thm}\label{tii} For the Ising model (\ref{H'}) on the Cayley tree of order $k\geq 2$ the following statements are true
 \begin{itemize}
    \item[(1)] If $\theta < \theta_m(k)$ then there is unique TISGM $T(\mu_0)$.

    \item[(2)] If $\theta=\theta_m(k)$ or $\theta=\theta_c={q+k-1\over k-1}$ then there are 2 TISGMs $T(\mu_0), T(\mu_1)$.

    \item[(3)] If $\theta_m(k)<\theta\ne \theta_c$ then there are 3 TISGMs $T(\mu_0), T(\mu_1), T(\mu_2)$.
    Moreover, at least two of these measures are extreme in $\mathcal G(H')$.
    \end{itemize}
 \end{thm}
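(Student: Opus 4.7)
The plan is to reduce both parts of the statement to a scalar analysis of the map $f_m$ from (\ref{rm}). By Proposition \ref{bh}(1), TISGMs of $H'$ are in bijection with positive constant solutions of (\ref{Tz}), i.e., with positive fixed points of $f_m$, so assertions (1)--(3) on the number of TISGMs amount to counting $\{z>0 : f_m(z)=z\}$.

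For the count I would set $\psi(z):=g(z)-z^{1/k}$ (whose positive zeros are exactly the fixed points of $f_m$) and study its bifurcations in $\theta$. One always has $\psi(1)=0$, and by Lemma \ref{lg}(i)(c,d) both $g$ and $z^{1/k}$ are smooth, positive, strictly increasing, and strictly concave, with $\psi(0^+)>0$ and $\psi(z)\to-\infty$. A short analysis then shows $\psi$ has at most three zeros, and that only two bifurcations in $\theta$ can occur: a saddle-node tangency at $\theta=\theta_m$ producing a double root $z_1=z_2>1$, and a transcritical crossing at $z=1$ at $\theta=\theta_c$, detected by
\[
\psi'(1)=\frac{\theta-1}{\theta+q-1}-\frac{1}{k}=0.
\]
Combining this with the location of $z_1,z_2$ relative to $1$ given by Lemma \ref{lg}(ii) yields the counts $1$, $2$, $3$ in cases (1), (2), (3) respectively.

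For extremality in case (3) I would invoke the monotonicity principle recalled just before Proposition \ref{bh}: the infinite-volume limits $\nu_{1'},\nu_{2'}$ of $H'$ taken with constant boundary conditions $\phi\equiv 1'$ and $\phi\equiv 2'$ are translation-invariant, extreme in $\mathcal{G}(H')$, and stochastically sandwich every Gibbs measure. Being themselves TISGMs, they correspond respectively to the maximal and minimal elements of $\{1,z_1,z_2\}$ (the boundary law depends monotonically on the boundary condition via (\ref{transe15})). By Lemma \ref{lg}(ii), for $\theta_m<\theta<\theta_c$ these are $z_2$ and $1$, so the extreme pair is $\{T(\mu_2),T(\mu_0)\}$; for $\theta>\theta_c$ they are $z_2$ and $z_1$, so the extreme pair is $\{T(\mu_2),T(\mu_1)\}$. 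In either subcase at least two of $T(\mu_0),T(\mu_1),T(\mu_2)$ are extreme, as claimed.

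The main obstacle is the bifurcation analysis in Step 1 for general $k\geq 2$: confirming that the saddle-node tangency sits exactly at $\theta=\theta_m=1+2\sqrt{m(q-m)}$ (immediate for $k=2$ from the quadratic (\ref{s}) and Proposition \ref{pw}, but for general $k$ requires solving $g(z)=z^{1/k}$ and $g'(z)=z^{1/k-1}/k$ simultaneously and recognising $\theta_m$ in the result) and ruling out any additional zeros of $\psi$; the degenerate case $\theta_m=\theta_c$ (which occurs at most for $m=q/2$, $k=2$) also needs a small extra argument. Once this is settled, the extremality step is an essentially automatic application of the standard monotone-boundary-condition / sandwich argument.
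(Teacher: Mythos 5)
Your proposal is correct in substance and, for the counting part, reproduces content the paper simply imports: the paper identifies TISGMs of $H'$ with positive fixed points of $f_m$ via Proposition \ref{bh}(1) and then reads the counts off from Proposition \ref{pw} and Lemma \ref{lg}(ii) (i.e.\ from \cite{KRK}), rather than redoing the bifurcation analysis of $\psi(z)=g(z)-z^{1/k}$. Your worry about general $k$ is legitimate but is really an imprecision of the paper itself: the theorem is stated for $k\geq 2$ with $\theta_c=\frac{q+k-1}{k-1}$, while $\theta_m=1+2\sqrt{m(q-m)}$ and Lemma \ref{lg}(ii) are the $k=2$ quantities, so the paper too is implicitly relying on the ``up to three solutions'' structure from \cite{KRK} with the appropriate $k$-dependent threshold; you need not (and cannot) recover $1+2\sqrt{m(q-m)}$ as the tangency point for $k>2$. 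Where you genuinely diverge is the extremality step. You use the stochastic sandwich: the limits $\nu_{1'},\nu_{2'}$ with constant boundary conditions dominate, respectively are dominated by, every Gibbs measure, hence are extreme, and they are the TISGMs with maximal and minimal boundary law. The paper instead argues by extreme decomposition: if $T(\mu_2)$ (say, with $z_2=\max\{1,z_1,z_2\}$) were non-extreme, writing $T(\mu_2)=\int T(\mu)(\omega)\,\nu(d\omega)$ forces $z_2=\int z_x(\omega)\,\nu(d\omega)$, and since Proposition \ref{bh}(2) confines every boundary law to $[\min\{1,z_1,z_2\},\max\{1,z_1,z_2\}]$, the maximality of $z_2$ forces $z_x(\omega)=z_2$ a.s., a contradiction; the same applies to the minimal solution. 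The two routes rest on the same monotonicity facts recalled before Proposition \ref{bh}; yours additionally requires identifying which boundary condition produces which branch of the boundary law, whereas the paper's version needs only the a priori bounds of Proposition \ref{bh}(2), which makes it slightly more economical. Your identification of the extreme pair ($\{T(\mu_0),T(\mu_2)\}$ for $\theta_m<\theta<\theta_c$, $\{T(\mu_1),T(\mu_2)\}$ for $\theta>\theta_c$) is consistent with Lemma \ref{lg}(ii).
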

 \begin{proof}\footnote{This theorem is a modification of well-known Theorem 12.31 of \cite{Ge}.} The measures $T(\mu_i)$, $i=0,1,2$ correspond to the solutions $z_0=1$, $z_1$ and $z_2$ of (\ref{rm}).

 (1) In case $\theta < \theta_m(k)$ the equation (\ref{rm}) has the unique solution $z=z_0=1$. Consequently, by part 2 of Proposition \ref{bh}
 we obtain that $z_x\equiv 1$ is a unique solution of (\ref{Tz}). Thus $\mathcal G(H')=\{T(\mu_0)\}$.

 (2) If $\theta=\theta_m(k)$ then $z_1=z_2$, i.e. there are only two TISGMs $T(\mu_0), T(\mu_1)$; if $\theta=\theta_c={q+k-1\over k-1}$ then $z_0=z_1$
 and there are again  two TISGMs.

 (3) If $\theta_m(k)<\theta\ne \theta_c$ then all solutions $z_0=1$, $z_1$ and $z_2$ of (\ref{rm}) are distinct. The extremality of two of these measures can be deduced using the minimality and maximality of the corresponding values
of solutions. Assume that $z_2$ is the maximal solution and $T(\mu_2)$ is non-extreme,
with a decomposition
$$T(\mu_2) =\int T(\mu)(\omega)\nu(d\omega).$$

Then for any vertex $x\in V$ we have
\begin{equation}\label{em}
z_2 =\int z_x(\omega)\nu(d\omega).
\end{equation}

By (\ref{To}) and Lemma \ref{lg} $z_2$ is an extreme point in the set of all solutions $z_x$,
and so (\ref{em}) holds if $z_x(\omega) =z_2$ for almost all $\omega$. Hence, $T(\mu_2)$ is extreme.
  \end{proof}

\section{Conditions for non-extremality}

Let us continue with the investigation of the coarse-grained chains $T(\mu)$ with
a focus on criteria for non-extremality which are based on properties of the corresponding
$2\times 2$-transition matrix as it depends on coupling strength of the  Potts model,
the block-size $m$ and the choice of a  branch of the boundary law.

In the case of three solutions, we denote the middle solution by $z_{\rm mid}$, which is unique element of the following set
$$\{1,z_1,z_2\}\setminus \{\min\{1,z_1,z_2\}, \max\{1,z_1,z_2\}\}.$$
By Lemma \ref{lg}, for $2m<q$ we have $z_{\rm mid}=\max\{1,z_1\}$.

Let $T(\mu_{\rm mid})$ be the TISGM which corresponds to $z_{\rm mid}$.
Recall that $\mu_i$ is the TISGM of the Potts model which corresponds to the boundary law $z_i(m)$, for the branches
$i=1,2$.

Define the following numbers:
$$\theta_0=1+q+2\sqrt{2m(q-m)}, \ \ \hat\theta_0=1+(\sqrt{2}+1)q,$$
\begin{equation}\label{t*}
\widehat\theta=(\sqrt{2}-1)q+2m+1, \ \ \theta^*=1+(\sqrt{2}+1)q-2m.
\end{equation}

\begin{thm}\label{tne} Let $k=2$, $2m<q$. Then the following statements hold.
\begin{itemize}
 \item[(i)] If $\theta\geq q+1$ then $T(\mu_{\rm mid})=T(\mu_0)$ and
$$T(\mu_0) \hbox{ is } \left\{\begin{array}{ll}
{\rm extreme}, \ \ \mbox{if} \ \ \theta\leq \theta_0\\[3mm]
{\rm non-extreme}, \ \ \mbox{if} \ \ \theta>\hat\theta_0;
\end{array}\right.
$$
\item[(ii)] Assume one of the following conditions is satisfied:

\begin{itemize}
\item[a)] $2\leq m\leq q/7$ and $\theta\in [\theta_m, \widehat\theta)$;

\item[b)] $\theta\in (\theta^*,+\infty)$.
\end{itemize}

Then  $\mu_1(\theta,m)$ is non-extreme.

\item[(iii)] Assume one of the following conditions is satisfied:

\begin{itemize}
\item[c)] $2\leq m\leq q/7$ and $\theta\geq \theta_m$;

\item[d)]  $q<7m$, $m\geq 2$ and $\theta\in (\widehat\theta, +\infty)$.

\end{itemize}

Then  $\mu_2(\theta,m)$ is non-extreme.(See Fig.\ref{fn2}-\ref{fn4})

\item[(iv)] If $\theta_m\leq \theta<q+1$ then $T(\mu_{\rm mid})=T(\mu_1)$. Moreover, if $q>85$ then for each $m\leq [{q\over 85}]$ there are two critical values $ \bar\theta, \, \bar{\bar\theta}\in (\theta_m, q+1)$, with $\bar\theta<\bar{\bar\theta}$ such that $T(\mu_1)$ is non-extreme if $\theta\in (\bar\theta, \bar{\bar\theta})$. Moreover, $\bar\theta, \bar{\bar\theta}$ are solutions to
    \begin{equation}\label{ke}
(\theta-1)^3-(\sqrt{2}-1)q(\theta-1)^2-2(2\sqrt{2}-1)m(q-m)(\theta-1)+2qm(q-m)=0.
\end{equation}
\end{itemize}
\end{thm}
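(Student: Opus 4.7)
The plan is to apply the Kesten--Stigum (K--S) non-extremality criterion in two different guises throughout the four parts: $2\widehat\lambda^2>1$ with $\widehat\lambda=\max\{a,b,|\lambda_2(\hat{\mathbb P})|\}$ will yield non-extremality of the Potts chain $\mu$, and $2|\lambda_2(\hat{\mathbb P})|^2>1$ will yield non-extremality of the coarse-grained image $T(\mu)$ and hence of $\mu$ by the observation at the end of Section~3. First I would simplify the relevant eigenvalues using the fixed-point equation. Writing $x=\sqrt{z}$, the factorisation $(x-1)[mx^2-(\theta-1)x+(q-m)]=0$ implies, at any non-trivial fixed point, $Z_1=(\theta-1)x(x+1)$ and $Z_2=(\theta-1)(x+1)$. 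Substituting into the formulas of the preceding corollary yields the clean identities
\begin{equation*}
a=\frac{x}{x+1},\qquad b=\frac{1}{x+1},\qquad \lambda_2(\hat{\mathbb P})=\frac{mx+(q-m)}{(\theta-1)(x+1)}.
\end{equation*}
In particular $a+b=1$, and by (\ref{kl}) the dominant eigenvalue is $\widehat\lambda=a$ when $x>1$, $\widehat\lambda=b$ when $x<1$.

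Parts (ii) and (iii) are then pure algebra. The conditions $a>1/\sqrt{2}$ and $b>1/\sqrt{2}$ translate into $x>\sqrt{2}+1$ and $x<\sqrt{2}-1$ respectively, which in turn are sign conditions on $\phi(\sqrt{2}\pm 1)$, where $\phi(y):=my^2-(\theta-1)y+(q-m)$ is the polynomial with roots $x_1\leq x_2$. Each sign condition is linear in $\theta$ and solves to exactly the threshold $\widehat\theta$ or $\theta^*$ of (\ref{t*}). The four sub-cases (a)--(d) are then distinguished by deciding which of $x_1,x_2$ is being compared against $\sqrt{2}\pm 1$; the bookkeeping rests on the fact that at $\theta=\theta_m$ the two roots coincide at $\sqrt{(q-m)/m}$, which exceeds $\sqrt{2}+1$ precisely when $m\leq q/(4+2\sqrt{2})$, essentially $m\leq q/7$. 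For the non-extremality half of (i) I would run the same calculation at $z_0=1$: there $\lambda_2(\hat{\mathbb P})=(\theta-1)/(\theta+q-1)$, and $\sqrt{2}\lambda_2>1$ rearranges to $\theta>\hat\theta_0$. The extremality half of (i) is not a K--S statement; for it I would invoke an MSW-type sufficient extremality criterion (in the spirit of Section~5) applied to the $2\times 2$ matrix $\hat{\mathbb P}$ at $z=1$, whose specialisation rearranges to the inequality $(\theta-1-q)^2\leq 8m(q-m)$, equivalent to $\theta\leq\theta_0$.

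For part (iv), the identification $z_{\rm mid}=z_1$ on $[\theta_m,q+1)$ is immediate from Lemma~\ref{lg}. To locate the non-extremality interval for $T(\mu_1)$ I would set the fuzzy K--S critical equation $\sqrt{2}\lambda_2(\hat{\mathbb P})=1$ using the identity above and eliminate $x$ via $mx^2=(\theta-1)x-(q-m)$; after a routine expansion this produces exactly the cubic (\ref{ke}) in $u=\theta-1$,
\begin{equation*}
P(u)=u^3-(\sqrt{2}-1)qu^2-2(2\sqrt{2}-1)m(q-m)u+2mq(q-m).
\end{equation*}
The main obstacle will be to prove that under $q\geq 85$ and $m\leq q/85$ this cubic has two real roots inside $(\theta_m-1,\,q)$. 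A direct evaluation shows $\lambda_2=1/2<1/\sqrt{2}$ at both endpoints (so $P(\theta_m-1)>0$ and $P(q)>0$), while $P(0)=2mq(q-m)>0$ together with $P(-\infty)=-\infty$ forces one root to be negative; the remaining two roots, if real, must therefore lie inside $(\theta_m-1,q)$. To produce them one locates the positive critical point of $P$ and verifies that $P$ attains a strictly negative value at the local minimum; it is this quantitative inequality that forces the constant $85$. Once the two roots $\bar\theta<\overline{\bar\theta}$ are exhibited, on the open interval $(\bar\theta,\overline{\bar\theta})$ one has $\lambda_2(\hat{\mathbb P})>1/\sqrt{2}$, so the fuzzy K--S criterion triggers and $T(\mu_1)$ is non-extreme.
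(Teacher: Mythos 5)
Your strategy coincides with the paper's: the Kesten--Stigum criterion applied to the full $q\times q$ chain via the dominant sub-unit eigenvalue ($a$ or $b$, selected by (\ref{kl})) gives parts (ii)--(iii); Kesten--Stigum applied to the coarse-grained $2\times2$ chain gives the non-extremality claims in (i) and (iv); and Martin's binary criterion gives the extremality half of (i) --- your inequality $(\theta-1-q)^2\le 8m(q-m)$ is indeed exactly what that criterion reduces to at $z=1$. What differs is the execution. Your identities $Z_1=(\theta-1)x(x+1)$, $Z_2=(\theta-1)(x+1)$, hence $a=x/(x+1)$, $b=1/(x+1)$, $\lambda_2(\hat{\mathbb P})=(mx+q-m)/\bigl((\theta-1)(x+1)\bigr)$, are correct, and reducing each threshold to a sign condition on $\phi(\sqrt{2}\pm1)$ does reproduce $\widehat\theta$ and $\theta^*$ of (\ref{t*}) and, after eliminating $x$ via $mx^2=(\theta-1)x-(q-m)$, exactly the cubic (\ref{ke}); this replaces the paper's monotonicity analysis of the functions $\gamma_i,\xi_i,\eta_1$ and its ``long computations'' by cleaner algebra. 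One point to make explicit in (ii)a): $\phi(\sqrt{2}+1)>0$ alone does not place the \emph{smaller} root $x_1$ above $\sqrt{2}+1$; you also need the vertex $(\theta-1)/(2m)>\sqrt{2}+1$ throughout the range, which does hold for $\theta\ge\theta_m$ when $q\ge 7m$.

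The one genuine soft spot is in (iv). From $P(\theta_m-1)>0$, $P(0)>0$, $P(q)>0$ and the existence of a negative root it does \emph{not} follow that the two remaining roots, if real, lie in $(\theta_m-1,q)$: a priori they could both sit in $(0,\theta_m-1)$ or both in $(q,\infty)$. You must additionally show that the positive critical point of $P$ lies in $(\theta_m-1,q)$ (equivalently, argue via the maximum of $\eta_1=\sqrt{2}\lambda_2-1$ on $(\theta_m,q+1)$, as the paper does with Rolle's theorem) before negativity of the local minimum places the roots in the correct interval; and since the elimination of $x$ can introduce extraneous roots, the equivalence between $\lambda_2>1/\sqrt{2}$ on the $x_1$-branch and $P<0$ needs a short sign check. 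The remaining quantitative step --- that the local minimum is negative precisely under $q\ge 85$, $m\le[q/85]$ --- is left unexecuted in your sketch, but the paper likewise delegates it to a ``computer analysis,'' so you are not below its standard there.
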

\begin{figure}
\includegraphics[width=8cm]{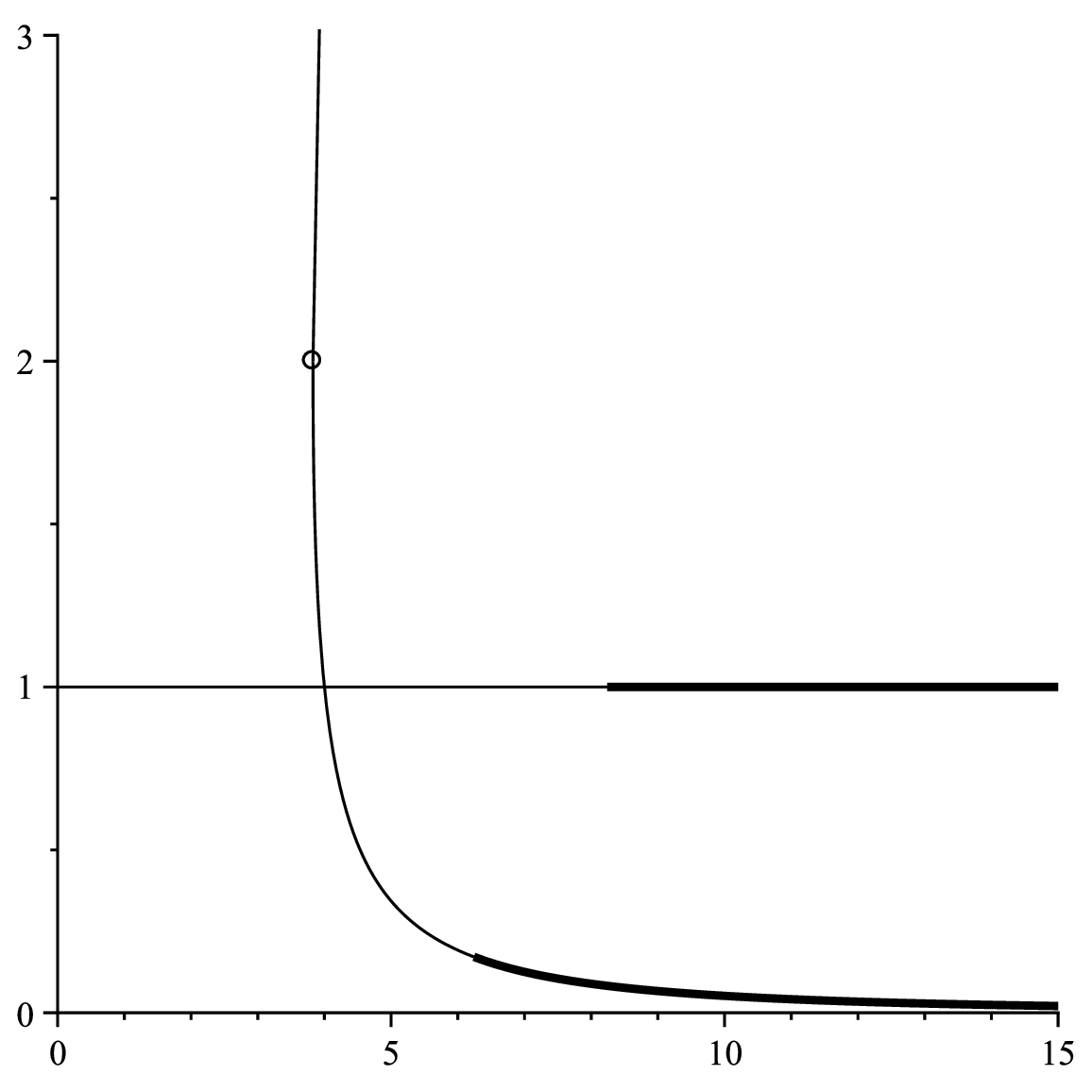}
\caption{ The graphs of the functions $z_i=z_i(m,\theta)$, $i=1,2$, for $q=3$ and $m=1$. The circle dot having coordinate $(\theta_1, 2)$ separates graph of $z_1$ from graph of $z_2$. The bold curves correspond to regions of solutions where the corresponding TISGM is non-extreme. This figure corresponds to Part (ii), b) of Theorem \ref{tne}.}\label{fn2}
\includegraphics[width=8cm]{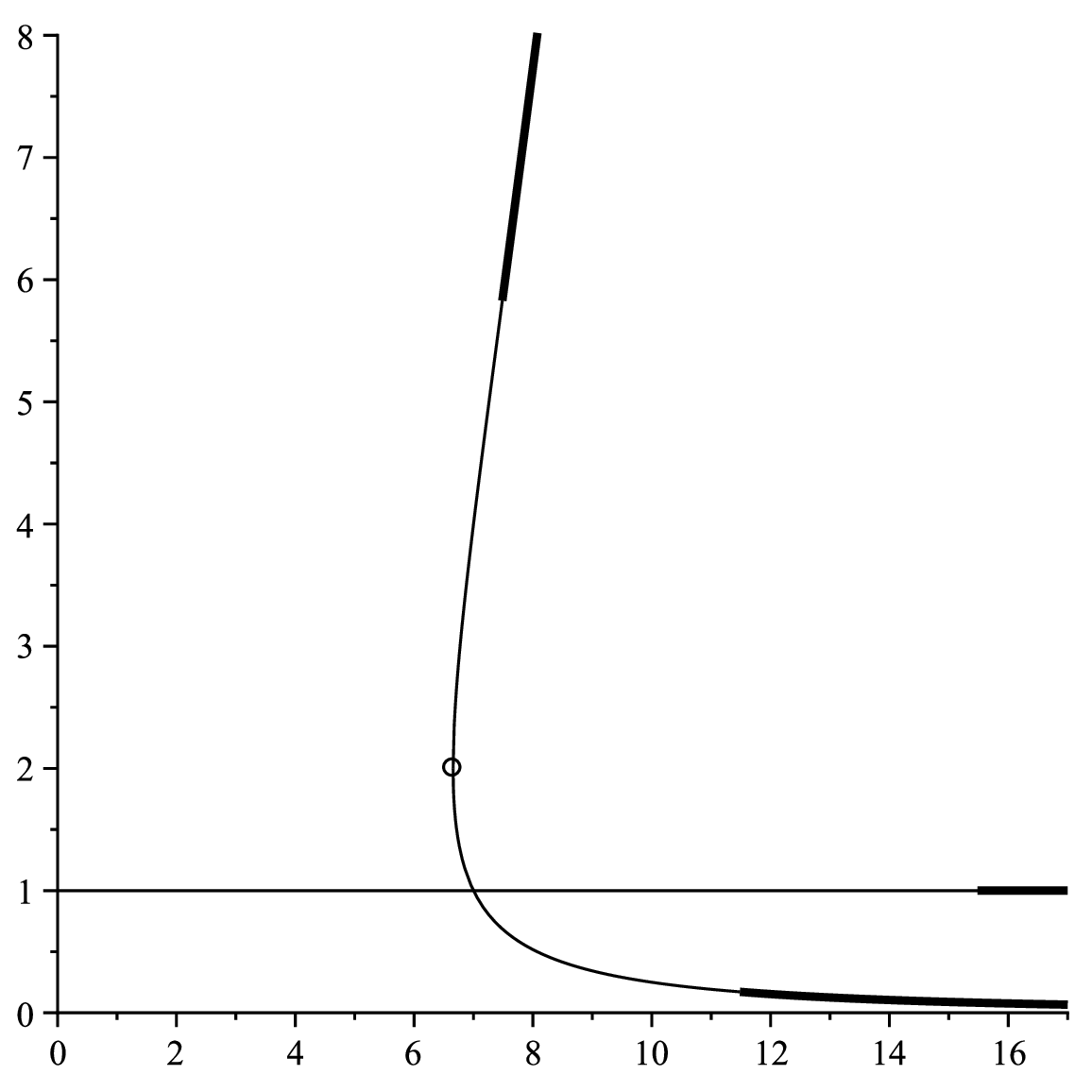}
\caption{The graphs of the functions $z_i=z_i(m,\theta)$, $i=1,2$, for $q=6$ and $m=2$. The circle dot having coordinate $(\theta_2, 2)$ separates graph of $z_1$ from graph of $z_2$. The bold lines correspond to regions of solutions where corresponding TISGM is non-extreme. The upper bold curve corresponds to part (iii), d) and the lower bold curve corresponds to part (ii), b) of Theorem \ref{tne}.}\label{fn3}
\end{figure}
 \begin{figure}
\includegraphics[width=7cm]{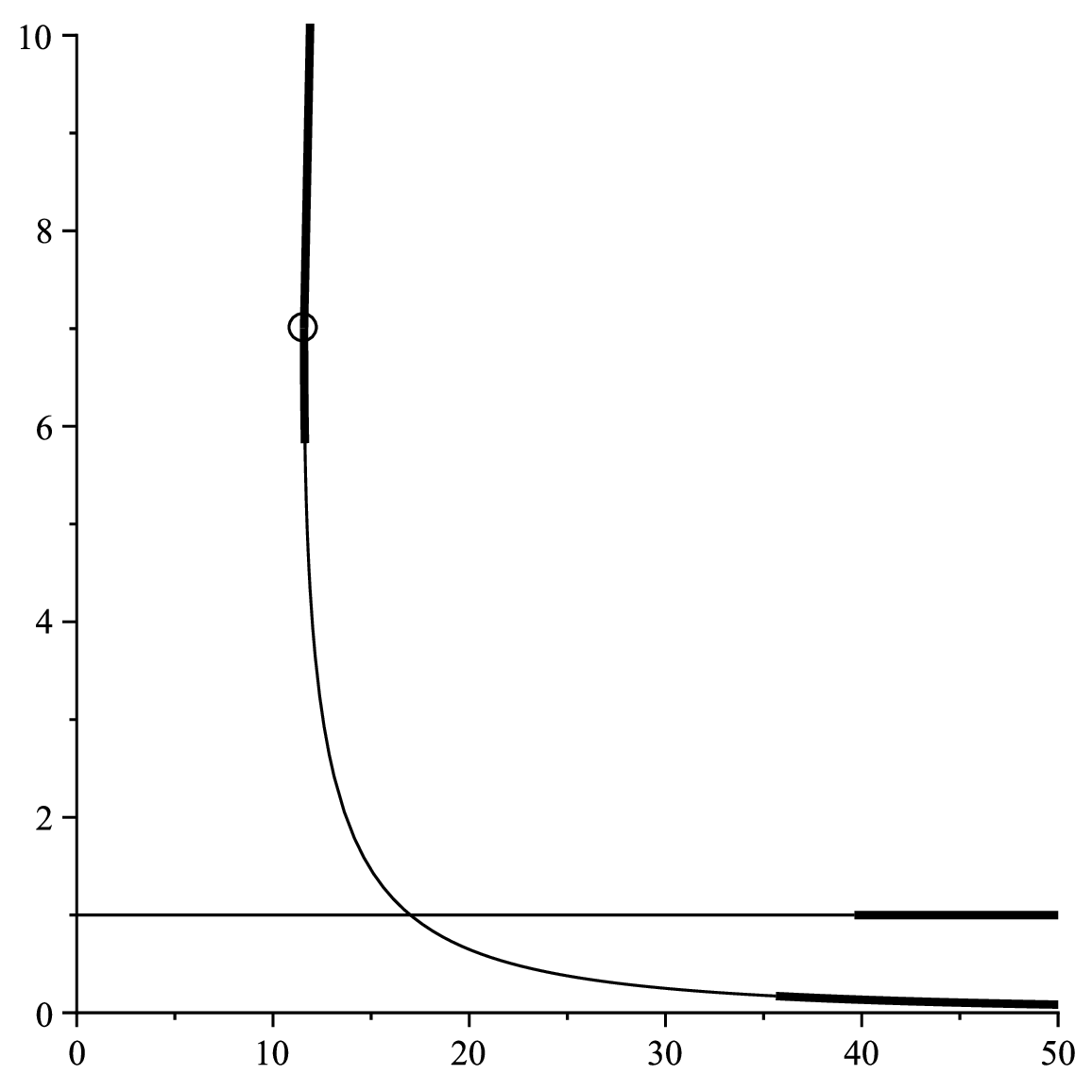}
\caption{The graphs of the functions $z_i=z_i(m,\theta)$, $i=1,2$, for $q=16$ and $m=2$. The circle dot having coordinate $(\theta_2, 7)$ separates graph of $z_1$ from graph of $z_2$. The bold lines correspond to regions of solutions where corresponding TISGM is non-extreme. The bold $z_2$ corresponds to Part (iii), c); Bold parts of $z_1$ correspond to Part (ii), a) (upper bold curve) and Part (ii), b) (lower bold curve) of Theorem \ref{tne}.}\label{fn4}
\end{figure}

\begin{proof} (i) The equality $T(\mu_{\rm mid})=T(\mu_0)$ follows from the second part of Lemma \ref{lg}. To check the extremality we apply arguments used for the reconstruction on trees \cite{FK}, \cite{Ke}, \cite{Mar}, \cite{Mos2}, \cite{Mos}. Consider Markov chains
with states $\{1',2'\}$ and transition probabilities $p_{ij}$ defined by (\ref{TP}).
It is known that a sufficient condition for non-extremality (which is equivalent to solvability of the associated reconstruction) of a Gibbs measure $T(\mu)$ corresponding
to the matrix $T({\mathbb P})$ is that
$k\lambda^2_2>1$, where $\lambda_2$ is the second largest (in absolute value) eigenvalue of $T({\mathbb P})$ \cite{Ke}.
On the other hand, Martin in [3] gives the following condition for extremality (non-reconstructibility)
$$k\left(\sqrt{T({\mathbb P})_{11}T({\mathbb P})_{22}}-\sqrt{T({\mathbb P})_{12}T({\mathbb P})_{21}}\right)^2\leq 1.$$

 In case $z=1$ the matrix $T({\mathbb P})$ is
\begin{equation}\label{ce}
T({\mathbb P})={1\over \theta+q-1}\left(\begin{array}{cc}
\theta+m-1 & q-m\\[2mm]
m & \theta+q-m-1
\end{array}\right).
\end{equation}

Simple calculations show that above-mentioned conditions of extremality for $T(\mu_0)$
(i.e. for the matrix (\ref{ce})) are equivalent to the conditions on $\theta$ as mentioned in theorem.

(ii) {\bf Case $m\geq 2$.}

{\it Subcase}: $\theta_m\leq\theta<q+1$. In this case by Lemma \ref{lg} we have $z_1>1$.
By (\ref{kl}) (which requires $m\geq 2$) and Corollary \ref{ca} we have the second largest (in absolute value) eigenvalue of $\mathbb P$
 is $a$ so to prove non-extremality we should check the Kesten-Stigum condition $2a^2>1$.
Since $a>0$ the last inequality is equivalent to $\sqrt{2}a>1$.
Denote
$$\gamma_1(\theta)=\sqrt{2}a-1=\sqrt{2}{(\theta-1)z_1\over (\theta+m-1)z_1+q-m}-1,$$
where
 $$z_1=\left({\theta-1-\sqrt{(\theta-1)^2-4m(q-m)}\over 2m}\right)^2.$$
 We have
 $$\gamma_1'(\theta)={\sqrt{2}(\theta-1)z_1\over ((\theta+m-1)z_1+q-m)^2}\left(\sqrt{z_1}-{2(q-m)\over \sqrt{(\theta-1)^2-4m(q-m)}}\right)$$
 $$=-\,{\sqrt{2z_1}(\theta-1)^2z_1\over ((\theta+m-1)z_1+q-m)^2\sqrt{(\theta-1)^2-4m(q-m)}}.$$
 Thus for any $\theta\geq \theta_m$ we
have $\gamma_1'(\theta)<0$. Hence $\gamma_1$ is a decreasing function of $\theta$. So to have $\sqrt{2}a>1$ it is necessary
that
$$\gamma_1(\theta_m)=\sqrt{2}{\sqrt{q-m}\over \sqrt{q-m}+\sqrt{m}}-1>0,$$ which is satisfied for $q\geq 7m$. Moreover, we have $\gamma_1(q+1)=-({\sqrt{2}-1\over\sqrt{2}})<0$. Consequently, there exists a unique $\theta=\widehat\theta$ such that $\gamma_1(\widehat\theta)=0$. To find $\widehat\theta$ we solve  $\gamma_1(\widehat\theta)=0$  and by long (but simple) computations, using the explicit formula for $\gamma_1(\theta)$ and the expression for $z_1$ we get
$$\widehat\theta=(\sqrt{2}-1)q+2m+1.$$

Thus we proved that $2a^2>1$ iff $\theta<\widehat\theta$.

{\it Subcase:} $\theta\geq q+1$. In this case by Lemma \ref{lg} we have $z_1<1$ so by (\ref{kl}) and Corollary \ref{ca} we must check $2b^2>1$.
Define
$$\xi_1(\theta)=\sqrt{2}b-1={(\theta-1)\sqrt{2z_1}\over (\theta+m-1)z_1+q-m}-1,$$
 We have
 $$\xi_1'(\theta)={\sqrt{2z_1}(\theta-1)^2z_1\over ((\theta+m-1)z_1+q-m)^2\sqrt{(\theta-1)^2-4m(q-m)}}>0.$$
  Hence $\xi_1$ is an increasing function of $\theta$.  Moreover, we have $\xi_1(q+1)=\gamma_1(q+1)=-({\sqrt{2}-1\over\sqrt{2}})<0$.
Since $\xi_1(\theta)$ is increasing it is bounded from above by its
limit for  $\theta\to +\infty$. So we compute
$$\lim_{\theta\to\infty}\xi_1(\theta)=-1+\sqrt{2}\lim_{\theta\to\infty}{(\theta-1)\sqrt{z_1}\over (\theta+m-1)z_1+q-m}$$
\begin{equation}\label{lim}
=-1+\sqrt{2}\lim_{\theta\to\infty}{1\over \left({\theta+m-1\over \theta-1}\right)\sqrt{z_1}+{q-m\over (\theta-1)\sqrt{z_1}}}.\end{equation}
Using formula (\ref{s}) we get
$$\lim_{\theta\to\infty}\sqrt{z_1}=\lim_{\theta\to\infty}{2(q-m)\over \theta-1+\sqrt{(\theta-1)^2-4m(q-m)}}=0.$$
$$\lim_{\theta\to\infty}(\theta-1)\sqrt{z_1}=\lim_{\theta\to\infty}{2(q-m)\over 1+\sqrt{1-{4m(q-m)\over (\theta-1)^2}}}=q-m.$$
Using these formulas we get from (\ref{lim})
$$\lim_{\theta\to\infty}\xi_1(\theta)=\sqrt{2}-1>0.$$
  Consequently, there exists a unique $\theta=\theta^*$ such that $\xi_1(\theta^*)=0$. To find $\theta^*$ we solve  $\xi_1(\theta^*)=0$ and by simple computations using the
  explicit formula of $\xi_1(\theta)$ we get
$$\theta^*=1+(\sqrt{2}+1)q-2m.$$

Thus we proved that $2b^2>1$ if $\theta>\theta^*$.

{\bf Case $m=1$.} In this case, as mentioned before (\ref{kl}) the second largest eigenvalue is $b$,
therefore we have to check $\xi_1(\theta)=\sqrt{2}b-1>0$. We note that $\xi_1'(\theta)>0$ for each $\theta\geq \theta_m$, and
$$\xi_1(\theta_m)={\sqrt{2m}\over \sqrt{q-m}+\sqrt{m}}-1<0.$$
Hence again we have $\xi_1(\theta)>0$ iff $\theta\in (\theta^*,+\infty)$.

(iii) {\bf Case $m\geq 2$.} Since $z_2>1$ for any $\theta\geq \theta_m$, using (\ref{kl}) and Corollary \ref{ca} we shall check only $2a^2>1$. Denote
$$\gamma_2(\theta)=\sqrt{2}a-1=\sqrt{2}{(\theta-1)z_2\over (\theta+m-1)z_2+q-m}-1,$$
where
 $$z_2=\left({\theta-1+\sqrt{(\theta-1)^2-4m(q-m)}\over 2m}\right)^2.$$
 We have
 $$\gamma_2'(\theta)={\sqrt{2}(\theta-1)z_2\over ((\theta+m-1)z_2+q-m)^2}\left(\sqrt{z_2}+{2(q-m)\over \sqrt{(\theta-1)^2-4m(q-m)}}\right)>0.$$
 Thus for any $\theta\geq \theta_m$ we
have $\gamma_2'(\theta)>0$. Hence $\gamma_2$ is an increasing function of $\theta$. For $q\geq 7m$ we have
$$\gamma_2(\theta_m)=\gamma_1(\theta_m)>0.$$ Hence $\gamma_2(\theta)>0$ for all $\theta>\theta_m$. Consequently, $\mu_2$ is non-extreme.

In case $q\leq 6$ or $q\geq 7$ with $[q/7]<m$ we have $\gamma_1(\theta_m)=\gamma_2(\theta_m)<0$.  Since $\gamma_2$ is an increasing function $\gamma_2(\theta)=0$ has a unique solution, which is equal to $\widehat \theta$. Thus $\gamma_2(\theta)>0$ for all $\theta>\widehat\theta$.

{\bf Case $m=1$.} Define
$$\xi_2(\theta)=\sqrt{2}b-1={(\theta-1)\sqrt{2z_2}\over \theta z_2+q-1}-1.$$
 We have
 $$\xi_2'(\theta)=-{\sqrt{2z_2}(\theta-1)^2z_2\over (\theta z_2+q-1)^2\sqrt{(\theta-1)^2-4(q-1)}}<0.$$
  Hence $\xi_2$ is a decreasing function of $\theta$.  Moreover, we have
  $$\xi_2(\theta_m)=\xi_1(\theta_m)=-(1-{\sqrt{2}\over\sqrt{q-1}+1})<0.$$
  Thus $\xi_2(\theta)<0$ for all $\theta\geq \theta_m$.

(iv) In this case the matrix $T({\mathbb P})$ is
\begin{equation}\label{ce1}
T({\mathbb P})=\left(\begin{array}{cc}
{(\theta+m-1)z\over Z_1}  & {q-m\over Z_1}\\[3mm]
{mz\over Z_2}  & {\theta+q-m-1\over Z_2}
\end{array}\right),
\end{equation}
which has eigenvalues $1$ and $\lambda_2(\hat{\mathbb P})$ given in (\ref{abl}).
It is easy to check that
$$\theta-1-(\sqrt{z_i}-1)m\geq 0, \ \ i=1,2,$$ i.e. $\lambda_2(\hat{\mathbb P})>0.$

We define the following function
$$\eta_1(\theta)=\sqrt{2}{(\theta-1-(\sqrt{z_1}-1)m)z_1\over (\theta+m-1)z_1+q-m}-1.$$
We have
$$\eta_1(\theta_m)=\eta_1(q+1)=-{\sqrt{2}-1\over \sqrt{2}}<0.$$

Using the explicit formulas for $z_1$ (given in Proposition \ref{pw}) we note that  $\eta_1(\theta)>0$ iff
\begin{equation}\label{ke1}
\psi(\theta)=(\theta-1)^3-(\sqrt{2}-1)q(\theta-1)^2-2(2\sqrt{2}-1)m(q-m)(\theta-1)+2qm(q-m)<0.
\end{equation}
It is well known (see \cite[p.28]{Pra})\footnote{This is known as the Descartes rule: The number of positive roots of the polynomial $p(x)=a_0x^n+a_1x^{n-1}+\dots+a_n$ does not exceed the number of sign changes in the sequence $a_0, a_1,\dots,a_n$.} that the number of positive
roots of a polynomial does not exceed the number of sign
changes of its coefficients. Using this fact one can see that the
equation $\psi(\theta)=0$ has up to two positive roots denoted by $\bar\theta, \bar{\bar\theta}$.
Moreover $\psi'(\theta)=0$ has the unique positive solution
$$\theta_0=1+{q\over 3}(\sqrt{2}-1)+{1\over 3}\sqrt{(3-2\sqrt{2})q^2+6(2\sqrt{2}-1)m(q-m)}.$$
Since $\theta>1$, $\psi(1)>0$ and $\psi(\infty)>0$ we have that at $\theta_0$
the function $\psi(\theta)$ has minimum. Therefore to have
non-empty set of solutions for $\psi(\theta)<0$ it is necessary $\psi(\theta_0)<0$.
Maple solved $\psi(\theta_0)=0$ with respect to $q$ and gave explicitly four solutions:
$$q=m,\ \ q=2m, \ \ q=m_1, \ \ q=m_2,$$
where $m_1$ is negative and $m_2$ is a linear function of $m$, i.e.,
$$m_2=\frac{14\left(14308\sqrt{2}+19845+288\sqrt{2}\sqrt{1055+746\sqrt{2}}+452\sqrt{1055+746\sqrt{2}}\right)}
{4015+4768\sqrt{2}-1127\sqrt{1055+746\sqrt{2}}+833\sqrt{2}\sqrt{1055+746\sqrt{2}}}m.$$
We note that $m_2\approx 85m$. By our assumption $1\leq m\leq q/2$, the condition
$\psi(\theta_0)<0$ is equivalent to $q>m_2$. The last inequality holds for some $m$ (with $1\leq m\leq q/2$)
 if $q>85$.
 This completes the proof.
\end{proof}
\begin{rk}
In the case $q=7$ the condition $q\geq 7m$ leaves only the value $m=1$. Then $\theta_1=1+2\sqrt{6}\approx 5.898979$ and
$\widehat\theta=3+7(\sqrt{2}-1)\approx 5.899494$. Hence $\widehat\theta-\theta_1\approx 0.00051$.
\end{rk}

\section{Conditions for extremality}

We turn our attention to sufficient conditions for extremality (or non-reconstructability in information-theoretic language)
of the full chains of the Potts model,
depending on coupling strength parameterized by $\theta$, the block size $m$ and the branch of the boundary
law $z$.

Recall that by $\mu_i(\theta,m)$ we denote the TISGM which corresponds to
the values of the boundary law $z_i(\theta,m)$, $i=1,2$, which is a solution to (\ref{rm}).\\

{\bf 5.1. Main results of this section.} Let us first give all main results of this section.

Recall $\theta_1=1+2\sqrt{q-1}$ (see (\ref{tm})) and $\theta^*=1+(\sqrt{2}+1)q-2m$ (see (\ref{t*})).

\begin{thm}\label{t1} If $k=2$, $m=1$ then the following is true.
\begin{itemize}
\item[(a)] \begin{itemize}
\item If $q=3,4,\dots,16$ then there exists $\theta^{**}$ such that $\theta_c=q+1<\theta^{**}<\theta^*$ and the
measure $\mu_1(\theta,1)$ is extreme for any $\theta\in [\theta_1, \theta^{**})$.
Moreover $\theta^{**}$ is the unique positive solution of the following equation
$$\theta^3-(q-3)\theta^2-(2q-7)\theta-(q+5)=0.$$
\item If $q\geq 17$ then there are $\bar\theta_1, \bar\theta_2\in (\theta_1,\theta_c)$ such that $\bar\theta_1<\bar\theta_2$ and
the measure $\mu_1(\theta,1)$ is extreme for any $\theta\in [\theta_1, \bar\theta_1)\cup (\bar\theta_2,\theta^{**})$.
Moreover $\bar\theta_1, \bar\theta_2$ are positive solutions of the following equation
$$\theta^3-(q-1)\theta^2-(2q-3)\theta+(4q^2-13q+11)=0.$$
\end{itemize}
\item[(b)] The measure $\mu_2(\theta,1)$ is extreme for any $\theta\geq  \theta_1$, $q\geq 2$. (see Fig.\ref{fn5})
\end{itemize}
\end{thm}

\begin{rk} We are considering the solution of the form $(\underbrace{z,z,\dots,z}_m,1,\dots,1)$, but of course
 the results are true for any permutation of the coordinates of this solution. We note that $\mu_2(\theta,1)$ corresponds to $m=1$, by the permutations we get $q$ distinct measures similar to $\mu_2(\theta,1)$. These measures coincide with the well-known TISGMs considered in \cite{Ga8}, because in \cite{Ga8} the author only used the maximal solutions $z_2$. Moreover in \cite{Ga8} it was proved that these measures are extreme as soon as they exist. Our proof (that is the proof of the part (b) of Theorem \ref{t1}) is an alternative to the proof of \cite{Ga8}. Since the solutions $z_1=z_1(\theta,q,m)$ for $m\geq 1$ and $z_2=z_2(\theta,q,m)$ for $m\geq 2$ were not known before, the extreme measures
 $\mu_1(\theta,m)$, $\mu_2(\theta,m)$  corresponding to them which we
 constructed in our paper, are new.
\end{rk}

The following theorem corresponds to the case: $m\geq 2$. From condition $2\leq m\leq [q/2]$ it follows that $q\geq 4$.
\begin{thm}\label{t3} \begin{itemize} Let $k=2$.
\item[(i)] If $m=2$ then the following is true.
\begin{itemize}
\item For each $q=4,5,6,7,8$ there exists $\breve\theta>\theta_c=q+1$ such that the measure $\mu_1(\theta,2)$ is extreme for any $\theta\in [\theta_2, \breve\theta)$.
\item For each $q\geq 9$ there exists $\theta^\dag\in (\theta_2,q+1)$ such that the measure $\mu_1(\theta,2)$ is extreme for any $\theta\in [\theta^\dag, \breve\theta)$, where $\theta^\dag=\theta^\dag(q)$ is the unique solution of
    $$ \theta^3-(q+3)\theta^2+(6q-17)\theta-(9q-19)=0$$ and $\breve\theta=\breve\theta(q)$ is the unique solution of
    $$\theta^3-(q+3)\theta^2-(2q-15)\theta-(q+13)=0.$$
    \end{itemize}
\item[(ii)] If $m=2$ then for each $q=4,5,6,7,8$ there exists $\grave{\theta}=\grave{\theta}(q)$ such that $\theta_2<\grave{\theta}\leq q+1$ and $\mu_2(\theta,2)$ is extreme for $\theta\in [\theta_2, \grave\theta)$ (see Fig.\ref{fn7}).
\item[(iii)] If $q<{m+1\over 2m}\left[3m+1+\sqrt{m^2+6m+1}\right]$ and $m\geq 2$ then the measure $\mu_1(\theta_m,m)=\mu_2(\theta_m,m)$ is extreme.
\end{itemize}
\end{thm}
The following theorems are obtained by a continuity argument.
\begin{thm}\label{th2}\begin{itemize} Let $k=2$.
 \item[a)] For each $m\leq [q/2]$ there exists a neighborhood $U_m(\theta_c)$ of $\theta_c$ such that the
measure $\mu_1(\theta,m)$ is extreme if $\theta\in U_m(\theta_c)$ (see Fig.\ref{fn9}).

\item[b)] If $m=1$ or condition (iii) of Theorem \ref{t3} is satisfied then there exists a neighborhood $V_m(\theta_m)$ of $\theta_m$ such that
measures $\mu_i(\theta,m),$ $i=1,2$ are extreme if $\theta\in V_m(\theta_m)$.
\end{itemize}
\end{thm}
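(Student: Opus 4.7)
The plan is to apply a continuity argument to the sufficient extremality condition $2\gamma\kappa<1$ of Section~5 (obtained via \cite[Theorem 9.3]{MSW} with $k=2$), combined with the explicit bounds from Proposition~\ref{peg} and the formula for $\kappa$ in (\ref{kab}). Both $\kappa(\theta)$ and the upper bound for $\gamma(\theta)$ depend continuously on $\theta$ through the boundary law $z(\theta,m)$, so once the strict inequality $2\gamma\kappa<1$ is established at the distinguished anchor point, openness immediately produces the desired neighborhood.

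For part (a), I would anchor the argument at $\theta=\theta_c=q+1$, where Lemma~\ref{lg}(ii) yields $z_1(\theta_c,m)=1$, so $\mu_1(\theta_c,m)$ coincides with the free-boundary measure and $\mathbb{P}$ degenerates to the symmetric Potts channel. A direct computation from (\ref{abl}) and the formula for $c$ preceding (\ref{kab}) gives $a=b=c=1/2$ at $(\theta,z)=(q+1,1)$, hence $\kappa=1/2$; Proposition~\ref{peg}(1), applicable since $z\geq 1$, yields $\gamma\leq q/(q+2)$. Thus $2\gamma\kappa\leq q/(q+2)<1$ strictly. Since $z_1(\theta,m)$ is real-analytic in an open neighborhood of $\theta_c$ (well-separated from the branch coalescence at $\theta_m$), and the bounds of Proposition~\ref{peg}(1) and~(2) match continuously at $z=1$, the strict inequality persists in some open two-sided neighborhood $U_m(\theta_c)$.

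For part (b), I would anchor at $\theta=\theta_m$, where by Lemma~\ref{lg}(ii) the two branches coalesce into $z_1(\theta_m,m)=z_2(\theta_m,m)=(q-m)/m$ and so $\mu_1(\theta_m,m)=\mu_2(\theta_m,m)$. When $m=1$, Theorem~\ref{t1}(b) already shows $\mu_2(\theta,1)$ is extreme for every $\theta\geq\theta_1$, which gives the \emph{moreover} assertion and, through coincidence of the branches at $\theta_1$, extremality of $\mu_1(\theta_1,1)$ as well. When $m\geq 2$ and (\ref{sto}) holds, Theorem~\ref{t3}(iii) already supplies the strict inequality (\ref{tmq}) at $\theta=\theta_m$, hence $2\gamma\kappa<1$ there. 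In both cases, continuous dependence of $\kappa$ and the $\gamma$-bound on $\theta$ over the admissible range $\theta\geq\theta_m$—even across the square-root coalescence at $\theta_m^+$ from (\ref{s})—propagates the strict inequality to a one-sided neighborhood $V_m(\theta_m)$, simultaneously for both $i=1,2$.

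The only delicate point is the loss of smoothness (though not of continuity) of $z_i(\theta,m)$ at $\theta_m$ caused by the square root in (\ref{s}); this is harmless because persistence of a strict inequality requires only continuity. Consequently I do not expect any substantive obstacle: the proof reduces to the numerical verifications at $\theta_c$ and $\theta_m$ already recorded in the proofs of Theorems~\ref{t1} and~\ref{t3}, followed by invoking openness of $\{\theta:\,2\gamma(\theta)\kappa(\theta)<1\}$.
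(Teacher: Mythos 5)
Your proposal is correct and follows essentially the same route as the paper: the paper's proof of this theorem is exactly the continuity/openness argument for $2\gamma\kappa<1$, anchored at $\theta_c$ where $\mu_1(\theta_c,m)=\mu_0$ for part (a), and at $\theta_m$ via Theorem \ref{t3}(iii) (and Theorem \ref{t1} for $m=1$) for part (b). Your version merely spells out a few details the paper leaves implicit (the value $\kappa=1/2$ at the anchor and the matching of the two $\gamma$-bounds at $z=1$), which is harmless.
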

\begin{thm}\label{th8} For the ferromagnetic
$q$-state Potts model (with $q\geq 3$) on the Cayley tree
of order two, there exists a punctured neighborhood $U(\theta_c)$ of $\theta_c$
such that there are at least $2^{q-1}+q$ extreme TISGMs for each $\theta\in U(\theta_c)$.
\end{thm}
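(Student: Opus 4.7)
The plan is to assemble extreme TISGMs from three families that together, after being counted correctly modulo the natural duality, yield exactly $2^{q-1}+q$ elements near $\theta_c=q+1$.

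First I would verify that $\mu_0$ (the free measure) is extreme at $\theta=\theta_c$. With $z=1$ one has $a=b=c=(\theta-1)/(\theta+q-1)$, and by Proposition \ref{peg} $\gamma\le (\theta-1)/(\theta+1)$, so the sufficient condition $2\gamma\kappa<1$ reduces at $\theta=q+1$ to $q/(q+2)<1$. Since the bounds on $\gamma,\kappa$ from Sections~4--5 depend continuously on $\theta$, extremality of $\mu_0$ persists in some two-sided neighborhood of $\theta_c$. Next I would invoke Theorem~\ref{t1}(b) to get extremality of $\mu_2(\theta,1)$ on $[\theta_1,\infty)\ni\theta_c$; by permuting the distinguished coordinate this yields $q$ mutually distinct extreme TISGMs (the classical ordered phases). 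Finally, for each $m\in\{1,2,\dots,[q/2]\}$ Theorem~\ref{th2}(a) gives a neighborhood $U_m(\theta_c)$ on which $\mu_1(\theta,m)$ is extreme; permuting the distinguished $m$-subset yields a permutation orbit of candidate extreme TISGMs. (For the special case of $q$ even and $m=q/2$, where $\theta_{q/2}=\theta_c$ so $z_1\ne 1$ at $\theta_c$ and Theorem~\ref{th2}(a) does not apply directly, I would instead invoke Theorem~\ref{th2}(b) with condition \eqref{sto}, which one checks is satisfied for $m=q/2$ by elementary algebra.)

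Second I would take $U(\theta_c)$ to be the intersection of the three neighborhoods above, minus $\theta_c$; for $q$ even one restricts to the one-sided interval $(\theta_c,\theta_c+\varepsilon)$ so that all solutions $z_1(\theta,m),z_2(\theta,m)$ for $m\le[q/2]$ genuinely exist (recall $\theta_{q/2}=\theta_c$ is the upper branching point for $q$ even). Shrinking $\varepsilon$ further if necessary one may arrange $U(\theta_c)\subset(\theta_{[q/2]},\infty)$, so that by Proposition~\ref{pw}(3) there are exactly $2^q-1$ TISGMs for each $\theta\in U(\theta_c)$, and also so that $U(\theta_c)$ avoids any non-extremality range coming from Theorem~\ref{tne}.

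Third comes the count. Using Lemma~\ref{lg}(ii), for $\theta\in U(\theta_c)$ one has $z_1(\theta,m)\ne 1\ne z_2(\theta,m)$ and $z_1(\theta,m)\ne z_2(\theta,m)$, so the pairs $(m,S,\text{branch})$ give genuinely distinct TISGMs up to the boundary-law scaling equivalence $(S,z)\sim(S^c,1/z)$, which (via $z_1(m)z_2(q-m)=1$) identifies the ``large branch'' orbit at size $m$ with the ``small branch'' orbit at size $q-m$. Enumerating in the canonical representatives $|S|\le q/2$ therefore gives $1+q+\sum_{m=1}^{[q/2]}\#\{\text{subsets of size }m\}$ extreme measures from the three families, which after accounting for the self-dual case $|S|=q/2$ (for $q$ even, where $\binom{q}{q/2}$ is replaced by $\binom{q}{q/2}/2$) evaluates to
\begin{equation*}
1+q+(2^{q-1}-1)=2^{q-1}+q.
\end{equation*}

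The principal obstacle is not any single estimate but rather the bookkeeping across parities of $q$: ensuring the chosen neighborhood is one-sided when $q$ is even, handling the $m=q/2$ branch by Theorem~\ref{th2}(b) rather than (a), and correctly implementing the $(S,z)\sim(S^c,1/z)$ duality so that neither the set of candidate TISGMs nor the set of extreme ones is overcounted. Once the list of extreme measures is correctly enumerated, the final equality $1+q+(2^{q-1}-1)=2^{q-1}+q$ gives the claim.
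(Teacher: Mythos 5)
Your proof is correct and follows essentially the same route as the paper: extremality of $\mu_0$ near $\theta_c$ together with Theorem \ref{th2}(a) for the $\mu_1(\theta,m)$-orbits and Theorem \ref{t1}(b) for the $q$ measures $\mu_2(\theta,1)$, followed by the count $1+q+(2^{q-1}-1)$; you are in fact more careful than the paper about the one-sidedness of the neighborhood when $q$ is even and about the identification $(S,z)\sim(S^c,1/z)$. One harmless slip: for $m=q/2$ one has $x_1(m,q+1)=\bigl(q-\sqrt{(q-2m)^2}\bigr)/(2m)=1$, so $z_1(\theta_c,q/2)=1$ (not $\ne 1$) and Theorem \ref{th2}(a) applies there as well, although your fallback via Theorem \ref{th2}(b) and condition (\ref{sto}) is also valid.
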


Here the model is called ferromagnetic if $\theta>1$, and as usual
a punctured neighborhood of $\theta_c$ is an open neighborhood from which the value
$\theta_c$ was removed.

The proofs of these theorems follow after the following subsection.\\

\begin{figure}
\includegraphics[width=8cm]{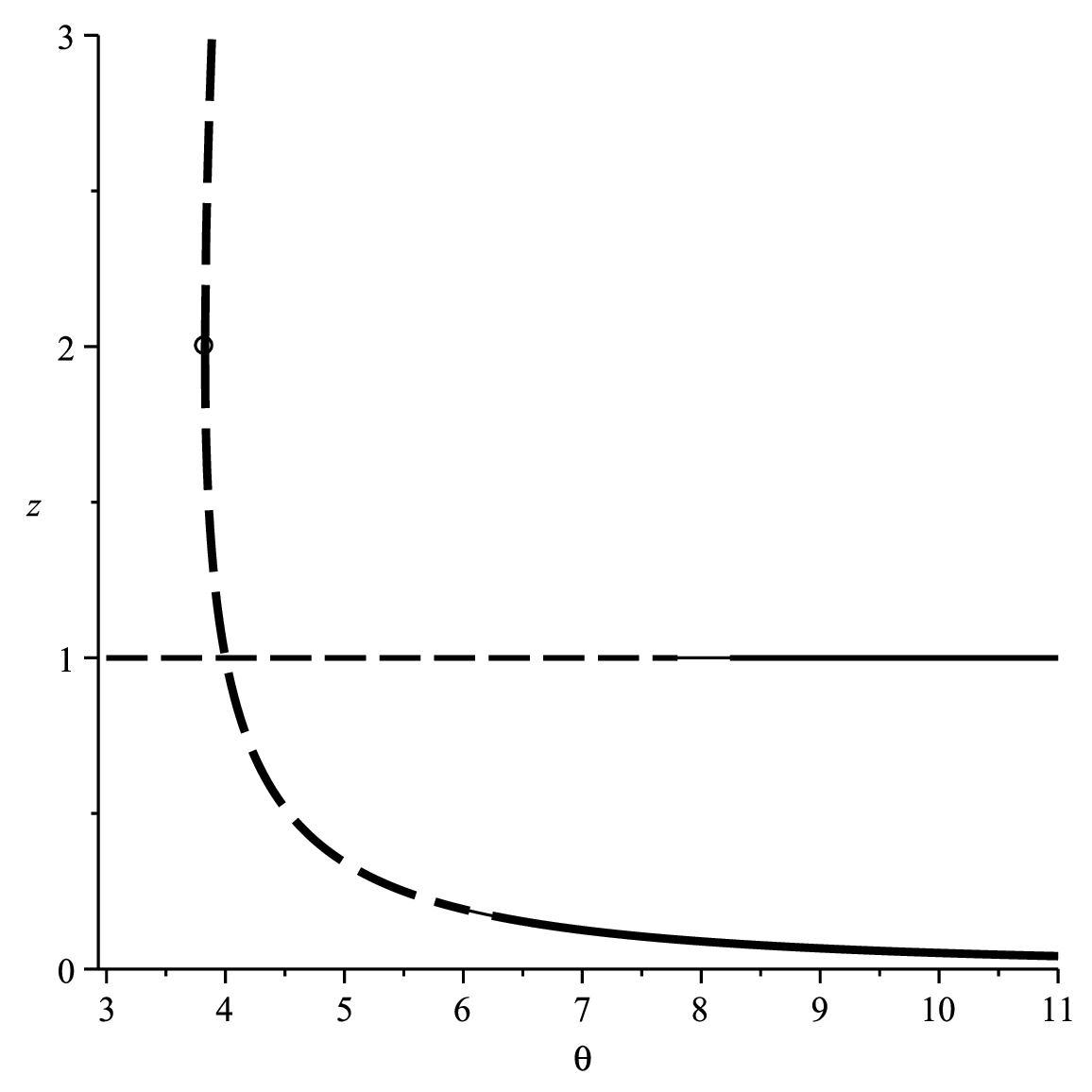}
\caption{The graphs of the functions $z_i=z_i(m,\theta)$, for $q=3$, $m=1$ and the graph of $z(\theta)\equiv 1$. The bold curves correspond to regions of solutions where the corresponding TISGM is non-extreme (corresponds to part (ii), b) of Theorem \ref{tne}). The dashed bold curves correspond to regions of solutions where the corresponding TISGM is extreme (see parts (a) and (b) of Theorem \ref{t1}). The gaps between the two types of curves are given by thin curves.}\label{fn5}
\includegraphics[width=8cm]{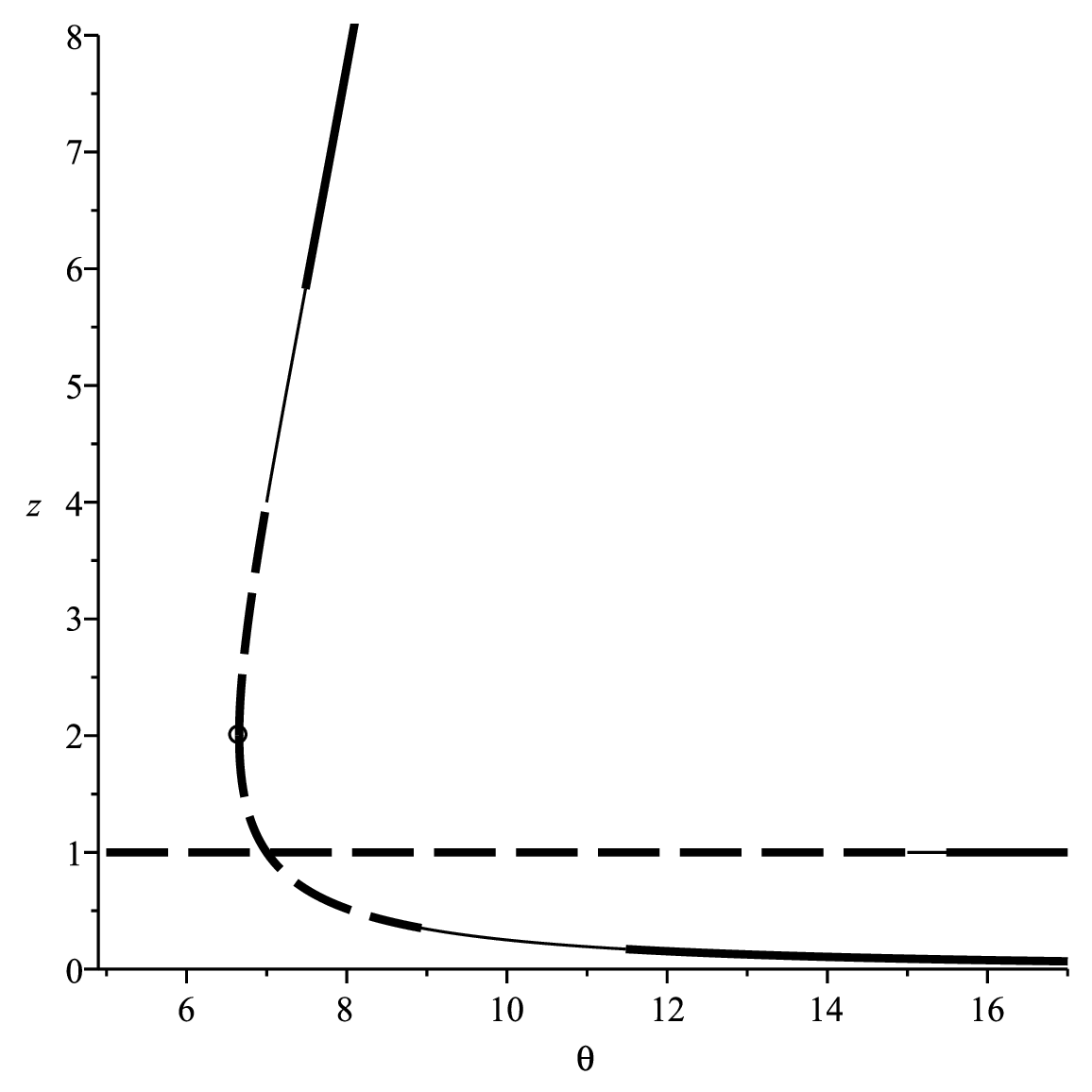}
\caption{The graphs of the functions $z_i=z_i(m,\theta)$, for $q=6$, $m=2$ and the graph of $z(\theta)\equiv 1$. The types of curves corresponding
to certain extremality and certain non-extremality are as in Fig.\ref{fn5} (corresponds to Theorem \ref{t3}).}\label{fn7}
\end{figure}
\begin{figure}
\includegraphics[width=8cm]{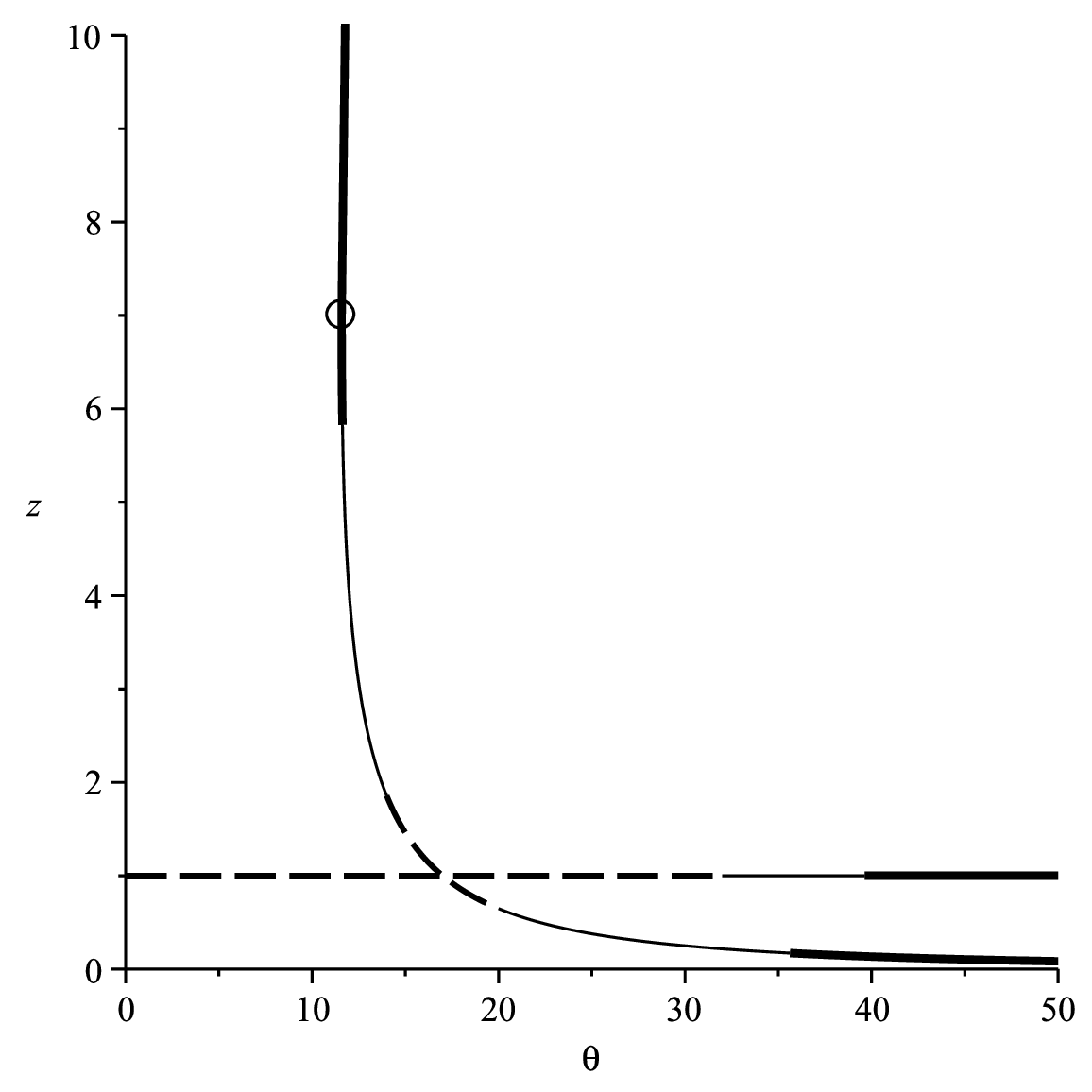}
\caption{The graphs of the functions $z_i=z_i(m,\theta)$, for $q=16$, $m=2$ and the graph of $z(\theta)\equiv 1$. The types of curves are again as in Fig.\ref{fn5} (corresponds to part a) of Theorem \ref{th2}).}\label{fn9}
\end{figure}

{\bf 5.2. Reconstruction insolvability on trees}

We use a result of \cite{MSW} to establish a bound for reconstruction insolvability corresponding to the
matrix (channel) of a solution $z\ne 1$.

Let us first give some necessary definitions from \cite{MSW}. For $k\geq 2$, let $\mathbb T^k$ denote a {\it half tree}, i.e., the infinite
rooted $k$-ary tree (in which every vertex has $k$ children).  Consider an
{\it initial finite complete subtree} $\mathcal T$, that  is a tree of the
following form: in the rooted tree $\mathbb T^k$, take all vertices at distance $\leq d$ from
the root, plus the edges joining them, where $d$ is a fixed constant. We identify subgraphs of $\mathcal T$ with their vertex sets and write $E(A)$ for the edges within a subset $A$ and $\partial A$ for the boundary of $A$, i.e., the neighbors of $A$ in $(\mathcal T\cup \partial\mathcal T)\setminus A$.

In \cite{MSW} the key ingredients are two quantities,
$\kappa$ and $\gamma$, which bound the probabilities of
percolation of disagreement down and up the tree, respectively. Both are properties of
the collection of Gibbs measures $\{\mu^\tau_{{\mathcal T}}\}$, where the boundary condition $\tau$
is fixed and $\mathcal T$ ranges over all initial finite complete subtrees of $\mathbb T^k$.
 For a given subtree $\mathcal T$ of $\mathbb T^k$ and a vertex $x\in\mathcal T$, we write $\mathcal T_x$ for
the (maximal) subtree of $\mathcal T$ rooted at $x$ that is a tree given by
$\mathcal T\cap \mathbb T^k_x$, with $\mathbb T_x^k$ the half tree with root $x$.
We draw the trees with the root at the
top and the leaves at the bottom.
When $x$ is not the root of $\mathcal T$, let $\mu_{\mathcal T_x}^s$
denote the (finite-volume) Gibbs measure in which the
parent of $x$ has its spin fixed to $s$ and the configuration on the bottom boundary  of ${\mathcal T}_x$
(i.e., on $\partial {\mathcal T}_x\setminus \{\mbox{parent\ \ of}\ \ x\}$) is
specified by $\tau$.

 For two measures $\mu_1$ and $\mu_2$ on $\Omega$, $\|\mu_1-\mu_2\|_x$ denotes the variation distance between the projections of $\mu_1$ and $\mu_2$ onto the spin at $x$, i.e.,
$$\|\mu_1-\mu_2\|_x={1\over 2}\sum_{i=1}^q|\mu_1(\sigma(x)=i)-\mu_2(\sigma(x)=i)|.$$

Denote by $\Omega^\tau_\mathcal T$
the set of configurations $\sigma$ given on $\mathcal T\cup \partial\mathcal T$ that agree with $\tau$ on $\partial \mathcal T$,
i.e., $\tau$ specifies
a boundary condition on $\mathcal T$. For any $\eta\in \Omega^\tau_\mathcal T$ and any subset
$A\subseteq \mathcal T$, the Gibbs distribution on $A$ conditional on the configuration outside $A$ being $\eta$ is
denoted by $\mu^\eta_A$.

Let $\eta^{x,s}$ be the
configuration $\eta$ with the spin at $x$ set to $s$.

 Let $\mathbb P=(P_{ij})$ be a channel on $\mathbb T^k$, and $\mu$ its associated Gibbs measure. Following \cite[page 165] {MSW} define
\begin{equation}\label{ka}
\kappa={1\over 2}\max_{i,j}\sum_l|P_{il}-P_{jl}|;
\end{equation}
$$\gamma\equiv\gamma(\mu)=\sup_{A\subset \mathbb T^k}\max\|\mu^{\eta^{y,s}}_A-\mu^{\eta^{y,s'}}_A\|_x,$$
where the supremum is taken over all subsets $A\subset \mathbb T^k$, the maximum is taken over all boundary conditions $\eta$, all sites $y\in \partial A$, all neighbors $x\in A$ of $y$, and all spins $s, s'\in \{1,\dots,q\}$.

As the main ingredient we apply \cite[Theorem 9.3]{MSW}, which is

\begin{thm}\label{tmsw} For an arbitrary (ergodic\footnote{Ergodic means irreducible and aperiodic Markov chain (here irreducible and aperiodic
refers to the probability kernel $\mathbb P$, i.e., the normal Markov chain (where "time" does not have tree structure) with this transition kernel is irreducible and aperiodic). Therefore has a unique stationary distribution $\pi=(\pi_1,\dots,\pi_q)$ with $\pi_i>0$ for all $i$.} and permissive\footnote{Permissive means that for arbitrary finite $A$ and boundary condition outside $A$ being $\eta$ the conditioned Gibbs measure on $A$, corresponding to the channel is positive for at least one configuration on $A$.}) channel ${\mathbb P}=(P_{ij})_{i,j=1}^q$
on a tree, the reconstruction of the corresponding tree-indexed Markov chain
is impossible if $k\kappa\gamma<1$.
\end{thm}

We note that the above mentioned results of \cite{MSW} are given for a rooted Cayley tree (half tree $\mathbb T^k$).
The following lemma says that the results can be extended for unrooted Cayley tree (full tree $\Gamma^k$).

\begin{lemma}\label{lk} Consider a tree-indexed Markov chain with a given fixed transition
matrix $(P_{ij})$ on two different graphs: First, a  Cayley tree where every vertex
has $k+1$ nearest neighbors. Second, a rooted tree where only the root has $k$ nearest neighbors and
all other vertices have $k+1$ neighbors.
Then the chain on the rooted tree is reconstructable if and only if the chain on the
Cayley tree is reconstructable.
\end{lemma}
\begin{proof} Let us denote by $\rho^i_n$ the joint
probability distribution of the spins  $\{\sigma^i(v): v\in V,\, d(x_0,x)=n\}$
which are at the boundary at distance $n$ to the root $x_0$ on the rooted tree,
this is obtained by sending a symbol $i$
from the root $x_0$ using the transition matrix of the Markov chain.
It is well known that the chain is non-reconstructable if and only if,
for any pair of different symbols $i,j\in \{1,\dots, q\}$, the relative entropy
vanishes in the large-$n$ limit, that is we have
$$\lim_{n\rightarrow \infty}S(\rho^{i}_n|\rho^{j}_n)=0.$$
(For more background on this, see Propositions 14 and 15 on page 295 of \cite{EM}).

We will show that non-reconstructability on the rooted tree (the root has $k$ adjacent vertices, the half tree)
implies non-reconstructability on the Cayley tree (the root has $k+1$ adjacent vertices, the full tree).
The other direction is obvious.
To do so, we consider the Cayley tree and single out a vertex $x_0$ which we will call the root.
Let us label its nearest neighbors by $v_1, \dots, v_k, v_{k+1}$. The boundary at distance $n$
for the Cayley tree  decomposes into a first part which is
connected to the root $x_0$ via the vertices
$v_1, \dots,v_k$ and a second part of those vertices
which is connected to $x_0$ via $v_{k+1}$. We denote
the spin distributions corresponding to this decomposition
of boundary vertices by $\rho^{i,<}_n$ and $\rho^{i,>}_n$, and their joint distribution
by  $\rho^{i,\text{full}}_n$. By the Markov property of the chain the spins on these two
parts of the boundary are mutually independent and we can write
$\rho^{i,\text{full}}_n=\rho^{i,<}_n\times \rho^{i,>}_n$. As a consequence we have for
the relative entropy between measures on the full boundary
which are obtained by sending $i\neq j$ that
\begin{equation}\label{juliana1}S(\rho^{i,\text{full}}_n|\rho^{j,\text{full}}_n)
=S(\rho^{i,<}_n|\rho^{j,<}_n)+S(\rho^{i,>}_n|\rho^{j,>}_n).
\end{equation}
We recognize
that $\rho^{i,<}_n$ is the distribution of boundary sites on the rooted tree where the root
has $k$ children.
Noting that $\rho^{i,>}_n$ is the distribution of boundary sites on a tree
where the root has one child, and all other sites have $k$ children,
we use the recursion from root to boundary
to write $\rho^{i,>}_n = \sum_{a=1}^q P_{i,a}\rho^{a,<}_{n-1}$ as a sum over
the spin-value $a$ of the single vertex $v_{k+1}$ at distance $1$ to the root. Here
$\rho^{a,<}_{n-1}$ is the distribution of the boundary sites at distance $n-1$ from
$v_{k+1}$ obtained by sending the letter $a$.
By convexity of the relative entropy in both arguments we get
\begin{equation}\label{juliana2}S(\rho^{i,>}_n|\rho^{j,>}_n)\leq \sum_{a,b}P_{i,a}P_{j,b}
S(\rho^{a,<}_{n-1}|\rho^{b,<}_{n-1}).
\end{equation}
 By \eqref{juliana1} and  \eqref{juliana2} we see now
  that non-reconstruction on the smaller tree, and hence
   $\lim_{n\rightarrow \infty}S(\rho^{i,<}_n|\rho^{j,<}_n)=0$ for all
 $i\neq j$, implies that
$\lim_{n\rightarrow \infty}S(\rho^{i,\text{full}}_n|\rho^{j,\text{full}}_n)=0$ for all $i\neq j$,
and hence non-reconstruction on the larger tree.
 \end{proof}

It is easy to see that the channel ${\mathbb P}$ corresponding to a TISGM of the Potts model is ergodic and permissive. Thus by Theorem \ref{tmsw} and Lemma \ref{lk} the criterion of {\it extremality} of a TISGM is $k\kappa\gamma<1$.

 The constant $\gamma$ does not have a clean general formula, but can be estimated in specific models (as Ising, Hard-Core etc.).
 For example, if ${\mathbb P}$ is the channel of the Potts model then (see Theorem 8.1 in page 160 of \cite{MSW}):
 \begin{equation}\label{bgamma}
\gamma\leq {\theta-1\over\theta+1}.
\end{equation}
Below we shall use this inequality.
Consider the case $z\ne 1$ (where $z=x^2$ and $z$ is a solution to (\ref{rm})), fix a solution of (\ref{pt1}), which has the form $(\underbrace{z,z,\dots,z}_m,1,\dots,1)$ and the corresponding matrix ${\mathbb P}$ given by (\ref{P}).
We shall now compute the constant $\kappa$.
 Recall matrix $\mathbb P$ given in (\ref{P}) and $Z_1$, $Z_2$ given after formula (\ref{P}).
 Since $z$ is a solution to (\ref{rm}) we have $Z_1=\sqrt[k]{z}Z_2$. Using this formula and (\ref{P}) we get
$${1\over 2}\sum_{l=1}^q|P_{il}-P_{jl}|=\left\{\begin{array}{lll}
a, \ \ \ \ \mbox{if} \ \ i,j=1,\dots,m\\[3mm]
b, \ \ \ \ \mbox{if} \ \ i,j=m+1,\dots,q\\[3mm]
c, \ \ \mbox{otherwise},
\end{array}\right.$$
where $a$ and $b$ are defined in (\ref{abl}) and
\begin{equation}\label{c16}
c={1\over 2 Z_1}\Big(z|\theta-\sqrt[k]{z}|+|1-\theta \sqrt[k]{z}|+(z(m-1)+q-m-1)|1-\sqrt[k]{z}|\Big).
\end{equation}

Hence we arrive at
\begin{equation}\label{kab}
\kappa=\left\{\begin{array}{ll}
\max\{b,c\}, \ \ \mbox{if} \ \ m=1\\[3mm]
\max\{a,b,c\} \ \ \mbox{if} \ \ m\geq 2.
\end{array}\right.
\end{equation}

{\bf 5.3. Proof of Theorem \ref{t1}}
\begin{proof} Write the equation (\ref{rm}) for $k=2$, $m=1$
(the conditions of theorem).
First take the square root on both sides,
then obtain a cubic equation for $\sqrt{z}$ which by dividing out the solution
$\sqrt{z} = 1$ can be simplified to $z-(\theta-1)\sqrt{z}+q-1 = 0.$
For the solutions $\sqrt{z_1}$, $\sqrt{z_2}$ of the last equation one has $\sqrt{z_1}+\sqrt{z_2}=\theta-1$.
Using this equality we get
$$\sqrt{z_1}\leq \sqrt{z_1}+\sqrt{z_2}=\theta-1<\theta.$$ Similarly, one gets  $\theta > \sqrt{z_2}$.
From these inequalities it follows that $|\theta-\sqrt{z_i}|=\theta-\sqrt{z_i}$. Since $\theta>1$ for $z_i\geq 1$ we get
$$ |1-\theta \sqrt{z_i}|=\theta\sqrt{z_i}-1, \ \  |1-\sqrt{z_i}|=\sqrt{z_i}-1.$$
We can simplify the expression for $c$ (see (\ref{c16}), for $k=2$, $m=1$) using the above
mentioned equalities for the absolute values present in $c$ and, using that $z_i$ is a solution to
 $z-(\theta-1)\sqrt{z}+q-1 = 0$ (i.e. replacing $z$ by $(\theta-1)\sqrt{z}-q+1$), finally we obtain
\begin{equation}\label{c1}
c={1\over Z_1}\left((\theta+q-2)\sqrt{z_i}-(q-1)\right).
\end{equation}
Recall $b={(\theta-1)\sqrt{z_i}\over Z_1}$ (see (\ref{abl})).
 For $z_i\geq 1$, using $b$ and $c$ given by (\ref{c1}) it is straightforward to see that $b\leq c$ (with equality if and only if $z_i=1$).
Consequently, using (\ref{kab}) under condition $m=1$ we get $\kappa=c$.

(a) {\it Case}: $z_1\geq 1$. For $\theta\in [\theta_1,\theta_c]=[1+2\sqrt{q-1}, q+1]$ we have $z_1\geq 1$.
By (\ref{bgamma}) and $\kappa=c$ we have
$$
2\gamma\kappa\leq 2c{\theta-1\over \theta+1}={2(\theta-1)((\theta+q-2)\sqrt{z_1}-(q-1))\over (\theta z_1+q-1)(\theta+1)}.
$$
Thus, to check $2\gamma\kappa<1$ it is sufficient to check
\begin{equation}\label{ox}
{2(\theta-1)((\theta+q-2)\sqrt{z_1}-(q-1))\over (\theta z_1+q-1)(\theta+1)}<1.
\end{equation}
Since $z_1$ is a solution to $z-(\theta-1)\sqrt{z}+q-1 = 0$ we can
replace in (\ref{ox}) $z_1$ by $(\theta-1)\sqrt{z_1}-q+1$,
simplify and put
 $$\sqrt{z_1}={\theta-1-\sqrt{(\theta-1)^2-4(q-1)}\over 2}.$$
 Then denoting
$$\psi_1(\theta)=(\theta-1)(\theta^2-\theta+6-4q)-(\theta^2-\theta+4-2q)\sqrt{(\theta-1)^2-4(q-1)}$$
we obtain that the inequality (\ref{ox}) is equivalent to $\psi_1(\theta)>0$, $\theta\in [\theta_1,\theta_c)$.

By $(\theta-1)^2>4(q-1)$, i.e. $\theta>\theta_1$ we get
$$\theta^2-\theta+6-4q>0, \ \ \theta^2-\theta+4-2q>0.$$
Using these inequalities the inequality $\psi_1(\theta)>0$ can be simplified to
\begin{equation}\label{u}
u(\theta)=\theta^3-(q-1)\theta^2-(2q-3)\theta+(4q^2-13q+11)>0, \ \ \theta\in [\theta_1,\theta_c].
\end{equation}
The function $u(\theta)$ for $\theta\in (0, \theta_c)$ has a minimum at $\bar\theta=\bar\theta(q)=(q-1+\sqrt{q^2+4q-8})/3$.
Now we calculate the minimal value $u(\bar\theta)$. Since $\bar\theta$ is a solution to
$$u'(\theta)=3\theta^2-2(q-1)\theta-(2q-3)=0$$
in $u(\bar\theta)$ we replace $\bar\theta^2$ by $(1/3)(2(q-1)\bar\theta+(2q-3))$ (in this way we
reduce the powers of $\bar\theta$). We also
replace $\bar\theta^3$ by $(\bar\theta/3)(2(q-1)\bar\theta+(2q-3))$ and simplifying $u(\bar\theta)$
we get
$$u(\bar\theta)=(8-4q-q^2)\bar\theta+(48-56q+17q^2).$$
Now using the formula of $\bar\theta$ from the last equality we get
\begin{equation}\label{ut}
u(\bar\theta)=-q^3+48q^2-156q+136-(q^2+4q-8)^{3/2}.
\end{equation}
From (\ref{ut}) we obtain $u(\bar\theta)>0$ iff $q=3,4,\dots, 16$.
Consequently,
$u(\theta)\geq u(\bar\theta)>0$ (i.e. $\psi_1(\theta)>0$) and $\mu_1(\theta,1)$ is extreme for {\it all} $\theta\in [1+2\sqrt{q-1}, q+1]$, $q=3,4,\dots,16$.

For $q\geq 17$, using the explicit formulas for $\theta_1, \bar\theta, \theta_c$,
one gets $\theta_1<\bar\theta<\theta_c$ and $u(\bar\theta)<0$.
Also we have
$$u(\theta_1)=4(q+2\sqrt{q-1})>0, \ \ u(\theta_c)=4((q-1)^2 +3)>0.$$
Consequently, the equation $u(\theta)=0$ has two solutions $\bar\theta_1$ and $\bar\theta_2$ such that
$$\theta_1<\bar\theta_1<\bar\theta<\bar\theta_2<\theta_c.$$
Thus $u(\theta)>0$ and the measure $\mu_1(\theta,1)$ is extreme for any $\theta\in [\theta_1, \bar\theta_1)\cup (\bar\theta_2,q+1].$\bigskip

{\it Case}: $z_1<1$. In this case, using the explicit formula for $\sqrt{z_1}$
and the condition $\theta>\theta_1$ we get
$\theta \sqrt{z_1}>1$. Indeed, this inequality is equivalent to
$(q-2)\theta^2+\theta+1>0$, which holds for any $\theta>1$.
Consequently, using  $\theta>\sqrt{z_1}$, $\theta \sqrt{z_1}>1$ and $z_1<1$ from formula (\ref{c16}) for $m=1$, for all possible values of $q\geq 2$ we get
$$c={1\over Z_1}(\theta \sqrt{z_1}-1).$$
Recall again $b={(\theta-1)\sqrt{z_1}\over Z_1}$.
Then
$$b\geq c \ \ \Leftrightarrow {(\theta-1)\sqrt{z_1}\over Z_1}\geq {1\over Z_1}(\theta \sqrt{z_1}-1) \ \ \Leftrightarrow
(\theta-1)\sqrt{z_1}\geq \theta \sqrt{z_1}-1 \ \ \Leftrightarrow \sqrt{z_1}\leq 1.$$
Thus if $z_1\leq 1$, i.e., $\theta\in (\theta_c,+\infty)$
then $b\geq c$. Consequently, $\kappa=b$. So we have
$$
2\gamma \kappa\leq 2b\cdot{\theta-1\over \theta+1}.
$$
Hence it suffices to check
\begin{equation}\label{<1}
2b\cdot {\theta-1\over \theta+1}={2(\theta-1)^2\sqrt{z_1}\over (\theta+1)(\theta z_1+q-1)}<1.
\end{equation}

Since $z_1$ is a solution to $z_1-(\theta-1)\sqrt{z_1}+q-1=0$, replacing $z_1$ by  $(\theta-1)\sqrt{z_1}-q+1$ and using the explicit formula for $\sqrt{z_1}$ the inequality
(\ref{<1}) becomes after some simplifications
$$f(\theta)=\theta^3-(q+3)\theta^2-(2q-7)\theta-(q+5)<0.$$
Using the Descartes rule (compare part (iv) of Theorem \ref{tne})
for $f(\theta)$ we see
that the equation $f(\theta)=0$ has at most one positive solution.
Moreover, for any $q\geq 3$ we have $f(q+1)=-4q^2<0$. Now we show that $f(\theta^*)>0$, where $\theta^*=(\sqrt{2}+1)q-1$ (see (\ref{t*}) for $m=1$). Indeed,
we have
$$Q(q)\equiv f(\theta^*)=(4+3\sqrt{2})q^3-6(3+2\sqrt{2})q^2+16(1+\sqrt{2})q-16$$
and
$$Q'(q)=3(4+3\sqrt{2})q^2-12(3+2\sqrt{2})q+16(1+\sqrt{2}).$$
The equation $Q'(q)=0$ has two positive solutions:
$$q_{\pm}={2\over 3}\cdot{9+6\sqrt{2}\pm\sqrt{33+24\sqrt{2}}\over 4+3\sqrt{2}}.$$
We have $q_-<q_+<3$. Therefore $Q(q)$ is an increasing function of $q\geq 3$.
Consequently, for any $q\geq 3$ we have
$$Q(q)\geq Q(3)=21\sqrt{2}-22>0.$$
 So there exists a unique $\theta^{**}$ with $q+1<\theta^{**}<\theta^*$ such that $f(\theta^{**})=0$.
 Thus $f(\theta)<0$ and $\mu_1(\theta,1)$ is extreme for $\theta\in [q+1,\theta^{**})$.
 Summarizing the results for the cases $z_1\geq 1$ and $z_1<1$ we get assertion (a).

 For $q=3$ a numerical analysis shows that $\theta^{**}\approx 6.053$.
 So $\theta^*-\theta^{**}=0.1896$ (see Fig.\ref{fn6}). This is the length of the gap
 between the Kesten-Stigum threshold beyond which non-extremality certainly
 holds and our bound below which extremality certainly holds.

\begin{figure}
\includegraphics[width=8cm]{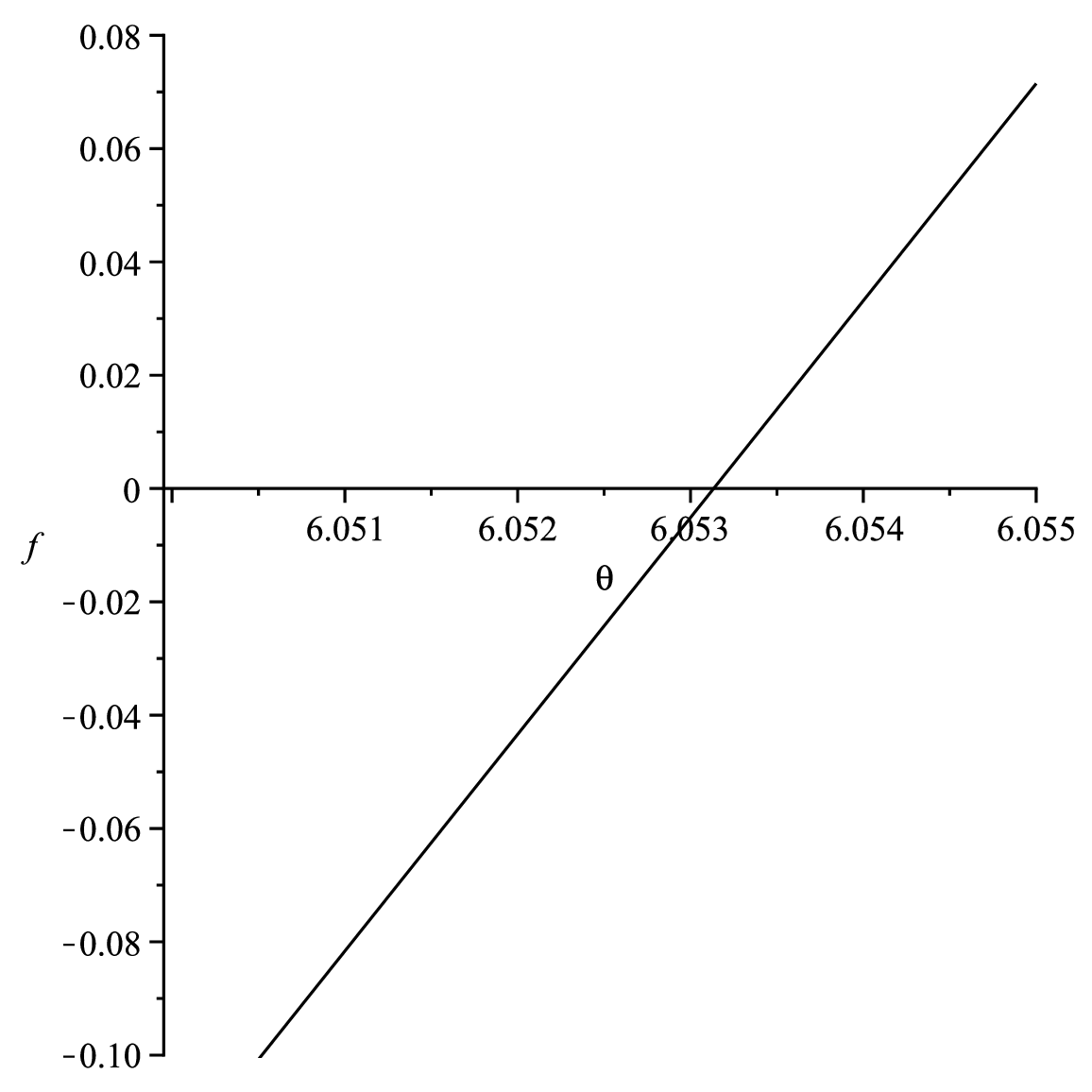}
\caption{The graph of the function $f(\theta)$, for $q=3$, given in the interval
$[6.05,6.055]$ indicates that the solution of $f(\theta^{**})=0$ is $\theta^{**}\approx 6.053$.}\label{fn6}
\end{figure}
(b) For $\theta\geq \theta_1=1+2\sqrt{q-1}$ we have $z_2>1$, i.e $z_2$ is larger than $1$ as soon as it exists.
 Similarly as the case (a) we shall check
$2c{\theta-1\over \theta+1}<1$. Denoting
$$\psi_2(\theta)=(\theta-1)(\theta^2-\theta+6-4q)+(\theta^2-\theta+4-2q)\sqrt{(\theta-1)^2-4(q-1)}$$
and using the fact that $z_2$ is a solution to $z_2-(\theta-1)\sqrt{z_2}+q-1=0$, replacing $z_2$ by  $(\theta-1)\sqrt{z_2}-q+1$ and using the explicit formula  $$\sqrt{z_2}={\theta-1+\sqrt{(\theta-1)^2-4(q-1)}\over 2}$$
after simplifications we get that the last inequality is equivalent to $\psi_2(\theta)>0$, $\theta\geq \theta_1$.
Note that
$$\theta^2-\theta+6-4q\leq \theta^2-\theta+4-2q,$$
so if we prove that
\begin{equation}\label{qq}
\theta^2-\theta+6-4q>0
\end{equation}
then it follows that $\psi_2(\theta)>0$. Since $\theta>1$ the inequality (\ref{qq}) is equivalent to
$\theta>(1+\sqrt{16q-23})/2$. Moreover, one can easily check that
$$(1+\sqrt{16q-23})/2<1+2\sqrt{q-1}=\theta_1$$
Thus we proved that
$\psi_2(\theta)>0$ and $\mu_2(\theta,1)$ is extreme for {\it any} $\theta\geq 1+2\sqrt{q-1}$.
\end{proof}

{\bf 5.4. Proof of Theorem \ref{t3}}
\begin{proof} Write the equation (\ref{rm}) for $k=2$,
first take the square root on both sides,
then obtain a cubic equation for $\sqrt{z}$ which by dividing out the solution
$\sqrt{z} = 1$ can be simplified to $mz-(\theta-1)\sqrt{z}+q-m = 0.$
Using the fact that $z_i$ is a solution to
$mz-(\theta-1)\sqrt{z}+q-m=0$ we simplify the expression of $c$ given in (\ref{c16}) for $m\geq 2$ {\it and} $z=z_i\geq 1$, $i=1,2$.
For the solutions $\sqrt{z_1}$, $\sqrt{z_2}$ of the last equation one has $\sqrt{z_1}+\sqrt{z_2}={\theta-1\over m}$.
Using this equality we get
$$\sqrt{z_1}\leq \sqrt{z_1}+\sqrt{z_2}={\theta-1\over m}<\theta.$$ Similarly, one gets  $\theta > \sqrt{z_2}$.
From these inequalities it follows that $|\theta-\sqrt{z_i}|=\theta-\sqrt{z_i}$. Since $\theta>1$ for $z_i\geq 1$ we get
$$ |1-\theta \sqrt{z_i}|=\theta\sqrt{z_i}-1, \ \  |1-\sqrt{z_i}|=\sqrt{z_i}-1.$$
Consequently, repeatedly  replacing $z_i$ in $c$ by $(1/m)((\theta-1)\sqrt{z_i}-q+m)$ and simplifying we obtain
$$c={1\over m^2Z_1}\left(\{m(\theta^2-\theta+q-m)-(\theta-1)^2\}\sqrt{z_i}-\{(m-1)\theta+1\}(q-m)\right).$$
Recall $a={(\theta-1)z_i\over Z_1}$ (see (\ref{abl})) and by (\ref{kl}) for $z_i>1$ we have $b<a$. Now we shall check that $a>c$:
replacing $z_i$ in $a$ by $(1/m)((\theta-1)\sqrt{z_i}-q+m)$ we get the following sequence of equivalent inequalities
$$a> c \ \  \Leftrightarrow \ \ {(\theta-1)((\theta-1)\sqrt{z_i}-q+m)\over mZ_1} \ \ > $$ $${\{m(\theta^2-\theta+q-m)-(\theta-1)^2\}\sqrt{z_i}-\{(m-1)\theta+1\}(q-m)\over m^2Z_1} \ \ \Leftrightarrow$$
$$[(m+1)(\theta-1)^2-m(\theta^2-\theta+q-m)]\sqrt{z_i}+(m-1)(q-m)(\theta-1)> 0\Leftrightarrow$$
$$[\theta^2-(m+2)\theta-m(q-m-1)+1]\sqrt{z_i}+(m-1)(q-m)(\theta-1)> 0.$$
To show that the last inequality is true for any $\theta>\theta_m$ it suffices\footnote{Note that $(m-1)(q-m)(\theta-1)>0$, because $m\geq 2$, $q>m$ and $\theta>1$} to prove that
\begin{equation}\label{r5}
\theta^2-(m+2)\theta-m(q-m-1)+1>0.
\end{equation}
This quadratic inequality (with respect to $\theta>1$) is true for any
$$\theta>1+{1\over 2}m+{1\over 2}\sqrt{4mq-3m^2}.$$
Now if we prove that
\begin{equation}\label{akc}
 1+{1\over 2}m+{1\over 2}\sqrt{4mq-3m^2}<\theta_m=1+2\sqrt{m(q-m)}
 \end{equation}
then it follows that (\ref{r5}) holds for any $\theta>\theta_m$. So let us check (\ref{akc}):
$$(\ref{akc})\ \ \Leftrightarrow \ \ m+\sqrt{4mq-3m^2}<4\sqrt{m(q-m)}\ \ \Leftrightarrow$$
$$ m^2+2m\sqrt{4mq-3m^2}+4mq-3m^2<16mq-16m\ \ \Leftrightarrow \sqrt{4mq-3m^2}<m+6q-8\ \ \Leftrightarrow$$
$$3q(3q-8)+(m-2)^2+2mq+12>0.$$
The last inequality is true for any $q\geq 2$.
Thus we proved that
\begin{equation}\label{k>}
\kappa=\max\{a,b,c\}=a, \ \ \mbox{if} \ \ z_i>1.
\end{equation}

(i) {\it Case}: $z_1\geq 1$.  If $z=z_1\geq 1$, i.e. $\theta\in [\theta_m,\theta_c]$ then as shown above we
have $\kappa=a$. Hence
$$2\gamma\kappa\leq 2a{\theta-1\over \theta+1}={2(\theta-1)^2 z_1\over ((\theta+m-1)z_1+(q-m))(\theta+1)}.$$
Thus to check  $2\gamma\kappa\leq 1$ it suffices to show
\begin{equation}\label{ox2}
{2(\theta-1)^2 z_1\over ((\theta+m-1)z_1+(q-m))(\theta+1)}<1.
\end{equation}
 The last inequality can be written as
 \begin{equation}\label{os1}
[-\theta^2+(m+4)\theta+m-3]z_1+(\theta+1)(q-m)>0.
\end{equation}
Replacing $z_1$ in (\ref{os1}) by $(1/m)((\theta-1)\sqrt{z_1}-q+m)$ we get
\begin{equation}\label{os}
[-\theta^2+(m+4)\theta+m-3]((\theta-1)\sqrt{z_1}-q+m)+m(\theta+1)(q-m)>0.
\end{equation}
By  formula of $\sqrt{z_1}$ (see (\ref{s})) we have
$$\sqrt{z_1}={\theta-1-\sqrt{(\theta-1)^2-4m(q-m)}\over 2m}$$
$$={\left(\theta-1-\sqrt{(\theta-1)^2-4m(q-m)}\right)\left(\theta-1+\sqrt{(\theta-1)^2-4m(q-m)}\right)\over 2m\left(\theta-1+\sqrt{(\theta-1)^2-4m(q-m)}\right)}$$
\begin{equation}\label{z1c}
={2(q-m)\over \theta-1+\sqrt{(\theta-1)^2-4m(q-m)}}.
\end{equation}
Using formula (\ref{z1c}) of $\sqrt{z_1}$  we obtain from (\ref{os})
\begin{equation}\label{ff}
-\theta^3+(2m+5)\theta^2-7\theta-(2m-3)+(\theta^2-4\theta+3)\sqrt{(\theta-1)^2-4m(q-m)}>0.
\end{equation}
We note that
$$-\theta^3+(2m+5)\theta^2-7\theta-(2m-3)=(\theta-1)[-\theta^2+(2m+4)\theta+2m-3], $$
$$\theta^2-4\theta+3=(\theta-1)(\theta-3).$$
Using these equalities and dividing both side of (\ref{ff}) by $-(\theta-1)$ we get
\begin{equation}\label{tmq}
\theta^2-(2m+4)\theta-2m+3-(\theta-3)\sqrt{(\theta-1)^2-4m(q-m)}<0.
\end{equation}
For $m=2$ from (\ref{tmq}) we obtain
\begin{equation}\label{t2q}
\theta^2-8\theta-1-(\theta-3)\sqrt{(\theta-1)^2-8(q-2)}<0.
\end{equation}
{\it Subcase  $\theta^2-8\theta-1\leq 0$.}  Since $\theta\geq \theta_2=1+2\sqrt{2(q-2)}>3$, the inequality (\ref{t2q}) holds if $\theta^2-8\theta-1\leq 0$.
From the last inequality we get $\theta\leq 4+\sqrt{17}$. Hence we should find $q\geq 4$ such that
$1+2\sqrt{2(q-2)}\leq 4+\sqrt{17}$, i.e. $q\leq 8$. Then for $q<8$ we have $q+1<4+\sqrt{17}$ and the inequality (\ref{t2q}) holds for $\theta$ satisfying
 $$1+2\sqrt{2(q-2)}\leq \theta\leq 4+\sqrt{17}.$$
 For $q=8$ we have
 $$1+2\sqrt{2(q-2)}\leq \theta\leq 4+\sqrt{17}<q+1=9.$$
 Thus we proved that the measure $\mu_1(\theta,2)$ is extreme
 \begin{itemize}
 \item if $q<8$ and $\theta\in [\theta_2,q+1]$
 \item if $q=8$ and $\theta\in [\theta_2,4+\sqrt{17}]$.
  \end{itemize}
 So, for $q=8$ we should check the inequality (\ref{t2q}) for $\theta\in (4+\sqrt{17}, 9]$, since for such $\theta$ we have  $\theta^2-8\theta-1>0$. This
 verification is included in the following subcase.

{\it Subcase  $\theta^2-8\theta-1>0$.} In this case we have $\theta>4+\sqrt{17}$ which,
together with $\theta>\theta_2$, gives that we should consider $q\geq 8$. Rewrite (\ref{t2q}) as
$$\theta^2-8\theta-1<(\theta-3)\sqrt{(\theta-1)^2-8(q-2)}.
$$
Take the square on both sides and simplify up to
$$\varphi(\theta,q)= \theta^3-(q+3)\theta^2+(6q-17)\theta-(9q-19)>0.$$
First let us check this inequality for $q=8$ and $\theta\in [4+\sqrt{17},9]$.
The graph of the function $\varphi(\theta,8)$ is given in Figure \ref{fe10}. Thus we obtain that $\varphi(\theta,8)>0$ and $\mu_1(\theta,2)$ is extreme for $q=8$ on $[4+\sqrt{17},9]$ too.
Summarizing the results of the previous subcase with the present one,
we obtain that this measure is extreme on $[\theta_2,q+1]$ for any $q\leq 8$.
\begin{figure}
\includegraphics[width=8cm]{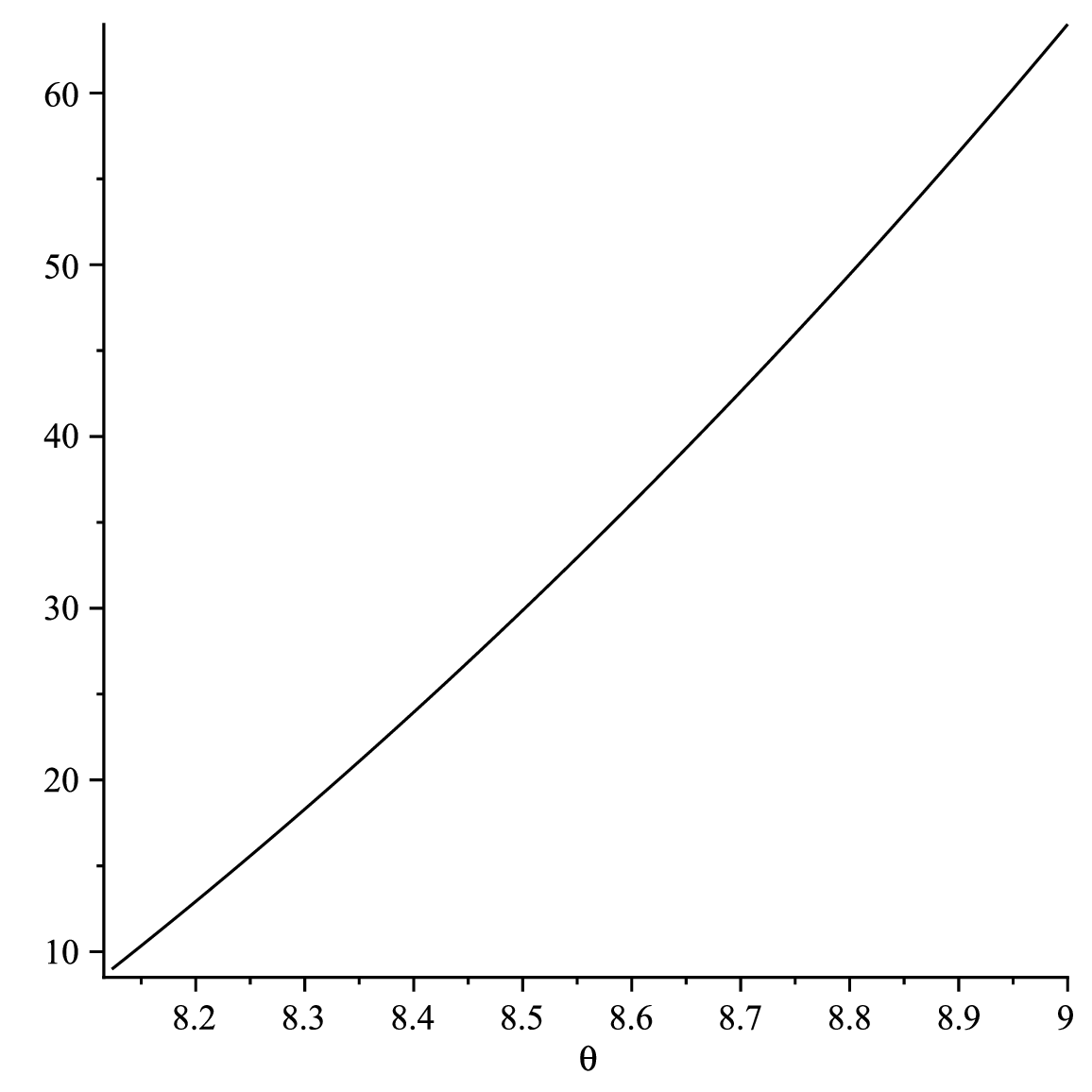}
\caption{The graph of the function $\varphi(\theta,q)$, for $q=8$ and $\theta\in [4+\sqrt{17},9]$.}\label{fe10}
\end{figure}

Now we shall consider the case $q\geq 9$.
The equation $\varphi'(\theta,q)=0$ has two solutions:
$$q_{\pm}={1\over 3}\cdot(q+3\pm\sqrt{q^2-12q+60}).$$
Since the coefficient of $\theta^3$ in $\varphi$ is positive (=1), the function has a maximum at $q_-$ and a minimum at $q_+$. The maximal value is
$$\varphi(q_-,q)=-{2\over 27}[q(q^2-18q+144)-(q^2-12q+60)^{3/2}].$$
Now we prove that $\varphi(q_-,q)<0$ for any $q\geq 9$:
$$\varphi(q_-,q)<0\ \ \Leftrightarrow \ \ q(q^2-18q+144)>(q^2-12q+60)^{3/2} \ \ \Leftrightarrow$$
$$q^2(q^2-18q+144)^2>(q^2-12q+60)^3 \ \ \Leftrightarrow$$
$$\psi(q)=864q^3-15984q^2+129600q-216000>0.$$
The last inequality is true for any $q\geq 9$, indeed $\psi'(q)>0$ and $\psi(9)=285552$.

Consequently, $\varphi(q_+,q)<\varphi(q_-,q)<0$.

For any $q\geq 9$ we have
$$\varphi(\theta_2,q)=-4(2q^2-3q+2(9-3q)\sqrt{2q-4})<0, \ \ \ \ \varphi(q+1,q)=4q(q-6)>0.$$
It follows from these properties that there exists unique $\theta^\dag$ such that $\varphi(\theta^\dag,q)=0$ and the inequality
$\varphi(\theta,q)>0$ holds for any $\theta\in (\theta^\dag,q+1]$, $q\geq 9$ (see Figure \ref{fe11}, for $q=9,10,11$).
\begin{figure}
\includegraphics[width=8cm]{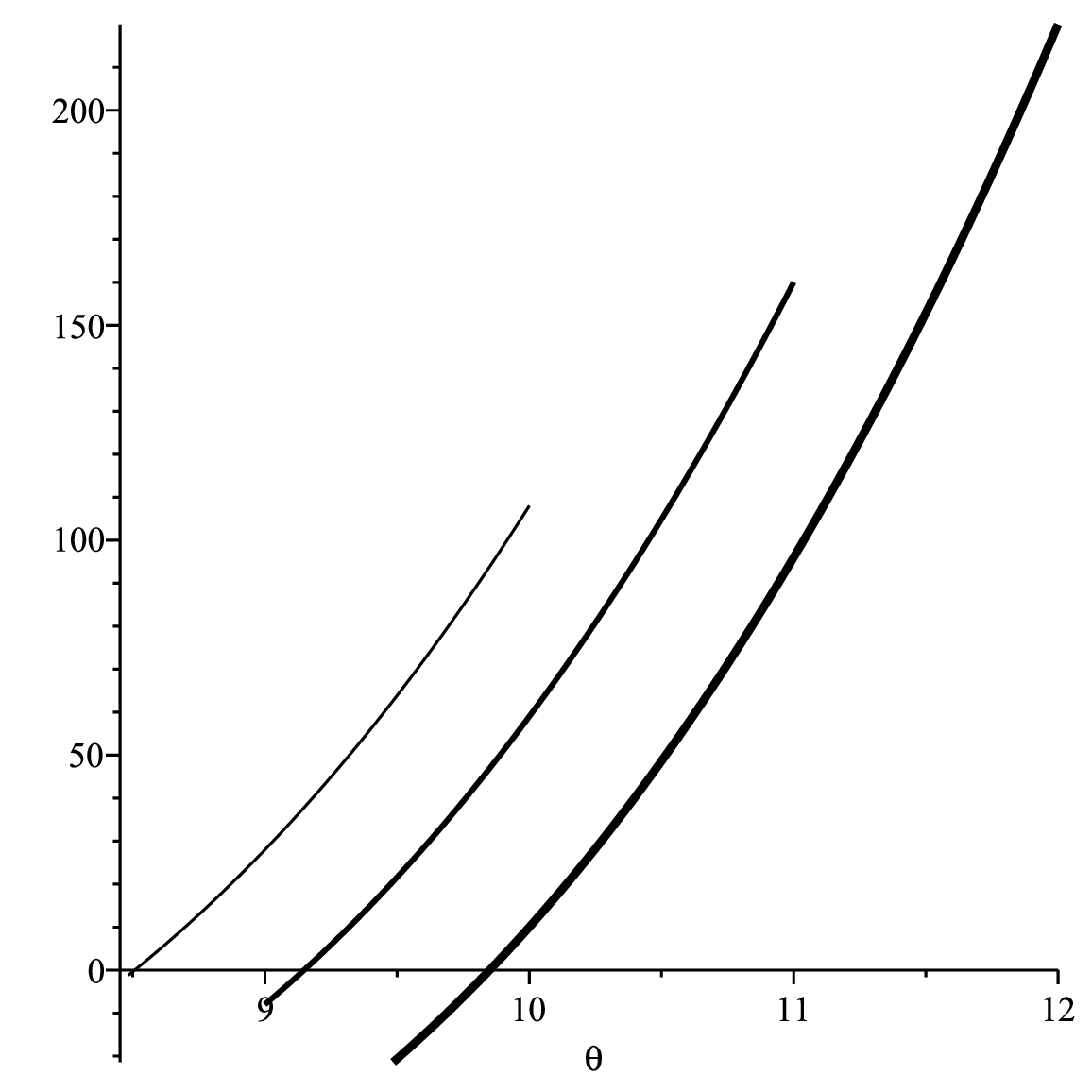}
\caption{The graphs of the functions $\varphi(\theta,q)$, for $q=9$ (left curve), $q=10$ (middle curve) and $q=11$ (right curve), for $\theta\in [\theta_2(q),q+1]$.}\label{fe11}
\end{figure}
Thus we proved the assertions of (i) corresponding
to the case $\theta\leq \theta_c=q+1$.

{\it Case}: $z_1<1$. Using the explicit formula (\ref{z1c}) of $\sqrt{z_1}$
 and the condition $\theta>\theta_m$ we get
$\theta \sqrt{z_1}>1$.  Indeed,
$$\theta \sqrt{z_1}>1\ \ \Leftrightarrow \ \ {2(q-m)\theta\over \theta-1+\sqrt{(\theta-1)^2-4m(q-m)}}>1 \ \ \Leftrightarrow$$
 $$(2(q-m)-1)\theta+1>\sqrt{(\theta-1)^2-4m(q-m)} \ \ \Leftrightarrow$$
$$(q-m-1)\theta^2+\theta+m>0.$$
The last inequality is true for any $\theta>1$.
Consequently, using  $\theta>\sqrt{z_1}$, $\theta \sqrt{z_1}>1$ and $z_1<1$ from
formula (\ref{c16}) replacing
in it $z_1$ by $(1/m)((\theta-1)\sqrt{z_1}-q+m)$,
we get independently on the values of $q$ and $m$ that
$$c={1\over Z_1}\left(\theta\sqrt{z_1}-1\right).$$
For $z_1<1$ we have $a<b$ (see (\ref{kl})). Now, to
prove that $b> c$ for $z_1<1$ (i.e. for $\theta>\theta_c$) we note that
$$ b={(\theta-1)\sqrt{z_1}\over Z_1}> c={1\over Z_1}\left(\theta\sqrt{z_1}-1\right) \ \ \Leftrightarrow $$ $$
(\theta-1)\sqrt{z_1}> \theta\sqrt{z_1}-1\ \ \Leftrightarrow \ \ 1>\sqrt{z_1} \ \ \Leftrightarrow \ \  1>z_1.$$
Consequently, $\kappa=\max\{a,b,c\}=b$ and to check $2\gamma \kappa\leq 1$ it suffices that
\begin{equation}\label{b1}
{2(\theta-1)^2\sqrt{z_1}\over (\theta+1)[(\theta+m-1)z_1+q-m]}<1.
\end{equation}
Again using that $z_1$ is a solution of $mz_1-(\theta-1)\sqrt{z_1}+q-m=0$,  for $m=2$ one can simplify the inequality
(\ref{b1}) to the following
$$(\theta^2-2\theta+5)\sqrt{z_1}-(q-2)(\theta+1)>0.$$
Now, from this inequality, using $\sqrt{z_1}=2(q-2)/[\theta-1+\sqrt{(\theta-1)^2-8(q-2)}]$ (see (\ref{z1c}), for $m=2$) we get
$$g(\theta,q)=\theta^3-(q+3)\theta^2-(2q-15)\theta-(q+13)<0.$$

The equation
$$g'(\theta,q)=3\theta^2-2(q+3)\theta-(2q-15)=0$$
has two solutions
$$\theta_{\pm}=1+{q\over 3}\pm {1\over 3}\sqrt{q^2+12q-36}.$$
Recall $\theta_c=q+1$.  These solutions satisfy
$$\theta_{-}<\theta_+<\theta_c=q+1.$$
Thus $g(\theta,q)$ as a function of $\theta\in [\theta_c,+\infty)$ is an increasing function.
We have $g(\theta_c,q)=-4q(q-2)<0$. Now for $q\geq 4$ we show that $g(\theta^*,q)>0$, where $\theta^*$ is defined in (\ref{t*}):
We compute
$$P(q)\equiv g(\theta^*,q)=(3\sqrt{2}+4)q^3-4(5\sqrt{2}+8)q^2+4(15\sqrt{2}+14)q-112$$
and
$$P'(q)=3(3\sqrt{2}+4)q^2-8(5\sqrt{2}+8)q+4(15\sqrt{2}+14).$$
The equation $P'(q)=0$ has two solutions
$$q_{\pm}={2\over 3}\cdot{10\sqrt{2}+16\pm\sqrt{18+14\sqrt{2}}\over 3\sqrt{2}+4}.$$
These solutions satisfy $q_-<q_+<3$. Thus the function $P(q)$ is increasing for $q\geq 4$. Consequently,
$$P(q)\geq P(4)=16(7\sqrt{2}-9)>0.$$ This completes the proof of $g(\theta^*,q)>0$, for any $q\geq 4$.

 Consequently
$g(\theta,q)=0$  has the unique solution $\breve\theta=\breve\theta(q)$ for
each $q\geq 4$, such that $\theta_c<\breve\theta<\theta^*$. Thus $g(\theta,q)<0$ and $\mu_1(\theta,2)$ is
 extreme for $\theta\in [\theta_c,\breve\theta)$.

 Summarizing the results for the  case $z_1\geq 1$ and the case $z_1<1$
 we get the desired assertions of (i).

 Note that for $q=6$ we have $\theta^*=3(1+2\sqrt{2})\approx 11.485$ and a numerical analysis
 shows that $\breve\theta\approx 8.903$ (see Fig.\ref{fn8}). So $\theta^*-\breve\theta\approx 2.582$.

 \begin{figure}
\includegraphics[width=8cm]{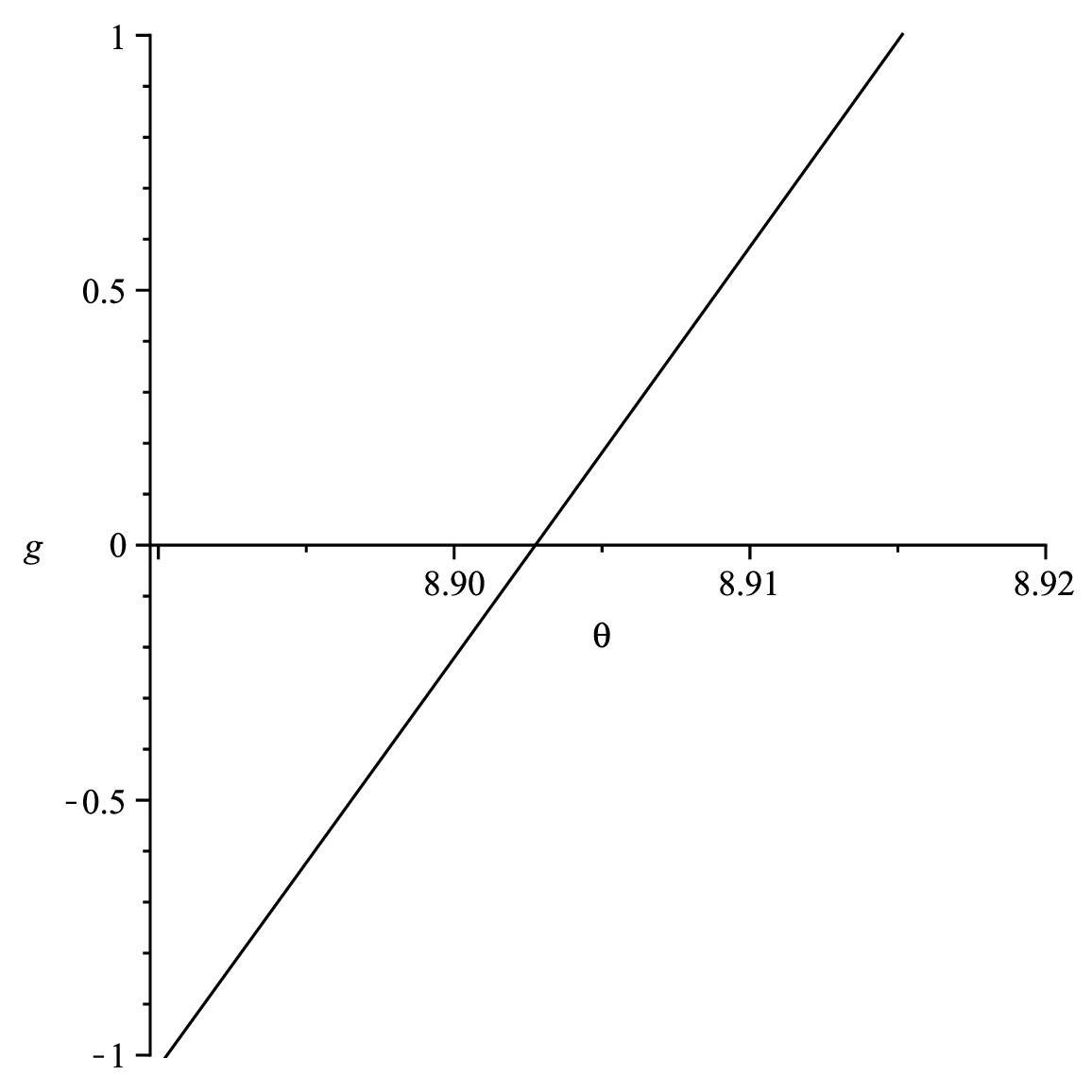}
\caption{The graph of the function $g(\theta,q)$, for $q=6$, given in the small interval $[8.88, 8.92]$ show that the solution of $g(\breve\theta,6)=0$ is $\breve\theta\approx 8.903$.}\label{fn8}
\end{figure}
(ii) Note that $z=z_2>1$, for each $\theta\in [\theta_m, +\infty)$, where $\theta_m=1+2\sqrt{m(q-m)}$. Recall that for $z_2>1$ we have $\kappa=a$ (see (\ref{k>})). Therefore, similarly as in
(\ref{tmq}) we have to check
\begin{equation}\label{tmq2}
\theta^2-(2m+4)\theta-2m+3+(\theta-3)\sqrt{(\theta-1)^2-4m(q-m)}<0.
\end{equation}
For $m=2$ and a given value of $q<9$, from (\ref{tmq2}) we obtain the assertions of (ii).

In the case $q=6$, $m=2$ it is easy to see that the LHS of (\ref{tmq2}) is zero for $\theta=\grave\theta=7$ and the inequality is true for any $\theta\in [\theta_2,7)$. In Theorem \ref{tne} we proved that $\mu_2(\theta,m)$ is non-extreme if $\theta>\widehat\theta$ where
$\widehat\theta$ is defined in (\ref{t*}) which now is $\widehat\theta=6\sqrt{2}-1\approx 7.4852$.  So $\widehat\theta-7\approx 0.4852$. This is the length of the gap between
the Kesten-Stigum bound of non-extremality and our bound of extremality.

(iii) For $\theta=\theta_m$ we have $z_1=z_2>1$ and from (\ref{tmq}) we get
\begin{equation}\label{tmq3}
\theta_m^2-(2m+4)\theta_m-2m+3<0.
\end{equation}
Using the formula $\theta_m=1+2\sqrt{m(q-m)}$ from (\ref{tmq3}) we obtain
\begin{equation}\label{sto}
q<{m+1\over 2m}\left[3m+1+\sqrt{m^2+6m+1}\right], \ \ m\geq 2.
\end{equation}
This completes the proof.
\end{proof}

{\bf 5.5. A continuity argument.}

Let us conclude the paper with softer results which are valid for $\theta$ close to the critical value $\theta_c$
at which the lower branches of the boundary law $z$ degenerate into the free value $z=1$ and
the corresponding Markov chains become close to the free chain.

{\bf Proof of Theorem \ref{th2}}

\begin{proof} a) As it was shown above $\gamma$ and $\kappa$ are bounded by continuous functions of $\theta$. For $\theta=\theta_c=q+1$  we have $z_1=1$, $a=b=c={\theta_c-1\over \theta_c-1+q}={1\over 2}$, $\gamma\leq {\theta_c-1\over \theta_c+1}={q\over q+2}$ and
$\kappa=a=b=c={1\over 2}$. The measure $\mu_0$ (corresponding to the solution $z=(1,1,\dots,1)$) is extreme at $\theta=\theta_c$ if $2\gamma\kappa\leq 1$. We have
$$ 2\gamma\kappa\leq 2a{\theta_c-1\over \theta_c+1}={q\over q+2}< 1, \ \ \mbox{for all} \ \ q\geq 2. $$
Hence $\mu_0$ is extreme at $\theta=\theta_c$. Since $\mu_1(\theta_c,m)=\mu_0$, by the above-mentioned continuity
in a sufficiently small neighborhood of $\theta_c$ the measure $\mu_1(\theta,m)$ is extreme.

b) This follows from Theorem \ref{t3} using the continuity similarly as in case a).
\end{proof}

{\bf Proof of Theorem \ref{th8}}
\begin{proof} By Proposition \ref{pw} we know that for $\theta_{[q/2]}<\theta\ne \theta_c$ there are $2^{q}-1$ TISGMs.
One of them corresponds to $z=1$. Half of the remaining $2^{q}-2$, i.e. $2^{q-1}-1$, TISGMs are generated by
$z_1$ and the other $2^{q-1}-1$ TISGMs are generated by $z_2$. By the above-mentioned results we know that
$\mu_2(\theta,1)$ is extreme. The number of such measures is $q$. By Theorem \ref{th2}, in a sufficiently small punctured neighborhood of $\theta_c$ all measures $\mu_1(\theta,m)$ are extreme. Thus the total number of extremal
TISGM is at least $2^{q-1}+q$.
\end{proof}

\section*{ Acknowledgements}

U.A. Rozikov thanks the  DFG
Sonderforschungsbereich SFB $|$ TR12-Symmetries and Universality in Mesoscopic Systems and the Ruhr-University Bochum (Germany) for financial support and hospitality.  He also thanks the Commission for Developing Countries of the IMU for a travel grant.
We thank all (three) referees for their helpful suggestions. We would like to thank one referee in particular for a very useful remark leading
to an essential simplification in the proofs of Theorem \ref{t1} and Theorem \ref{t3} with improved
bounds.

\end{document}